\tikzset{tab/.style={matrix of math nodes,column sep=-.35, row sep=-.35,text height=7pt,text width=7pt,align=center,inner sep=2,font=\footnotesize}}
\tikzset{math mode/.style = {execute at begin node=$, execute at end node=$}}
\tikzset{arrow/.style={postaction={decorate,thick,decoration={markings,mark = at position #1 with {\arrow{>}}}}},arrow/.default=0.5}
\tikzset{invarrow/.style={postaction={decorate,thick,decoration={markings,mark = at position #1 with {\arrow{<}}}}},invarrow/.default=0.5}
\tikzset{bosonic/.style={ultra thick}}
\definecolor{aj}{rgb}{0.8, 0.8, 1.0}
\tikzset{yline/.style={draw=yellow,ultra thick,line join=round}}
\tikzset{gline/.style={draw=black!50!green,line join=round}}
\tikzset{cline/.style={draw=cyan,line join=round}}
\tikzset{vline/.style={draw=black,ultra thick,line join=round}}
\tikzset{rline/.style={draw=red,ultra thick,line join=round}}
\tikzset{gpart/.style={circle,fill=black!35!green,inner sep=1.8pt}}
\tikzset{rpart/.style={circle,fill=black!35!red,inner sep=1.8pt}}
\tikzset{scol/.style={black!50!red}}
\tikzset{gcol/.style={black!35!green}}
\tikzset{math mode/.style = {execute at begin node=$, execute at end node=$}}
\tikzset{arrow/.style={postaction={decorate,thick,decoration={markings,mark = at position #1 with {\arrow{>}}}}},arrow/.default=0.5}
\tikzset{invarrow/.style={postaction={decorate,thick,decoration={markings,mark = at position #1 with {\arrow{<}}}}},invarrow/.default=0.5}
\renewcommand\ss{\scriptstyle}
\tikzset{part/.style={circle,fill=black,inner sep=1.8pt,outer sep=3pt}}
\tikzset{hole/.style={thick,circle,draw=black,fill=white,inner sep=1.8pt,outer sep=3pt}}
\tikzset{partline/.style={draw=black!35!green,ultra thick}}
\tikzset{holeline/.style={draw=black!20!red,ultra thick}}
\tikzset{bgline/.style={dotted}}
\tikzset{bgplaq/.style={fill=lightgray!30!white}}
\tikzset{arr/.style={postaction={decorate,thick,decoration={markings,mark = at position #1 with {\arrow{>}}}}}}
\tikzset{invarr/.style={postaction={decorate,thick,decoration={markings,mark = at position #1 with {\arrow{<}}}}}}
\tikzset{6varr/.style={postaction={decorate,thick,decoration={markings,mark = at position #1 with {\arrow{triangle 60}}}}}}
\tikzset{inv6varr/.style={postaction={decorate,thick,decoration={markings,mark = at position #1 with {\arrow{triangle 60 reversed}}}}}}
\newcommand{\bra}[1]{\left\langle #1\right|}
\newcommand{\ket}[1]{\left|#1\right\rangle}
\DeclareMathOperator{\id}{id} 
\newcommand{\ev}{\text{ev}}
\newcommand{\Ac}{\mathcal{A}^\circ}
\tikzset{scol/.style={cyan}}
\tikzset{fcol/.style={red}}
\definecolor{darkred}{rgb}{0.7,0,0} 
\definecolor{UQgold}{RGB}{196, 158, 54} 
\definecolor{UQpurple}{RGB}{73, 7, 94} 
\lstdefinelanguage{Sage}[]{Python}
{morekeywords={False,sage,True},sensitive=true}
\definecolor{dblackcolor}{rgb}{0.0,0.0,0.0}
\definecolor{dbluecolor}{rgb}{0.01,0.02,0.7}
\definecolor{dgreencolor}{rgb}{0.2,0.4,0.0}
\definecolor{dgraycolor}{rgb}{0.30,0.3,0.30}
\definecolor{pinkish}{rgb}{1.0, 0.44, 0.37}
\theoremstyle{plain}
\newtheorem{thm}{Theorem}[section]
\newtheorem{lemma}[thm]{Lemma}
\newtheorem{prop}[thm]{Proposition}
\newtheorem{cor}[thm]{Corollary}
\newtheorem{dfn}[thm]{Definition}
\newtheorem{remark}[thm]{Remark}
\numberwithin{equation}{section}
\title{Shuffle algebras, lattice paths and Macdonald functions}
\author{Alexandr Garbali and Ajeeth Gunna}
\address{Alexandr Garbali, Ajeeth Gunna, School of Mathematics and Statistics, University of Melbourne, Parkville, Victoria 3010, Australia.}
\email{alexandr.garbali@unimelb.edu.au, agunna@student.unimelb.edu.au}
\begin{document}
\begin{abstract}
We consider partition functions on the $N\times N$ square lattice with the local Boltzmann weights given by the $R$-matrix of the $U_{t}(\widehat{sl}_{n+1|m})$ quantum algebra. We identify boundary states such that the square lattice can be viewed on a conic surface. The partition function $Z_N$ on this lattice computes the weighted sum over all possible closed coloured lattice paths with $n+m$ different colours: $n$ ``bosonic'' colours and $m$ ``fermionic'' colours. Each bosonic (fermionic) path of colour $i$ contributes a factor of $z_i$ ($w_i$) to the weight of the configuration. We show the following:
\begin{enumerate}[label=\roman*)]
    \item $Z_N$ is a symmetric function in the spectral parameters $x_1\dots x_N$ and generates basis elements of the commutative trigonometric Feigin--Odesskii shuffle algebra. The generating function of $Z_N$ admits a shuffle-exponential formula analogous to the Macdonald Cauchy kernel.
    \item $Z_N$ is a symmetric function in two alphabets $(z_1\dots z_n)$ and $(w_1\dots w_m)$. When $x_1\dots x_N$ are set to be equal to the box content of a skew Young diagram $\mu/\nu$ with $N$ boxes the partition function $Z_N$ reproduces the skew Macdonald function $P_{\mu/\nu}\left[w-z\right]$.
\end{enumerate}
\end{abstract}
\maketitle

\section{Introduction}
Lattice partition functions with Yang--Baxter (YB) integrable Boltzmann weights have been widely used in a variety of problems in integrability. These include diagonalization and correlation functions in XXZ type Hamiltonians \cite{Baxter,Kor82,Slavnov,ReshSU3,KBI,WheelerSU3,Hutsalyuk,LV}, applications to symmetric functions \cite{CdGW,GdGW,WheelZJ,GunZJ,BWnonsymmetric,BS20,Mucciconi,GW,BBF11,BBBGIwahori}, integrable probability \cite{CorwinPetrov,Borodin_sym,KMMO,BWcoloured,ABW} and other areas of mathematical physics. More recently it was shown that the Feigin--Odesskii shuffle algebra $\Ac$ can be realized using lattice partition functions associated to the vertex model of $U_{t}(\widehat{sl}_n)$ \cite{Ng_tale,GZJ}. In the present work we find further connections between this shuffle algebra and lattice partition functions. 

Following \cite{GZJ} we consider domain-wall boundary partition functions of vertex models with a special relation between the spectral parameters. These partition functions are symmetric rational functions in the spectral parameters which satisfy the wheel conditions of the shuffle algebra $\Ac$, an algebra of symmetric functions with a special product called the shuffle product. Therefore the domain-wall partition functions represent certain elements of $\Ac$. We use them as building blocks to design lattice partition functions which realize shuffle products of elements of $\Ac$. As an application of this result we compute vertex model partition functions on the cone with the Boltzmann weights determined by the $R$-matrix of the $U_{t}(\widehat{sl}_{n+1|m})$ algebra. The generating function of these partition functions is given by the mixed Cauchy kernel \cite{FHSSY}, an object similar to the Cauchy kernel in the Macdonald theory. We then specialize the spectral parameters to particular values and obtain a lattice path formula for the skew Macdonald functions. 

In our construction the appearance of the Macdonald symmetric functions as lattice partition functions is quite different in spirit from the existing lattice constructions of symmetric functions which we mentioned above. It is analogous to the computation of the off-shell Bethe vectors (the mixed Cauchy kernel) in integrable models by means of the algebraic Bethe Ansatz and computation of the eigenvectors (Macdonald functions) by demanding that the spectral parameters satisfy Bethe equations. For example, in \cite{FJMM_BA} the mixed Cauchy kernel was used as the off-shell Bethe vector in the diagonalization problem of the integrable model associated to the quantum toroidal $gl_1$ algebra.

\subsection{Conic partition function}
Consider a cone pointing upwards and the square lattice drawn on it. More specifically, we draw $N$ directed parallel lines which wrap around the tip of the cone and then self-intersect:
\begin{align}\label{eq:cone}
\begin{tikzpicture}[scale=0.7,transform shape]
\newcommand{\hht}{4}  
\newcommand{\wt}{3}  
\newcommand{\hwt}{0.5\wt}  
\pgfmathsetmacro{\hratio}{\hht / \wt}  
\pgfmathsetmacro{\ihratio}{\hwt / \hht}  
  \draw[dashed] (2*\wt,0) arc [start angle=0, end angle=180, x radius=\wt cm, y radius=\hwt cm];
    \draw (0,0) arc [start angle=-180, end angle=0, x radius=\wt cm, y radius=\hwt cm];
     \draw (0,0) -- (\wt,\hht) -- (2*\wt,0);
\foreach \angle in {30,50,65,80}{
  \pgfmathsetmacro{\xa}{\wt+\wt*cos(-\angle)}  
  \pgfmathsetmacro{\ya}{\hwt*sin(-\angle)}  
  \pgfmathsetmacro{\xb}{\wt+\wt*cos(\angle-180)}  
  \pgfmathsetmacro{\yb}{\hwt*sin(\angle-180)}  
  \pgfmathsetmacro{\xxa}{0.5*\xa+\ihratio*\ya-0.09}
  \pgfmathsetmacro{\yya}{0.5*\hratio*\xa+0.5*\ya}
  \pgfmathsetmacro{\xxb}{0.5*\xb+\wt-\ihratio*\yb+0.09}
  \pgfmathsetmacro{\yyb}{0.5*\hratio*\xa + 0.5*\ya}
\draw[invarrow=0.1] (\xa,\ya) -- (\xxa,\yya) ; 
  \draw[arrow=0.1] (\xb,\yb) -- (\xxb,\yyb) ; 
  \pgfmathsetmacro{\xxr}{0.5*\xxa-0.5*\xxb}
  \pgfmathsetmacro{\yyr}{-0.2*\xxr}
  \draw[dashed] (\xxb,\yyb) arc [start angle=180, end angle=0, x radius=\xxr cm, y radius=\yyr cm];
  }
\end{tikzpicture}
\end{align}

On this square lattice, we draw closed coloured paths where the colours are labelled by $1\dots n+m$ and the label $0$ is associated to the absence of a path. A path is drawn as follows: pick a colour $i$ and an intersection point of any two lines, and start drawing a path of colour $i$ from this intersection point in any of the two directions indicated by the arrows. By repeating the same process, we draw a continuous path. For such a path to be closed, it has to end where it started and for that it has to wrap around the cone. After a closed path is drawn one can continue drawing other coloured paths, ensuring that no two paths share the same edge, although they can occupy the same vertex.
\begin{align}\label{eq:cone_paths}
\begin{tikzpicture}[scale=0.7,transform shape]
\newcommand{\hht}{4}  
\newcommand{\wt}{3}  
\newcommand{\hwt}{0.5\wt}  
\pgfmathsetmacro{\hratio}{\hht / \wt}  
\pgfmathsetmacro{\ihratio}{\hwt / \hht}  
  \draw[dashed] (2*\wt,0) arc [start angle=0, end angle=180, x radius=\wt cm, y radius=\hwt cm];
    \draw (0,0) arc [start angle=-180, end angle=0, x radius=\wt cm, y radius=\hwt cm];
     \draw (0,0) -- (\wt,\hht) -- (2*\wt,0);
\foreach \angle/\lab in {30/4,50/3,65/2,80/1}{
  \pgfmathsetmacro{\xa}{\wt+\wt*cos(-\angle)}  
  \pgfmathsetmacro{\ya}{\hwt*sin(-\angle)}  
  \pgfmathsetmacro{\xb}{\wt+\wt*cos(\angle-180)}  
  \pgfmathsetmacro{\yb}{\hwt*sin(\angle-180)}  
  \node[below] at (\xa,\ya) {$x_\lab$}; 
  \node[below] at (\xb,\yb) {$q x_\lab$}; 
  \pgfmathsetmacro{\xxa}{0.5*\xa+\ihratio*\ya-0.09}
  \pgfmathsetmacro{\yya}{0.5*\hratio*\xa+0.5*\ya}
  \pgfmathsetmacro{\xxb}{0.5*\xb+\wt-\ihratio*\yb+0.09}
  \pgfmathsetmacro{\yyb}{0.5*\hratio*\xa + 0.5*\ya}
\draw[invarrow=0.1] (\xa,\ya) -- (\xxa,\yya) ; 
  \draw[arrow=0.1] (\xb,\yb) -- (\xxb,\yyb) ; 
  \pgfmathsetmacro{\xxr}{0.5*\xxa-0.5*\xxb}
  \pgfmathsetmacro{\yyr}{-0.2*\xxr}
  \draw[dashed] (\xxb,\yyb) arc [start angle=180, end angle=0, x radius=\xxr cm, y radius=\yyr cm];
  }
\pgfmathsetmacro{\xa}{\wt+\wt*cos(-80)}
\pgfmathsetmacro{\ya}{\hwt*sin(-80)}
\pgfmathsetmacro{\xb}{\wt+\wt*cos(80-180)}  
  \pgfmathsetmacro{\yb}{\hwt*sin(80-180)}  
  \pgfmathsetmacro{\xxa}{0.5*\xa+\ihratio*\ya-0.09}
  \pgfmathsetmacro{\yya}{0.5*\hratio*\xa+0.5*\ya}
\pgfmathsetmacro{\xxb}{0.5*\xb+\wt-\ihratio*\yb+0.09}
  \pgfmathsetmacro{\yyb}{0.5*\hratio*\xa + 0.5*\ya}
  \pgfmathsetmacro{\xxr}{0.5*\xxa-0.5*\xxb}  
  \pgfmathsetmacro{\yyr}{-0.2*\xxr}
\draw[dashed, red, ultra thick] (\xxb,\yyb) arc [start angle=180, end angle=0, x radius=\xxr cm, y radius=\yyr cm];  
\pgfmathsetmacro{\xa}{\wt+\wt*cos(-50)}
\pgfmathsetmacro{\ya}{\hwt*sin(-50)}
\pgfmathsetmacro{\xxa}{0.5*\xa+\ihratio*\ya-0.09}
  \pgfmathsetmacro{\yya}{0.5*\hratio*\xa+0.5*\ya}
\draw[gcol, ultra thick] (\xa-0.9,\ya+3) -- (\xa-1.56,\ya+2.13) --(\xxa+0.3,\yya-0.4)--(\xxa-0.05,\yya-0.85) --(\xxa-0.36,\yya-0.45);
\pgfmathsetmacro{\xa}{\wt+\wt*cos(-65)}
\pgfmathsetmacro{\ya}{\hwt*sin(-65)}
\pgfmathsetmacro{\xb}{\wt+\wt*cos(65-180)}  
  \pgfmathsetmacro{\yb}{\hwt*sin(65-180)}  
  \pgfmathsetmacro{\xxa}{0.5*\xa+\ihratio*\ya-0.09}
  \pgfmathsetmacro{\yya}{0.5*\hratio*\xa+0.5*\ya}
\pgfmathsetmacro{\xxb}{0.5*\xb+\wt-\ihratio*\yb+0.09}
  \pgfmathsetmacro{\yyb}{0.5*\hratio*\xa + 0.5*\ya}
  \pgfmathsetmacro{\xxr}{0.5*\xxa-0.5*\xxb}
  \pgfmathsetmacro{\yyr}{-0.2*\xxr}
\draw[dashed, gcol, ultra thick] (\xxb,\yyb) arc [start angle=180, end angle=0, x radius=\xxr cm, y radius=\yyr cm];
\draw[red, ultra thick] (1.6,2.15) -- (2.3,1.2) -- (3.4,2.65)-- (4.13,1.7) -- (4.4,2.1) ;
\end{tikzpicture}
\end{align}
Such configurations $C$ carry a weight $W_C$ which is computed by multiplying the values of all local Boltzmann weights\footnote{These Boltzmann weights correspond to the super-symmetric $R$-matrix of the algebra $U_{t}(\widehat{sl}_{n+1|m})$ which was computed in \cite{BazS}. Our  convention for the weights correspond to \cite{WheelZJ,ABW}.}:
\begin{equation}\label{tikz:coloredvertices_intro}
\begin{tabular}{c@{\hskip 0.32cm}c@{\hskip 0.32cm}c@{\hskip 0.32cm}c@{\hskip 0.32cm}c@{\hskip 0.32cm}c@{\hskip 0.32cm}c}
\begin{tikzpicture}[scale=0.6,baseline=-2pt]
\draw[invarrow=0.75] (-1,0) --(1,0);
\draw[invarrow=0.75] (0,-1) --(0,1);
\node[right] at (1,0) {$\scriptstyle x$};
\node[above] at (0,1) {$\scriptstyle y$};
\end{tikzpicture}:
&
\begin{tikzpicture}[scale=0.6,baseline=-2pt]
\draw[red, ultra thick] (-1,0) node[left,black]{$\ss i$}  -- (0,0)--(0,1) node[above,black]{$\ss i$};
\draw (0,-1) node[below,black]{$\ss 0$} --(0,0)-- (1,0)node[right,black]{$\ss 0$};
\end{tikzpicture}
&
\begin{tikzpicture}[scale=0.6,baseline=-2pt]
\draw (-1,0) node[left,black]{$\ss 0$}  -- (0,0)--(0,1) node[above,black]{$\ss 0$};
\draw[red, ultra thick] (0,-1) node[below,black] {$\ss i$} --(0,0)-- (1,0) node[right,black]{$\ss i$};
\end{tikzpicture}
&
\begin{tikzpicture}[scale=0.6,baseline=-2pt]
\draw[red, ultra thick] (-1,0)  node[left,black]{$\ss i$}-- (1,0) node[right,black]{$\ss i$};
\draw (0,-1)  node[below,black]{$\ss 0$}--(0,1) node[above,black]{$\ss 0$};
\end{tikzpicture}
&
\begin{tikzpicture}[scale=0.6,baseline=-2pt]
\draw (-1,0) node[left,black]{$\ss 0$}  -- (1,0) node[right,black]{$\ss 0$};
\draw[red, ultra thick] (0,-1)node[below,black]{$\ss i$}  --(0,1) node[above,black]{$\ss i$};
\end{tikzpicture}
&
\begin{tikzpicture}[scale=0.6,baseline=-2pt]
\draw (-1,0) node[left,black]{$\ss 0$}  -- (1,0) node[right,black]{$\ss 0$};
\draw (0,-1)node[below,black]{$\ss 0$}  --(0,1)node[above,black]{$\ss 0$};
\end{tikzpicture}
\\[3em]
&
\begin{tikzpicture}[scale=0.6,baseline=-2pt]
\draw[red, ultra thick] (-1,0) node[left,black]{$\ss i$}  -- (0,0)--(0,1) node[above,black]{$\ss i$};
\draw[gcol, ultra thick] (0,-1) node[below,black]{$\ss j$} --(0,0)-- (1,0)node[right,black]{$\ss j$};
\end{tikzpicture}
&
\begin{tikzpicture}[scale=0.6,baseline=-2pt]
\draw[gcol, ultra thick] (-1,0) node[left,black]{$\ss j$}  -- (0,0)--(0,1) node[above,black]{$\ss j$};
\draw[red, ultra thick] (0,-1) node[below,black] {$\ss i$} --(0,0)-- (1,0) node[right,black]{$\ss i$};
\end{tikzpicture}
&
\begin{tikzpicture}[scale=0.6,baseline=-2pt]
\draw[red, ultra thick] (-1,0)  node[left,black]{$\ss i$}-- (1,0) node[right,black]{$\ss i$};
\draw[gcol, ultra thick] (0,-1)  node[below,black]{$\ss j$}--(0,1) node[above,black]{$\ss j$};
\end{tikzpicture}
&
\begin{tikzpicture}[scale=0.6,baseline=-2pt]
\draw[gcol, ultra thick] (-1,0) node[left,black]{$\ss j$}  -- (1,0) node[right,black]{$\ss j$};
\draw[red, ultra thick] (0,-1)node[below,black]{$\ss i$}  --(0,1) node[above,black]{$\ss i$};
\end{tikzpicture}
&
\begin{tikzpicture}[scale=0.6,baseline=-2pt]
\draw[ red,ultra thick] (-1,0) node[left,black]{$\ss i$}  -- (1,0) node[right,black]{$\ss i$};
\draw[red, ultra thick] (0,-1)node[below,black]{$\ss i$}  --(0,1)node[above,black]{$\ss i$};
\end{tikzpicture}
\\[3em]
& $\dfrac{1-t}{1-tx/y}$ & $\dfrac{(1-t)x/y}{1-t x/y}$& $\dfrac{t(1-x/y)}{1-t x/y}$&$\dfrac{1-x/y}{1-t x/y}$&$
 \begin{cases}
  \qquad 1  & i \leq n \\
  \dfrac{x/y-t}{1-tx/y}  & i> n
\end{cases}$
\end{tabular}
\end{equation}
where red ``$i$'' and green ``$j$'' can be replaced by any pair of colours such that the condition $i<j$ is preserved. In our conventions ``0'' is considered as the greatest colour. The vertices in the second row and last column are interpreted as two paths of colour $i\in\{0\dots n+m\}$ touching each other but not intersecting. The weight of this vertex depends on the colour label $i$ and for this reason we distinguish ``bosonic'' ($i\leq n$) and ``fermionic'' paths ($i> n$). The collection of all global configurations on the $N$ by $N$ lattice \eqref{eq:cone} with such bosonic and fermionic paths is denoted by $\Omega_N^{(n,m)}$. In addition to the local weights \eqref{tikz:coloredvertices_intro} each global path configuration is multiplied by a factor which accounts for the loop content of the configuration. The partition function is defined by:
\begin{align}\label{eq:Z}
  Z_N = \sum_{C\in \Omega_N^{(n,m)}}  
  z_0^{N-\Lambda(C)}
  \prod_{i=1}^n   z_i^{\Lambda_i(C)}
    \prod_{i=1}^m (-w_i)^{\Lambda_{n+i}(C)}
    \times W_C
\end{align}
where $\Lambda(C)$ denotes the total number of loops in $C$ and $\Lambda_j(C)$ denotes the total number of loops of colour $j$ in $C$.
Therefore the new variables $z_i$ count bosonic loops of colours $i$ and the variables $w_i$ count fermionic loops of colours $n+i$. The factor $z_0$ can be viewed as counting the ``empty'' loops cycling around the cone. The loop content can be conveniently represented by two non-negative integer compositions $\kappa=(\kappa_1\dots \kappa_n)$ and $\lambda=(\lambda_1\dots\lambda_m)$ in which $\kappa_i$ counts the number of loops of colour $i=1\dots n$ and $\lambda_i$ counts the number of loops of colour $i=n+1\dots n+m$. Therefore we can write:
\begin{align}\label{eq:Zkl}
  Z_N = 
\sum_{\kappa,\lambda:~|\kappa|+|\lambda|\leq N} 
  z_0^{N-|\kappa|-|\lambda|}
  z^\kappa w^\lambda 
    Z_{N,\kappa,\lambda}
\end{align}
where $z^\kappa=z_1^{\kappa_1}\cdots z_n^{\kappa_n}$, $|\kappa|$ denotes the total number of loops in $\kappa$ and similarly for $\lambda$. Therefore $Z_{N,\kappa,\lambda}$ is the conic partition function with the loop content given by $\kappa$ and $\lambda$. Let us consider an example. Set $N=4, n=1,m=1$ and suppose ``red colour $=1$'' and ``green colour $=2$'', then \eqref{eq:cone_paths} represents a valid configuration and we can compute its contribution to the partition function $Z_4$:
$$
z_0^2 z_1 w_1 \times 
\frac{\left(1-\frac{1}{q}\right) (1-t)^6 t^2 \left(1-\frac{x_2}{q x_3}\right) \left(1-\frac{x_1}{q x_4}\right) \left(1-\frac{x_4}{q x_2}\right)}{q^4 \left(1-\frac{t x_3}{q x_2}\right) \left(1-\frac{t x_4}{q
   x_1}\right) \left(1-\frac{t x_4}{q x_2}\right) \left(1-\frac{t x_4}{q x_3}\right)\prod_{i\in{3,4}}\prod_{j\in{1,2,3}}\left(1-\frac{t x_j}{q x_i}\right)}
$$
We find that the partition function $Z_N$ can be computed for an arbitrary choice of $n,m$ in terms of the {\it shuffle product} $*$. For two symmetric functions $F(x_1\dots x_k)$ and $G(x_1\dots x_l)$ we have:
\begin{align}\label{eq:sh_sub_intro}
F(x_1\dots x_k) * G(x_1\dots x_l) =
\sum_{\substack{S\subseteq [1\dots k+l]\\ |S|=k}}
F(x_S)G(x_{S^c}) \prod_{\substack{i\in S\\j\in S^c}} \zeta\left(\frac{x_i}{x_j}\right)
\end{align}
where $S^c$ denotes the complement of the subset $S$ and  $|S|=k$ means that the sum runs over subsets of length $k$; the function $\zeta$ is given by:
\begin{align}
    \label{eq:zeta-intro}
    \zeta(x) := \frac{(1-q x)(1-t^{-1}x)}{(1-x)(1- qt^{-1} x)}
\end{align}
We define the {\it shuffle-exponential}: $\exp_*(A):= 1 + A + \frac{1}{2!}  A*A + \frac{1}{3!}  A*A*A +\cdots $.
\begin{thm}\label{thm:Z_intro}
    The generating function:
    \begin{align}\label{eq:Zv}
        Z(v) = \sum_{N=0}^{\infty}v^N Z_N
    \end{align}
is given by:
    \begin{align}
    \label{eq:Zexp}
        Z(v) = \exp_* \left(\sum_{k>0}\frac{v^k}{k}\left(
        \sum_{i=1}^{m} w_i^k
        -
         \sum_{i=1}^n z_i^k 
         -
         \frac{q^k-t^k}{1-t^k}z_0^k
        \right) L_k\right)
    \end{align}
where $L_k=L_k(x_1\dots x_k)$ is a two-colour (one bosonic and one fermionic) conic partition function which can be defined by a formula similar to \eqref{eq:Z} but with a different loop counting weight:
\begin{align}
    \label{eq:L}
    L_k := \sum_{C\in \Omega_k^{(1,1)}:
    ~\Lambda(C)=k}(-1)^{\Lambda_2(C)} \Lambda_2(C) \times W_C
\end{align}
\end{thm}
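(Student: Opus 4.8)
The plan is to read the theorem as the classical identity ``$\sum_N v^N h_N=\exp\bigl(\sum_k \tfrac{v^k}{k}p_k\bigr)$'' transported into the commutative shuffle algebra $\Ac$, with the building blocks $L_k$ playing the role of power sums. Since the shuffle product $*$ is commutative, differentiation is compatible with it, and $\exp_*\!\bigl(\sum_{k>0}\tfrac{v^k}{k}c_kL_k\bigr)$ is the unique solution of the first–order relation
\begin{equation*}
v\,\partial_v Z(v)=\Phi(v)*Z(v),\qquad Z(0)=1,\qquad \Phi(v):=\sum_{k>0}v^k c_k\,L_k,
\end{equation*}
where $c_k:=\sum_{i=1}^m w_i^k-\sum_{i=1}^n z_i^k-\tfrac{q^k-t^k}{1-t^k}z_0^k$. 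Extracting the coefficient of $v^N$, the claim \eqref{eq:Zexp} is therefore equivalent to the Newton-type recursion
\begin{equation*}
N\,Z_N=\sum_{k=1}^{N} c_k\,\bigl(L_k * Z_{N-k}\bigr),\qquad N\geq 1,
\end{equation*}
together with $Z_0=1$, and I would prove this recursion directly from the lattice model.

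First I would set up the shuffle-product bookkeeping. Using that the conic partition functions represent elements of $\Ac$ and that superposing configurations on disjoint blocks of lines multiplies their weights by exactly the cross factors $\prod_{i\in S,\,j\in S^c}\zeta(x_i/x_j)$ (the realization of $*$ via the $R$-matrix recalled in the introduction), I would show that a configuration whose loops through a block $S$ of $k$ lines are decoupled from those through $S^c$ contributes to $Z_N$ precisely the matching term of a shuffle product supported on $(S,S^c)$. This converts ``peel off a sub-configuration living on $k$ lines'' into ``split off a shuffle factor''.

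Next I would interpret $N Z_N$ by marking one of the spectral parameters $x_1,\dots,x_N$. The marked line lies on a unique loop — bosonic of some colour $i\le n$, fermionic of colour $i>n$, or an empty loop — occupying a block $S$ of some size $k$. Peeling this loop off by the previous step factors each marked configuration as (marked loop on $S$)$*$(remainder on $S^c$), with the remainder ranging over all configurations counted by $Z_{N-k}$. Summing the marked loop at fixed $k$ must then reproduce $c_k L_k$: the spectral-parameter part is a colour-independent universal factor, which I would identify with $L_k$, while the loop-fugacities assemble into the power sum $c_k$. The $m$ fermionic and $n$ bosonic colours contribute $+\sum_i w_i^k$ and $-\sum_i z_i^k$ (the sign flip being the super-statistics visible in the last column of \eqref{tikz:coloredvertices_intro}), and the empty sector contributes $-\tfrac{q^k-t^k}{1-t^k}z_0^k$ after resumming over empty loops; the two-colour, $\Lambda_2$-marked, sign-weighted definition \eqref{eq:L} of $L_k$ is exactly what isolates this universal factor in the minimal $n=m=1$ model.

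The main obstacle is this final identification, and two points require care. First, \emph{colour universality}: one must show that the spectral-parameter dependence of a single peeled loop of size $k$ is the same for every colour and equals $L_k$; this rests on the fact that, away from the distinguished diagonal and empty weights, all colours share the same four off-diagonal vertices in \eqref{tikz:coloredvertices_intro}. Second, the \emph{empty-loop coefficient} $\tfrac{q^k-t^k}{1-t^k}z_0^k$: since empty loops and same-colour bosonic touchings carry the nontrivial weights $\tfrac{t(1-x/y)}{1-tx/y}$ and $\tfrac{1-x/y}{1-tx/y}$, peeling the empty sector requires resumming a geometric family of nested loops, and it is here that the Macdonald-type factor $\tfrac{q^k-t^k}{1-t^k}$, and hence the link to the mixed Cauchy kernel of \cite{FHSSY}, emerges. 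Once the marked-loop sum is shown to equal $c_k\,(L_k*Z_{N-k})$, the recursion — and therefore \eqref{eq:Zexp} — follows.
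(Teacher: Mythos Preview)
Your reduction to the Newton recursion $N\,Z_N=\sum_{k=1}^N c_k\,(L_k*Z_{N-k})$ is correct and a reasonable reformulation of the target, but the combinatorial proof you sketch for this recursion has a genuine gap. The central move --- mark one of the $N$ lines, locate ``its loop of size $k$'', and peel it off as a shuffle factor on a block $S$ of $k$ lines --- does not match the combinatorics of the model. In the conic partition function each of the $N$ identified boundary edges carries one colour label $\alpha_j$, so a loop is attached to a single line, not to a block of $k$ lines; a marked line does not canonically select any subset $S\subseteq[N]$ of size $k>1$. More fundamentally, the factorization (configurations on $S$)${}\times{}$(configurations on $S^c$)${}\times\prod\zeta$ simply does not hold at the level of the raw matrix elements $W^{\alpha}_{\alpha}(x)$: a path coloured $i$ entering along a line in $S$ can meet a path coloured $j$ on a line in $S^c$ at an interior vertex, and the resulting Boltzmann weight is not accounted for by a single $\zeta(x_i/x_j)$ cross-factor. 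Your step ``superposing configurations on disjoint blocks of lines multiplies their weights by exactly the cross factors $\prod\zeta$'' is asserted, not established, and in the untransformed model it is false.

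The paper obtains exactly this factorization, but only after a change of basis: one conjugates the trace by the $F$-matrix, replacing $W_N(x)$ by $\widetilde W_N(x)=F_N W_N F_N^{-1}$, and in this twisted basis the diagonal matrix elements do factor as (domain-wall partition functions on colour blocks)${}\times\prod\zeta$ (Propositions~\ref{prop:Wy} and~\ref{prop:Wy-super}). The trace then becomes a symmetrization over $\mathcal S_N$, which is precisely the shuffle product (Theorems~\ref{thm:T-shuffle} and~\ref{thm:T-shuffle-super}). The exponential form follows from the known generating-function identities expressing $E_k(q_a)$ and $H_k(q_a)$ as $\exp_*$ of the power-sum-like elements $S_k$ (Lemma~\ref{lem:EH_gen}), and finally $L_k=\tfrac{1-t^k}{1-q^k}S_k$ is checked by differentiating the two-colour trace at $w=1$ (Proposition after \eqref{eq:L-trace}). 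In this route your two acknowledged obstacles --- colour-universality of the peeled factor and the empty-sector coefficient $\tfrac{q^k-t^k}{1-t^k}$ --- are not separate combinatorial lemmas but immediate consequences of these algebraic identities. Without the $F$-matrix step, or some substitute producing the same block factorization of the diagonal matrix elements, I do not see how to carry out the peeling argument.
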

This result implies that $Z_N$ is symmetric in $(z_1\ldots z_n)$ and separately in $(w_1\ldots w_m)$ since the dependence on $z$'s and $w$'s in the exponent in \eqref{eq:Zexp} is given by the power sum symmetric functions $p_k(z)$ and $p_k(w)$. The dependence of $Z_N$ on the spectral parameters $(x_1\ldots x_N)$ enters through the shuffle product of the functions $L_k(x_1\ldots x_k)$ in the view of the definition of exp$_*$.
Thus \eqref{eq:Zexp} allows us to express the partition function $Z_N$ (which is associated to the vertex model of  $U_t(\widehat{sl}_{n+1|m}$) in terms of the partition functions $L_k$ (which is associated to the vertex model of $U_t(\widehat{sl}_{1+1|1}$). The function $L_k$ in turn can be decomposed further:
\begin{align}
    L_{k} = \sum_{j=0}^k (-1)^j j Z^b_{k-j}
    *Z^f_{j}
\end{align}
where $Z^b_k$ and $Z^f_k$ are the six vertex bosonic and fermionic domain-wall partition functions.
The functions $L_k$ admit an explicit symmetrization formula: 
\begin{align}
    \label{eq:Psum_intro}
        L_k = \frac{-(1-q)^k(1-t)^k}{(1-q^k)(q-t)^{k}}
    {\normalfont\text{Sym}}\left(
    \frac{\sum_{j=0}^{k-1} (q/t)^j x_{j+1}/x_{1}}{\prod_{j=1}^{k-1} \left(1-(q/t) x_{j+1}/x_{j}\right)}
    \prod_{1\leq i<j\leq k}
    \zeta(x_i/x_j)
    \right)
\end{align}
where Sym is defined in \eqref{eq:Sym}. This symmetrization formula appeared in \cite{Ng_rev} as an explicit expression for a family of elements of $\Ac$ which are in some sense analogous to the power sums symmetric functions. Because of this connection $Z(v)$ in \eqref{eq:Zexp} can be viewed as a generalized mixed Cauchy kernel (see \cite{FHSSY}).

\subsection{Skew Macdonald functions}
The shuffle algebra $\mathcal{A}$ is well studied and has several representations \cite{FHHSY,SV-Hall,FT-shuffle,Ng_rev}. Consider a skew Young diagram of $\mu/\nu$ (in the English convention) with $k$ boxes $\square\in \mu/\nu$ which are labelled by $1\dots k$ in the reading order. We identify $\square=(a,b)$ where $a$ is the column index and $b$ is the row index of the box $\square\in \mu/\nu$. Let $\chi_\square = q^{a-1} t^{1-b}$ be the content of the box $\square\in \mu/\nu$ and denote by $\chi_1\dots \chi_k$ the contents of all $k$ boxes. The algebra $\mathcal{A}$ has a matrix representation $f\mapsto M$ in which an element $f(x_1\dots x_k)$ is mapped to a matrix $M$ whose non-zero matrix elements $M_{\mu,\nu}$ are those for which $\mu/\nu$ represents a skew Young diagram with $k$ boxes and:
$$
M_{\mu,\nu} = f(\chi_1\dots \chi_k)
$$
This representation of the shuffle algebra is intimately related to the theory of Macdonald functions $P_\lambda$. Using this representation we obtain the following connection between $Z_N$ and Macdonald functions.
\begin{thm}
Let $\mu/\nu$ be a skew Young diagram with $N$ boxes and $Z_N(x_1\dots x_N;z,w)$ be the partition function defined by \eqref{eq:Z} then:
\begin{align}
    \label{eq:Z_Mac}
    Z_N(\chi_1\dots \chi_N;z,w) =a_{\mu,\nu}^{-1} P_{\mu/\nu}\left[w-z-\frac{q-t}{1-t}z_0\right]
\end{align}
where we used the plethystic notation \eqref{eq:pleth} and $a_{\mu,\nu}$ is a combinatorial coefficient given in \eqref{eq:a}. 
When $z_i=0$ the above formula implies:
\begin{align}
    \label{eq:Mac_Z}
    P_{\mu/\nu}(w_1\dots w_m) =a_{\mu,\nu} Z_N(\chi_1\dots \chi_N;0,w)
\end{align}
\end{thm}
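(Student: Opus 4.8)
The plan is to push the generating-function identity of Theorem~\ref{thm:Z_intro} through the matrix representation $f\mapsto M$ of $\mathcal{A}$ described above, under which evaluating a shuffle element at the box contents $\chi_1\dots\chi_k$ is precisely the operation of reading off a single matrix entry $M_{\mu,\nu}$. Since $Z_N$ is an element of $\mathcal{A}$ (it satisfies the wheel conditions), the representation applies. The representation is an algebra homomorphism — the shuffle product $*$ is sent to ordinary matrix multiplication — so the shuffle-exponential $\exp_*$ in \eqref{eq:Zexp} is sent to the ordinary matrix exponential. Writing $\rho$ for the representation and extracting the coefficient of $v^N$, the quantity $Z_N(\chi_1\dots\chi_N;z,w)$ becomes exactly the $(\mu,\nu)$ entry of $\exp\!\big(\sum_{k>0}\tfrac{v^k}{k}\,c_k\,\rho(L_k)\big)$, where $c_k$ is the scalar appearing in \eqref{eq:Zexp} and $|\mu/\nu|=N$.

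First I would rewrite the exponent plethystically so that the target $P_{\mu/\nu}[\,\cdot\,]$ appears naturally. Using $p_k[X-Y]=p_k[X]-p_k[Y]$ and $p_k\!\left[\frac{q-t}{1-t}\right]=\frac{q^k-t^k}{1-t^k}$, the coefficient in \eqref{eq:Zexp} is recognized as a single power sum:
\begin{align*}
p_k\!\left[w-z-\frac{q-t}{1-t}z_0\right]
= \sum_{i=1}^m w_i^k - \sum_{i=1}^n z_i^k - \frac{q^k-t^k}{1-t^k}\,z_0^k .
\end{align*}
Hence $Z(v)=\exp_*\!\big(\sum_{k>0}\tfrac{v^k}{k}\,p_k[X]\,L_k\big)$ with the single alphabet $X=w-z-\frac{q-t}{1-t}z_0$, which is exactly the plethystic argument on the right-hand side of \eqref{eq:Z_Mac} (notation \eqref{eq:pleth}).

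The central step is to identify the matrix $\rho(L_k)$. The symmetrization formula \eqref{eq:Psum_intro} exhibits $L_k$ as the family of elements of $\mathcal{A}$ that play the role of power sums, studied in \cite{Ng_rev}; under the matrix representation these map to the power-sum operators acting on the Macdonald $P$-basis, with the content normalization collected in $a_{\mu,\nu}$ of \eqref{eq:a}. Consequently the matrix exponential $\exp\!\big(\sum_{k>0}\tfrac{v^k}{k}\,p_k[X]\,\rho(L_k)\big)$ realizes, in the $P$-basis, the generalized Cauchy operator of Macdonald theory whose entries encode the plethystic branching $P_\mu[X+Y]=\sum_\nu P_{\mu/\nu}[X]\,P_\nu[Y]$; its $(\mu,\nu)$ entry is therefore $a_{\mu,\nu}^{-1}P_{\mu/\nu}[X]$. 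Matching the coefficient of $v^N$ then yields \eqref{eq:Z_Mac}. The specialization \eqref{eq:Mac_Z} follows by setting all bosonic parameters $z_0=z_1=\dots=z_n=0$, which collapses $X$ to $w$ and, through the factor $z_0^{N-\Lambda(C)}$ in \eqref{eq:Z}, restricts the sum to configurations of full loop content.

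I expect the main obstacle to be the precise identification of $\rho(L_k)$ with the Macdonald power-sum operator together with the exact bookkeeping of the normalization $a_{\mu,\nu}$. Concretely, one must verify two compatible facts: that the matrix entries $L_k(\chi_1\dots\chi_k)$ produced by \eqref{eq:Psum_intro} coincide with the $(q,t)$-deformed branching coefficients — so that the $\frac{q^k-t^k}{1-t^k}$ content structure, rather than a naive undeformed kernel, is reproduced — and that exponentiating these operators yields precisely the coproduct expansion above. The factor $a_{\mu,\nu}$ must interpolate exactly between the normalization carried by the matrix representation and that of the standard $P_{\mu/\nu}$; tracking it through the matrix exponential, where off-diagonal entries telescope through intermediate shapes, is where the calculation is most delicate.
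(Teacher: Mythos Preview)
Your overall architecture is the same as the paper's: recognize the exponent in \eqref{eq:Zexp} as $p_k[X]$ with $X=w-z-\frac{q-t}{1-t}z_0$, pass to the evaluation representation, and read off the $(\mu,\nu)$ matrix entry as a skew Macdonald function. The plethystic rewriting and the final specialization to $z_i=0$ are exactly right.

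The gap is the step you yourself flag as ``the main obstacle'': you assert that $\rho(L_k)$ is a ``power-sum operator'' and that the resulting matrix exponential has entries $a_{\mu,\nu}^{-1}P_{\mu/\nu}[X]$, but you do not prove it. This is not a bookkeeping detail; it is the whole content of the theorem. Working with $\rho(L_k)$ directly and then trying to exponentiate is awkward, because the matrix elements of $L_k$ (or $S_k$) on a generic skew shape have no clean closed form, and the telescoping you anticipate through intermediate shapes is not easy to control.

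The paper bypasses this by never looking at $\rho(L_k)$. Instead it uses the distinguished basis $F_\lambda\in\mathcal{A}^\circ_k$ characterised by $\mathrm{ev}_\mu(F_\lambda)=\delta_{\lambda,\mu}/d_\lambda$ and the isomorphism $\iota:F_\lambda\mapsto \tfrac{c_\lambda}{q^{n(\lambda')}(1-t)^{|\lambda|}}P_\lambda$. The exponential $Z(v)$ (equivalently the mixed Cauchy kernel $K(x;y)=\iota_z^{-1}\Pi(z,y)$) is expanded not in the $S_r$ but in the $F_\lambda$, with coefficients proportional to $P_\lambda(y)$; this is just the Macdonald Cauchy identity pulled back through $\iota$. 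Then one computes $\mathrm{ev}_{\mu/\nu}(F_\lambda)$ in closed form: applying $\iota^{-1}$ to the Littlewood--Richardson expansion $P_\lambda P_\nu=\sum_\mu f^\mu_{\lambda,\nu}P_\mu$ gives $F_\lambda*F_\nu=\sum_\mu(\cdots)f^\mu_{\lambda,\nu}F_\mu$, and acting on $|\varnothing\rangle$ shows that $\mathrm{ev}_{\mu/\nu}(F_\lambda)$ is $f^\mu_{\lambda,\nu}$ up to an explicit factor. Summing $\sum_\lambda b_\lambda f^\mu_{\lambda,\nu}P_\lambda(y)$ is then precisely the definition of $P_{\mu/\nu}(y)$, and the accumulated normalisations assemble into $a_{\mu,\nu}$. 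In short: expand in $F_\lambda$ rather than in $L_k$, so that the evaluation map hits a basis on which it is diagonal-like, and the ``delicate'' exponentiation is replaced by a single application of the Cauchy identity.
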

This theorem connects the lattice partition function on the cone $Z_N$ and the skew Macdonald functions. We may ask for a conic partition function with a fixed colour content of the paths. This amounts to asking for the coefficient of a specific monomial in $z$'s and $w$'s in \eqref{eq:Z}. Set $z_i=0$ and for a non-negative integer composition $\lambda=(\lambda_1\ldots \lambda_m)$ denote $Z_{\lambda}:=Z_{N,\varnothing,\lambda}$.  We obtain a lattice partition function representation of the monomial expansion of the skew Macdonald functions:
\begin{align}\label{eq:PskewZ}
P_{\mu/\nu}(w_1\dots w_m) =a_{\mu,\nu} \sum_{\lambda:~ |\lambda|= N}  
  w^\lambda 
    Z_{\lambda}(\chi_1\dots \chi_N)
\end{align}

Let us consider an example (see Section \ref{sec:skew_example}). We compute $P_{(2,1)/(1)}(w_1,w_2)$ with the above formula:
\begin{align*}
    a_{(2,1),(1)}^{-1} P_{(2,1)/(1)}(w_1,w_2) =
    Z_{(2,0)}(q, t^{-1}) w_1^2
    +
    Z_{(0,2)}(q, t^{-1}) w_2^2 
    +
    Z_{(1,1)}(q, t^{-1}) w_1 w_2
\end{align*}
Each of the monomial coefficients corresponds to specific loop configurations. We represent them using planar diagrams which are equivalent to the diagrams on the cone\footnote{If we rotate the planar pictures by 135 degrees counterclockwise we can place them on the lattice drawn on the cone \eqref{eq:cone}.}:
\begin{align}
a_{(2,1),(1)}^{-1} P_{(2,1)/(1)}(w_1,w_2) =
&\left(
\begin{tikzpicture}[scale=0.6, baseline=(current  bounding  box.center)]
\draw (1,0.5) -- node[pos=1,above] {$\scriptstyle 0$} (1,2.5);
\draw (2,0.5) -- node[pos=1,above] {$\scriptstyle 0$} (2,2.5);
\foreach\i/\lab in {2/,1/\delta_N}
\draw (0.5,\i) -- (2.5,\i);
\draw [rounded corners=5pt,red, ultra thick]  (2,0.5) -- (2,0.25) -- (2.75,0.25) -- (2.75,1) -- (2.5,1);
\draw[red,ultra thick] (2,0.5)--(2,1)--(2.5,1);
\draw [rounded corners=5pt,red,ultra thick]  (1,0.5) -- (1,0.5-0.5) -- (2+0.5+0.5,0.5-0.5) -- (2+0.5+0.5,2) -- (2.5,2);
\draw[red,ultra thick] (2,0.5)--(2,1)--(2.5,1);
\draw[red,ultra thick] (1,0.5)--(1,1)--(2,1)--(2,2)--(2.5,2);
\foreach\i/\lab in {2/0 ,1/0}
\node at (0.3,\i) {$\scriptstyle \lab$};
\end{tikzpicture}
+
\begin{tikzpicture}[scale=0.6, baseline=(current  bounding  box.center)]
\draw (1,0.5) -- node[pos=1,above] {$\scriptstyle 0$} (1,2.5);
\draw (2,0.5) -- node[pos=1,above] {$\scriptstyle 0$} (2,2.5);
\foreach\i/\lab in {2/,1/\delta_N}
\draw (0.5,\i) -- (2.5,\i);
\draw [rounded corners=5pt,red, ultra thick]  (2,0.5) -- (2,0.25) -- (2.75,0.25) -- (2.75,1) -- (2.5,1);
\draw[red,ultra thick] (2,0.5)--(2,1)--(2.5,1);
\draw [rounded corners=5pt,red,ultra thick]  (1,0.5) -- (1,0.5-0.5) -- (2+0.5+0.5,0.5-0.5) -- (2+0.5+0.5,2) -- (2.5,2);
\draw[red,ultra thick] (2,0.5)--(2,1)--(2.5,1);
\draw[red,ultra thick] (1,0.5)--(1,2)--(2.5,2);
\foreach\i/\lab in {2/0 ,1/0}
\node at (0.3,\i) {$\scriptstyle \lab$};
\end{tikzpicture}
\right) w^{2}_{1}\
+
\left(
\begin{tikzpicture}[scale=0.6, baseline=(current  bounding  box.center)]
\draw (1,0.5) -- node[pos=1,above] {$\scriptstyle 0$} (1,2.5);
\draw (2,0.5) -- node[pos=1,above] {$\scriptstyle 0$} (2,2.5);
\foreach\i/\lab in {2/,1/\delta_N}
\draw (0.5,\i) -- (2.5,\i);
\draw [rounded corners=5pt,gcol, ultra thick]  (2,0.5) -- (2,0.25) -- (2.75,0.25) -- (2.75,1) -- (2.5,1);
\draw[gcol,ultra thick] (2,0.5)--(2,1)--(2.5,1);
\draw [rounded corners=5pt,gcol,ultra thick]  (1,0.5) -- (1,0.5-0.5) -- (2+0.5+0.5,0.5-0.5) -- (2+0.5+0.5,2) -- (2.5,2);
\draw[gcol,ultra thick] (2,0.5)--(2,1)--(2.5,1);
\draw[gcol,ultra thick] (1,0.5)--(1,1)--(2,1)--(2,2)--(2.5,2);
\foreach\i/\lab in {2/0 ,1/0}
\node at (0.3,\i) {$\scriptstyle \lab$};
\end{tikzpicture}
+
\begin{tikzpicture}[scale=0.6, baseline=(current  bounding  box.center)]
\draw (1,0.5) -- node[pos=1,above] {$\scriptstyle 0$} (1,2.5);
\draw (2,0.5) -- node[pos=1,above] {$\scriptstyle 0$} (2,2.5);
\foreach\i/\lab in {2/,1/\delta_N}
\draw (0.5,\i) -- (2.5,\i);
\draw [rounded corners=5pt,gcol, ultra thick]  (2,0.5) -- (2,0.25) -- (2.75,0.25) -- (2.75,1) -- (2.5,1);
\draw[gcol,ultra thick] (2,0.5)--(2,1)--(2.5,1);
\draw [rounded corners=5pt,gcol,ultra thick]  (1,0.5) -- (1,0.5-0.5) -- (2+0.5+0.5,0.5-0.5) -- (2+0.5+0.5,2) -- (2.5,2);
\draw[gcol,ultra thick] (2,0.5)--(2,1)--(2.5,1);
\draw[gcol,ultra thick] (1,0.5)--(1,2)--(2.5,2);
\foreach\i/\lab in {2/0 ,1/0}
\node at (0.3,\i) {$\scriptstyle \lab$};
\end{tikzpicture}
\right)w^{2}_{2}\nonumber\\[1 em]
+
&
\left(
\begin{tikzpicture}[scale=0.6, baseline=(current  bounding  box.center)]
\draw (1,0.5) -- node[pos=1,above] {$\scriptstyle 0$} (1,2.5);
\draw (2,0.5) -- node[pos=1,above] {$\scriptstyle 0$} (2,2.5);
\foreach\i/\lab in {2/,1/\delta_N}
\draw (0.5,\i) -- (2.5,\i);
\draw [rounded corners=5pt,gcol, ultra thick]  (2,0.5) -- (2,0.25) -- (2.75,0.25) -- (2.75,1) -- (2.5,1);
\draw[gcol,ultra thick] (2,0.5)--(2,1)--(2.5,1);
\draw [rounded corners=5pt,red,ultra thick]  (1,0.5) -- (1,0.5-0.5) -- (2+0.5+0.5,0.5-0.5) -- (2+0.5+0.5,2) -- (2.5,2);
\draw[red,ultra thick] (1,0.5)--(1,1)--(2,1)--(2,2)--(2.5,2);
\foreach\i/\lab in {2/0 ,1/0}
\node at (0.3,\i) {$\scriptstyle \lab$};
\end{tikzpicture}+
\begin{tikzpicture}[scale=0.6, baseline=(current  bounding  box.center)]
\draw (1,0.5) -- node[pos=1,above] {$\scriptstyle 0$} (1,2.5);
\draw (2,0.5) -- node[pos=1,above] {$\scriptstyle 0$} (2,2.5);
\foreach\i/\lab in {2/,1/\delta_N}
\draw (0.5,\i) -- (2.5,\i);
\draw [rounded corners=5pt,gcol, ultra thick]  (2,0.5) -- (2,0.25) -- (2.75,0.25) -- (2.75,1) -- (2.5,1);
\draw[gcol,ultra thick] (2,0.5)--(2,1)--(2.5,1);
\draw [rounded corners=5pt,red,ultra thick]  (1,0.5) -- (1,0.5-0.5) -- (2+0.5+0.5,0.5-0.5) -- (2+0.5+0.5,2) -- (2.5,2);
\draw[red,ultra thick] (1,0.5)--(1,2)--(2.5,2);
\foreach\i/\lab in {2/0 ,1/0}
\node at (0.3,\i) {$\scriptstyle \lab$};
\end{tikzpicture}
+
\begin{tikzpicture}[scale=0.6, baseline=(current  bounding  box.center)]
\draw (1,0.5) -- node[pos=1,above] {$\scriptstyle 0$} (1,2.5);
\draw (2,0.5) -- node[pos=1,above] {$\scriptstyle 0$} (2,2.5);
\foreach\i/\lab in {2/,1/\delta_N}
\draw (0.5,\i) -- (2.5,\i);
\draw [rounded corners=5pt,red, ultra thick]  (2,0.5) -- (2,0.25) -- (2.75,0.25) -- (2.75,1) -- (2.5,1);
\draw[red,ultra thick] (2,0.5)--(2,1)--(2.5,1);
\draw [rounded corners=5pt,gcol,ultra thick]  (1,0.5) -- (1,0.5-0.5) -- (2+0.5+0.5,0.5-0.5) -- (2+0.5+0.5,2) -- (2.5,2);
\draw[gcol,ultra thick] (1,0.5)--(1,1)--(2,1)--(2,2)--(2.5,2);
\foreach\i/\lab in {2/0 ,1/0}
\node at (0.3,\i) {$\scriptstyle \lab$};
\end{tikzpicture}
+
\begin{tikzpicture}[scale=0.6, baseline=(current  bounding  box.center)]
\draw (1,0.5) -- node[pos=1,above] {$\scriptstyle 0$} (1,2.5);
\draw (2,0.5) -- node[pos=1,above] {$\scriptstyle 0$} (2,2.5);
\foreach\i/\lab in {2/,1/\delta_N}
\draw (0.5,\i) -- (2.5,\i);
\draw [rounded corners=5pt,red, ultra thick]  (2,0.5) -- (2,0.25) -- (2.75,0.25) -- (2.75,1) -- (2.5,1);
\draw[red,ultra thick] (2,0.5)--(2,1)--(2.5,1);
\draw [rounded corners=5pt,gcol,ultra thick]  (1,0.5) -- (1,0.5-0.5) -- (2+0.5+0.5,0.5-0.5) -- (2+0.5+0.5,2) -- (2.5,2);
\draw[gcol,ultra thick] (1,0.5)--(1,2)--(2.5,2);
\foreach\i/\lab in {2/0 ,1/0}
\node at (0.3,\i) {$\scriptstyle \lab$};
\end{tikzpicture}\right)w_{1}w_{2}
\nonumber
\end{align}
Using the Boltzmann weights \eqref{tikz:coloredvertices_intro}, with $n=0$ and the appropriate choice of the spectral parameters, and the coefficient $a_{(2,1),(1)}$, given in \eqref{eq:acoef_example}, we recover the skew Macdonald function $P_{(2,1)/(1)}(w_1,w_2)$:
$$
P_{(2,1)/(1)}(w_1,w_2) = 
w_1^2 + 
\frac{(1-t)  (2+q+t+2 q t)}{1-q t^2}w_1 w_2 
+w_2^2
$$

\subsection{Overview of the paper}
In Section \ref{sec:sym} we give the background on the theory of symmetric functions. In Section \ref{sec:shuffle} we review some aspects of the trigonometric Feigin--Odesskii shuffle algebra. In Section \ref{sec:six_vertex} we compute the conic partition for the six vertex model which is based on the $R$-matrix of $U_t(\widehat{sl}_2)$. In Section \ref{sec:supersymmetric} we extend the results of Section \ref{sec:six_vertex} to the case of $U_t(\widehat{sl}_{n+1|m})$. In Section \ref{sec:skew} we apply our results to the problem of computing the skew Macdonald functions.


\section{Background on symmetric functions}\label{sec:sym}
In this section we recall some basic facts from the theory of symmetric functions which will be required in the rest of the paper. All of the background material in this section can be found in  \cite{Macdonald}.

\subsection{Partitions} 
Let $\lambda=(\lambda_1,\lambda_2,\ldots )$, s.t. $\lambda_i\geq \lambda_{i+1}$ for all $i$, be an integer partition of $N$ and write $\lambda \vdash N$. The length of $\lambda$ is equal to the number of non-zero parts of $\lambda$ and is denoted by $\ell(\lambda)$. The sum of all parts $\lambda_i$ of $\lambda\vdash N$ is denoted by $|\lambda|$ and is equal to $N$. The multiplicity vector $m(\lambda)=(m_1(\lambda),m_2(\lambda)\ldots)$ is composed of integers $m_k(\lambda)$ which count how many parts of $\lambda$ are equal to $k$. In the context of integer sequences the notation $k^l$ means the sequence $k\ldots k$ which has $l$ repeats of $k$. We can write $\lambda = (\lambda_1^{m_{\lambda_1}}\ldots 2^{m_2}, 1^{m_1})$, where $m_k= m_k(\lambda)$. Integer partitions can be partially ordered using the dominance ordering: $\lambda\geq \mu$ when $\lambda_1+\cdots + \lambda_k \geq \mu_1+\cdots + \mu_k $ for all $k>0$.

A partition $\lambda$ is identified with the Young diagram where rows of boxes are placed horizontally and are non-increasing from top to bottom. By $\lambda'$ we denote the dual partition to the partition $\lambda$ which corresponds to the Young diagram which has rows and columns exchanged compared to $\lambda$. A box of a Young diagram is denoted by $\square$ and is identified with its coordinate $\square=(i,j)$, where the row index $i$ increases downwards and column index $j$ increases rightwards. The arm and leg functions $a_\lambda(\square)$, $l_\lambda(\square)$ are defined by 
$$
a_\lambda(\square) = \lambda_i-j,
\quad
l_\lambda(\square) = \lambda'_j-i
$$
The summations or products over $\square \in \lambda$ mean that $\square$ runs over all the boxes in the Young diagram of the partition $\lambda$. For two partitions $\lambda$ and $\mu$ such the $\lambda_i\geq \mu_i$, for all $i$, we define the skew partition $\lambda/\mu$ and the corresponding Young diagram.

\subsection{Basic symmetric functions}
Let $q,t$ be two formal variables. Consider the ring $\Lambda$ of symmetric functions in the alphabet $(x)=(x_1,x_2\ldots )$ with infinitely many variables and with coefficients in $\mathbb{F}:=\mathbb{Q}(q,t)$. A basic family of symmetric functions in this ring is the set of monomial symmetric functions $m_\lambda(x)$ which are labelled by integer partitions $\lambda$:  
\begin{align}\label{eq:monomials}
   m_{\lambda}=\sum_{\alpha}x^{\alpha}
\end{align}
where the sum runs over all distinct permutations $\alpha$ of $(\lambda_1\ldots \lambda_{\ell(\lambda)},0\ldots)$ and $x^{\alpha}:=\prod^{\infty}_{i=1}x^{\alpha_{i}}_{i}$.
As in \eqref{eq:monomials} we will often drop the explicit dependence on the alphabet. Another important family is the power sums symmetric functions:
\begin{align}\label{eq:power_sums}
    p_r = \sum_{i} x_i^r, 
    \qquad p_\lambda = p_{\lambda_1} \cdots p_{\lambda_{\ell(\lambda)}}
\end{align}
The two families $m_\lambda$ and $p_\lambda$ form bases in the ring $\Lambda$. Using the power sums we can generate new families of symmetric functions via exponential generating functions of the form:
$$
\exp\left(\sum_{r>0}c_r v^{r} p_r\right)
$$
for some choices of coefficients $c_r$. For our purposes we will need to consider three such families of functions $e_j$, $g_j$ and $g^*_j$, with $j=0,1,2\ldots$:
\begin{align}
    \label{eq:generating-e}
    &\sum_{j=0}^\infty v^j e_j = \exp\left(\sum_{r>0}\frac{(-1)^{r+1}}{r} v^{r} p_r\right)\\
        \label{eq:generating-g}
    &\sum_{j=0}^\infty v^j g_j = \exp\left(\sum_{r>0}\frac{1}{r}\frac{1-t^r}{1-q^r} v^{r} p_r\right)\\
        \label{eq:generating-g-dual}
    &\sum_{j=0}^\infty v^j g^*_j = \exp\left(-\sum_{r>0}\frac{1}{r}\frac{1-t^r}{1-q^r} v^{r} p_r\right)
\end{align}
These formulas define the elementary symmetric functions $e_j$ and two symmetric functions $g_j$ and $g_j^*$ (see \cite[Ch.\textrm{VI, \S 2}]{Macdonald}). By expanding the exponentials we can write explicit expansions of $e_j,g_j$ and $g_j^*$ in the power sums basis:
\begin{align*}
e_j = \sum_{\lambda \vdash j}    \frac{(-1)^{j+\ell(\lambda)}}{\lambda!}\prod_{r\in \lambda}\frac{1}{r}\cdot
p_\lambda,
\qquad
g_j = \sum_{\lambda \vdash j}    \frac{1}{\lambda!}\prod_{r\in \lambda}\frac{1}{r}\frac{1-t^r}{1-q^r}
\cdot p_\lambda,\qquad
g^*_j = \sum_{\lambda \vdash j}    \frac{(-1)^{\ell(\lambda)}}{\lambda!}\prod_{r\in \lambda}\frac{1}{r}\frac{1-t^r}{1-q^r}
\cdot p_\lambda
\end{align*}
where $\lambda! :=m_1(\lambda)! m_2(\lambda)!\cdots $ and the products over $r\in \lambda$ run over the parts of $\lambda$. 
The symmetric functions $e_j,g_j$ and $g_j^*$ can be used to write new bases:
\begin{align}
    \label{eq:bases:egg}
     e_\lambda = e_{\lambda_1} \cdots e_{\lambda_{\ell(\lambda)}},
     \qquad
      g_\lambda = g_{\lambda_1} \cdots g_{\lambda_{\ell(\lambda)}}, \qquad
       g^*_\lambda = g^*_{\lambda_1} \cdots g^*_{\lambda_{\ell(\lambda)}}
\end{align}

\subsection{Macdonald functions}
Define the scalar product using the power sums basis\footnote{In \eqref{eq:scalar} $\delta_{\lambda,\mu}$ is the standard Kronecker delta. We also define $\delta_{\text{True}}:=1$ and $\delta_{\text{False}}:=0$ for later use.}:
\begin{align}\label{eq:scalar}
    \langle p_{\lambda},p_{\mu}\rangle_{q,t}=\lambda! \prod_{r\in\lambda}r\frac{1-q^r}{1-t^r}\cdot \delta_{\lambda,\mu}
\end{align}
Recall \cite[Ch.\textrm{VI, \S 4}]{Macdonald}  that the Macdonald functions $P_{\lambda}$ are the unique symmetric functions in $\Lambda$ the Macdonald functions in finitely any variables that satisfy the following conditions:
\begin{align*}
    P_{\lambda}(x;q,t)=m_{\lambda} + \sum_{\mu < \lambda} c_{\lambda,\mu}m_{\mu},
    \qquad 
    \langle P_{\mu},P_{\nu}\rangle=0,
    \qquad \text{for }\mu \neq \nu
\end{align*}
where $c_{\lambda,\mu}\in \mathbb F$ are some coefficients and $\mu< \lambda$ in the sense of the dominance partial ordering.  The Macdonald functions are self-dual w.r.t. the scalar product \eqref{eq:scalar} up to a constant:
\begin{align}\label{eq:scalar_PQ}
\langle P_{\mu},Q_{_\lambda}\rangle =\delta_{\mu,\lambda},
\qquad Q_\lambda := b_\lambda P_\lambda, 
\end{align}
where the coefficient $b_\lambda\in \mathbb F$ is defined in terms two other coefficients $c_\lambda$ and $c'_\lambda$: 
\begin{align}\label{eq:bc-coef}
    b_\lambda := \frac{c_\lambda}{c'_\lambda}
    ,\qquad
    c_\lambda :=\prod_{\square \in \lambda}(1-q^{a_\lambda(\square)} t^{l_{\lambda}(\square)+1}) ,
    \qquad
    c'_\lambda := \prod_{\square \in \lambda}(1-q^{a_\lambda(\square)+1} t^{l_{\lambda}(\square)})
\end{align}

Consider $P_\lambda$ with the Young diagram of $\lambda$ given by a single column, in this case it is known that:
\begin{align}\label{eq:P-e}
P_{(1^{k})}=e_{k}
\end{align}
We will require the following {\it Pieri formula} \cite[Ch.\textrm{VI, \S 6}]{Macdonald} for Macdonald functions:
\begin{align}
    \label{eq:Pieri}
    e_{j} P_{\mu} =\sum_{\lambda}\psi'_{\lambda/\mu}P_{\lambda}
\end{align}
where the summation runs over $\lambda$ such that the skew partitions $\lambda/\mu$ are all vertical strips with $j$ boxes (i.e. $\lambda/\mu$ does not contain more than one box in a single row). The expansion coefficients $\psi'_{\lambda/\mu}$ are defined by:
\begin{align}\label{eq:psi-prime}
    \psi'_{\lambda/\mu} := \prod_{i,j} 
    \frac{(1-q^{\mu_i-\mu_j} t^{j-i-1})
        (1-q^{\lambda_i-\lambda_j} t^{j-i+1})}
        {(1-q^{\mu_i-\mu_j} t^{j-i})
        (1-q^{\lambda_i-\lambda_j} t^{j-i})}
\end{align}
where the product is taken over all $i,j$ such that $i<j$ and $\lambda_i=\mu_i$, $\lambda_j=\mu_j+1$.

Next we recall \cite[Ch.\textrm{VI, \S 7}]{Macdonald} the Littlewood--Richardson coefficients $f^{\mu}_{\lambda,\nu}=f^{\mu}_{\lambda,\nu}(q,t)$ and  the skew Macdonald functions $P_{\mu/\nu}$. The Littlewood--Richardson coefficients are the expansion coefficients of $P_{\lambda}P_{\nu}$ in the Macdonald basis:
\begin{align}\label{eq:LR-ceof}
P_{\lambda}P_{\nu}=\sum_{\mu}f^{\mu}_{\lambda,\nu}P_{\mu}
\end{align}
With these coefficients we can define the skew Macdonald functions which are symmetric functions labelled by skew partitions:
\begin{align}
    \label{eq:skew_M}
    P_{\mu/\nu} = \sum_{\lambda} \frac{b_\lambda b_\nu}{b_\mu}  f^\mu_{\lambda,\nu}  P_\lambda
\end{align}

The skew Macdonald functions $P_{\mu/\nu}(w_1\dots w_m)$ appearing in \eqref{eq:Mac_Z} depend on finitely many arguments. These functions, together with the Macdonald functions $P_{\mu}(w_1\dots w_m)$ and the finite alphabet $(w_1\dots w_m)$ counterparts of the other symmetric functions from this section, belong to the ring of symmetric polynomials $\Lambda_m=\mathbb F[w_1\ldots w_m]^{\mathcal{S}_m}$. We note, in particular, that the polynomials $P_{\mu}(w_1\dots w_m)$ with $|\mu|=k$  and $\ell(\mu)\leq m$ form a basis of $\Lambda_m^k$ which is the degree $k$ component of the graded ring $\Lambda_m$ consisting of the homogeneous symmetric polynomials of degree $k$.

\subsection{Cauchy kernel}
Consider two alphabets $(x)=(x_1\ldots x_k)$ and $(y)=(y_1\ldots y_l)$ and define the Macdonald {\it Cauchy kernel}  \cite[Ch.\textrm{VI, \S 2}]{Macdonald}:
\begin{align}\label{eq:Cauchy-kernel}
    \Pi(x,y):=
        \exp\left(\sum_{r>0}\frac{1}{r}\frac{1-t^r}{1-q^r} p_r(x)p_r(y)\right)
\end{align}
For any pair of bases $u_\lambda$ and $u^*_\lambda$ which are dual w.r.t. the scalar product \eqref{eq:scalar}, i.e.
$
\langle u_{\lambda},u^*_{\mu}\rangle_{q,t}=\delta_{\lambda,\mu}
$, 
we have:
\begin{align}\label{eq:Cauchy-resolution}
    \Pi(x,y)=
        \sum_{\lambda} u_\lambda(x) u^*_\lambda(y)
\end{align}
An important role in our paper is played by two pairs of dual bases. One pair is given by $m_\lambda$ and $g_\lambda$ and the other one by the Macdonald functions $P_\lambda$ and $Q_\lambda$:
\begin{align}
    \label{eq:g-m-P}
    \Pi(x,y) = \sum_{\lambda} m_\lambda(x) g_\lambda(y)
    = \sum_{\lambda} P_\lambda(x) Q_\lambda(y)
\end{align}

\section{The shuffle algebra}\label{sec:shuffle}
In this section, we introduce the trigonometric Feigin--Odesskii shuffle algebra \cite{SV-Hall,FT-shuffle,FHHSY,Ng_rev} which we denote by $\mathcal A$. This is an algebra of symmetric rational functions with a multiplication which is non-commutative in general. The algebra $\mathcal A$ contains a commutative subalgebra \cite{FHHSY}, which we denote by $\Ac$. We will focus on this commutative subalgebra. We present several examples of families of elements of $\Ac$ which were considered in the literature in various contexts. Among these examples we have the elements (completely factorized products) which were used in \cite{FHHSY} for computations related to the commutative algebra $\Ac$ and the elements (Izergin-type determinants) which are related to domain-wall partition functions as discussed in Sections \ref{sec:six_vertex} and \ref{sec:supersymmetric}. 
In this section we recall a particular representation of the algebra $\mathcal{A}$ \cite{SV-Hall,FT-shuffle,FHHSY} which gives rise to an isomorphism between $\Ac$ and the ring of symmetric functions $\Lambda$. This isomorphism is a key tool which will help us to relate the conic partition function $Z_N$ \eqref{eq:Z} with the skew Macdonald functions in Section \ref{sec:skew}.

In this section it will be convenient to use three parameters which are related to $q$ and $t$:
\begin{align}
    \label{eq:q123}
    q_1=q, \quad q_2 =t^{-1}, \quad q_3=t q^{-1}
\end{align}

\subsection{Definition of the shuffle algebra \texorpdfstring{$\mathcal{A}$}{}}
The shuffle algebra $\mathcal{A}$ is a vector space whose elements are symmetric rational functions. Their properties are determined by the function:
\begin{align}
    \label{eq:zeta}
    \zeta(x) := \frac{(1-q x)(1-t^{-1}x)}{(1-x)(1- qt^{-1} x)}
\end{align}
\begin{dfn}
Consider the vector space of symmetric rational functions:
\begin{align}\label{eq:V-space}
    \mathcal{V}= \bigoplus_{k\geq 0} \mathbb{F}(x_1\ldots x_k)^{\mathcal{S}_k}
\end{align}
Endow $\mathcal{V}$ with an algebra structure given by the shuffle product $*$. For $F(x_1\dots x_k)\in \mathcal{V}$ and $G(x_1\dots x_l)\in \mathcal{V}$ we have:
\begin{align}\label{eq:sh}
    F(x_1\dots x_k) * G(x_1\dots x_l) = 
    \frac{1}{k! l!}
    {\normalfont\text{Sym}}~ F(x_1\dots x_k) G(x_{k+1}\dots x_{k+l}) \prod_{\substack{i\in {1,\dots k} \\ j\in {k+1,\dots k+l}}}
    \zeta\left(\frac{x_i}{x_j}\right)
\end{align}
where:
\begin{align}\label{eq:Sym}
    {\normalfont\text{Sym}} ~P(x_1 \dots  x_k) = \sum_{\sigma \in \mathcal{S}_k} P(x_{\sigma(1)} \dots x_{\sigma(k)})
\end{align}
The shuffle algebra $\mathcal{A}\subset \mathcal{V}$ is defined as the set of rational functions of the form:
\begin{align}
\label{eq:A_element}
    F(x_1 \dots  x_k) =
    \frac{f(x_1\dots x_k)}{\prod_{1\leq i\neq j\leq k}(x_i-q t^{-1} x_j)},
    \qquad 
    f(x_1\dots x_k) \in \mathbb{F}[x_1^{\pm 1}\dots x_k^{\pm 1}]^{\mathcal{S}_k}
\end{align}
where $f(x_1\dots x_k)$ satisfies the wheel conditions:
\begin{align}
\label{eq:p_wheel}
f(x_1\dots x_k) = 0
\quad \text{if} \quad 
(x_i,x_j,x_k) = (x,q t^{-1} x, t^{-1} x) 
\quad \text{or} \quad
(x_i,x_j,x_k) = (x,q t^{-1} x, q x)
\end{align}
\end{dfn}
The shuffle product \eqref{eq:sh} is such that the product $F*G$ satisfies the wheel conditions if $F$ and $G$ do.  Note that \eqref{eq:sh} can also be written using a sum over subsets:
\begin{align}\label{eq:sh_sub}
F(x_1\dots x_k) * G(x_1\dots x_l) =
\sum_{\substack{S\subseteq [1\dots k+l]\\ |S|=k}}
F(x_S)G(x_{S^c}) \prod_{\substack{i\in S\\j\in S^c}} \zeta\left(\frac{x_i}{x_j}\right)
\end{align}
where the condition $|S|=k$ fixes the size of the subsets, $S^c\subseteq [1\dots k+l]$ refers to the subset complement to $S$ and its size must be equal to $l$. The algebra $\mathcal{A}$ is graded by the number of arguments $\mathcal{A}=\bigoplus_{k\geq 0} \mathcal{A}_k$, $F(x_1 \dots x_k)\in \mathcal{A}_k$. Consider a subalgebra $\mathcal{A}^\circ\subset \mathcal{A}$ of the elements $F\in \mathcal{A}_k$ for which the two limits:
\begin{align}
\label{eq:limit1}
    &\lim_{\xi \rightarrow 0}F(\xi x_1 \dots \xi x_{r},  x_{r+1} \dots  x_{k})\\
\label{eq:limit2}
    &\lim_{\xi \rightarrow \infty}
    F(\xi x_1 \dots \xi x_{r},  x_{r+1} \dots  x_{k})
\end{align}
exist and coincide for all fixed $r=1\dots k$. This subalgebra splits into components of fixed degree in the same way as $\mathcal{A}$:
$\mathcal{A}^\circ = \bigoplus_{k\in \mathbb{Z}_{\ge 0}} \mathcal{A}_{k}^\circ$. We have the following Proposition due to \cite{FHHSY}. 
\begin{prop}
The algebra $(\mathcal{A}^\circ, *)$ is commutative and the dimension of the graded subspace $\Ac_k$ is equal to $p(k)$, the number of partitions of $k$.
\end{prop}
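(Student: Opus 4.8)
The plan is to deduce both assertions from a single structural statement: that $(\Ac,*)$ is, as a graded algebra, freely generated by one element in each positive degree. Granting this, commutativity is automatic, and a free commutative graded algebra with one generator in each degree $r\ge 1$ has graded dimension equal to the number of ways of writing $k=\sum_r r\,a_r$ with $a_r\ge 0$, i.e. $p(k)$; so $\dim\Ac_k=p(k)$ follows, and the algebra is moreover isomorphic to $\Lambda=\mathbb F[p_1,p_2,\dots]$, which is the isomorphism $\Ac\cong\Lambda$ alluded to above. Thus the work splits into three tasks: (a) produce enough elements to see $\dim\Ac_k\ge p(k)$; (b) bound $\dim\Ac_k\le p(k)$; and (c) arrange (a) so that $\Ac$ is generated by pairwise-commuting elements.

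For the lower bound I would, for each $r\ge 1$, take the degree-$r$ element $L_r\in\Ac_r$ furnished by the symmetrization formula \eqref{eq:Psum_intro}. Its completely factorized summand times $\prod_{i<j}\zeta(x_i/x_j)$ places it in $\mathcal A_r$ (the poles are of the type permitted by \eqref{eq:A_element}, and the wheel conditions \eqref{eq:p_wheel} hold thanks to the $\zeta$-factors), and one checks that the rescaling limits \eqref{eq:limit1}--\eqref{eq:limit2} exist and coincide, so that $L_r\in\Ac_r$. Forming the shuffle monomials $L_\lambda:=L_{\lambda_1}*\cdots*L_{\lambda_{\ell(\lambda)}}$ for $\lambda\vdash k$ gives $p(k)$ elements of $\Ac_k$. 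Their linear independence I would establish by a leading-term computation: the iterated behaviour of $L_\lambda$ under the family of degenerations \eqref{eq:limit1}--\eqref{eq:limit2} attaches to each $L_\lambda$ a distinguished ``top'' datum indexed by $\lambda$, and the transition matrix to these data is triangular (with respect to, say, dominance order) and invertible. Hence $\dim\Ac_k\ge p(k)$ and the $L_\lambda$ are a candidate basis.

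The upper bound is the main obstacle. Here I would build an injective linear map $\iota\colon\Ac_k\to\bigoplus_{\lambda\vdash k}\mathbb F$ directly out of the conditions defining $\Ac$. One iterates the splitting of the variables into a fixed block and a block rescaled by $\xi$: because the limits \eqref{eq:limit1} and \eqref{eq:limit2} exist and agree, each such degeneration produces a well-defined element of a lower shuffle component without ambiguity, while the wheel conditions \eqref{eq:p_wheel} force the admissible coincidence patterns of the variables to be governed by partitions. Recording the resulting finite collection of iterated limits defines $\iota$, and the crux is to prove that $\iota$ is injective into a target of dimension $p(k)$ -- equivalently, that an element of $\Ac_k$ all of whose iterated degenerations vanish must itself be zero. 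This is precisely the step where the pole structure from \eqref{eq:A_element} and the balance imposed by \eqref{eq:limit1}--\eqref{eq:limit2} must be used in full, and it is where I expect the real difficulty to lie. Combined with the lower bound it yields $\dim\Ac_k=p(k)$ and shows that the $L_\lambda$ span $\Ac_k$.

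Finally, commutativity. Since the $L_\lambda$ span every graded component and each is a $*$-product of the $L_r$, the algebra $\Ac$ is generated by $\{L_r\}_{r\ge 1}$; by associativity of $*$ it then suffices to prove that the generators commute pairwise, $L_r*L_s=L_s*L_r$, for then every word in the generators can be sorted and any two elements commute. This reduces commutativity to a two-block instance of \eqref{eq:sh_sub}: the difference $L_r*L_s-L_s*L_r$ lies in $\Ac_{r+s}$, its only possible poles sit on the divisors $x_i=qt^{-1}x_j$ allowed by \eqref{eq:A_element} and are further cut down by the wheel conditions, and its behaviour as one block is rescaled to $0$ or $\infty$ is pinned by the limit conditions; matching these forces the difference to vanish (alternatively, one applies the now-available injective $\iota$ and checks the two iterated-limit data agree). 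With commutativity and the dimension count in hand, the assignment $L_r\mapsto p_r$ extends to the graded algebra isomorphism $\Ac\cong\Lambda$ mentioned above.
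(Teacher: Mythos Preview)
The paper does not supply a proof; it attributes the proposition to \cite{FHHSY} and moves on. Your outline is in the spirit of that reference, but two points deserve comment.

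First, the logical order. In \cite{FHHSY} commutativity is established \emph{before} and independently of the dimension count: one builds a specialization map on $\Ac_k$ (essentially your iterated-degeneration map), proves it is injective, and then observes that the map sends $F*G$ and $G*F$ to the same data for all $F,G\in\Ac$---this uses that $\zeta(x)/\zeta(x^{-1})$ is a monomial together with the degree-zero balance encoded in \eqref{eq:limit1}--\eqref{eq:limit2}. Your route---first get the dimension, hence the span by $\{L_\lambda\}$, hence reduce to commutation of generators---is logically consistent, but step (c) is not actually carried out: ``apply the injective $\iota$ and check the two iterated-limit data agree'' is exactly the general commutativity statement in disguise, and is no easier for $L_r*L_s$ than for arbitrary $F*G$. (Also, your opening sentence is loose: ``freely generated'' alone does not give commutativity; you mean a free \emph{commutative} graded algebra, with commutativity to be proven separately.)

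Second, the upper bound $\dim\Ac_k\le p(k)$, which you correctly flag as the crux, is not closed in your proposal. In \cite{FHHSY} this is precisely the injectivity of the Gordon specialization map on $\Ac_k$, and its proof genuinely uses all three ingredients together: the pole structure \eqref{eq:A_element}, the wheel conditions \eqref{eq:p_wheel}, and the limit conditions \eqref{eq:limit1}--\eqref{eq:limit2}. Without this injectivity, both your dimension count and your commutativity argument (which relies on the dimension count to know the $L_\lambda$ span) fail. So the proposal has the right skeleton---and it is more than the present paper attempts---but the two load-bearing steps (injectivity of the specialization, and the actual verification that it intertwines $F*G$ with $G*F$) remain to be done.
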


\subsection{Basic elements of \texorpdfstring{$\mathcal{A}^\circ$}{}}
Let us define the elements $S_{k}$ of $\Ac$ which play the same role as the power sums in the ring of symmetric functions. These functions were introduced and studied in \cite{Ng_rev}. 
\begin{dfn}\label{def:sh_P}
    Define $S_k=S_k(x_1\ldots x_k)\in \Ac_k$:
\begin{align}
    \label{eq:Psum}
    S_k := \frac{(1-q)^k(1-t^{-1})^k}{(t-q)^k(1-t^{-k})}
    {\normalfont\text{Sym}}\left(
    \frac{\sum_{j=0}^{k-1} (q/t)^j x_{j+1}/x_{1}}{\prod_{j=1}^{k-1} \left(1-(q/t) x_{j+1}/x_{j}\right)}
    \prod_{1\leq i<j\leq k}
    \zeta(x_i/x_j)
    \right),
    \qquad
    k=0,1,2,\dots
\end{align}
\end{dfn}
\begin{dfn}\label{def:E-el}
Define $E_k(q_a)=E_k(q_a;x_1\ldots x_k)\in \Ac_k$:
\begin{align}\label{eq:E_def}
    E_k(q_a) := \prod_{1\leq i< j\leq k} \frac{(x_i-q_a x_j)(x_i-q_a^{-1} x_j)}{(x_i-q t^{-1} x_j)(x_i-t q^{-1} x_j)},
    \qquad a=1,2,3
\end{align}
\end{dfn}
We note that $E_k(t/q)=1$ but as an element in $\Ac_k$ it has to be viewed as a function of $k$ arguments $x_1\ldots x_k$. The factorized elements $E_k(q_a)$ and their elliptic generalizations were proposed in \cite{FO}. Since $E_k(q_a)$ are completely factorized they are very useful for computational purposes (see \cite{FHHSY}). In particular, it is easy to check that $E_k(q_a)\in \mathcal{A}_k$. Such factorized elements also play important roles in other commutative shuffle algebras  \cite{FT_Bethe,FJM_ssym}.

\begin{dfn}\label{def:H-el}
Let $(a,b,c)$ be a permutation of $(1,2,3)$, define $H_k(q_a)=H_k(q_a;x_1\ldots x_k)\in \Ac_k$:
\begin{align}\label{eq:H_def}
    H_k(q_a) := (q_a q t^{-1})^{k(k-1)/2}
    \frac{\prod_{1\leq i,j\leq k} (x_i - q_b x_j)(x_j - q_c x_i)}{\prod_{1\leq i\neq j\leq k}(x_i-x_j)(x_i-q t^{-1}x_j)}
    \det_{1\leq i,j\leq k} 
    \frac{1}{(x_i -q_b x_j)(x_j - q_c x_i)}
\end{align}
\end{dfn}
The elements $H_k(q_a)$ where discussed in \cite{GZJ} in connection with coloured lattice models. From Definition \ref{def:H-el} one can see that these element are members of $\mathcal{A}_k$ by expanding the determinant and applying the wheel conditions \eqref{eq:p_wheel}. Alternatively one can express $H_k(q_a)$ in terms of $E_k$ as in \eqref{eq:Hp_E} below. All families of functions listed in Definitions \ref{def:sh_P}-\ref{def:H-el} are members of $\Ac$ and thus they mutually commute with respect to the shuffle product. Therefore they can be shuffle multiplied and ordered and thus give bases in $\Ac$. For $\lambda\vdash n$, $a\in\{1,2,3\}$, we have:
$$
v_\lambda = v_{\lambda_1}*v_{\lambda_2}*\cdots * v_{\lambda_{\ell(\lambda)}},
\qquad v=S,E(q_a),H(q_a)
$$
In order to exponentially generate elements of $\Ac$ we introduce the shuffle version of the exponential function:
\begin{align}
    \label{eq:sh-exp}
    \exp_*(A) : = 1+ A +\frac{1}{2!} A*A+\frac{1}{3!} A*A*A+\dots
\end{align}
Define the generating functions:
\begin{align}
    \label{eq:genfs}
    E(v;q_a):= \sum_{k=0}^\infty v^k E_k(q_a),
    \qquad
    H(v;q_a):= \sum_{k=0}^\infty v^k H_k(q_a)
\end{align}
\begin{lemma}\label{lem:EH_gen}
Let $a\in\{1,2,3\}$. The generating functions $E_k(q_a)$ and $H_k(q_a)$ are equal to shuffle-exponentials:
\begin{align}
    \label{eq:E-gen}
    E(v;q_a)&= \exp_*
    \left(\sum_{r>0}\frac{(-1)^{r+1}}{r}
    \frac{1-q_a^r}{1-q^r}\frac{(t-q)^r}{(1-q_a)^r}
    v^{r} S_r
    \right)\\
    \label{eq:H-gen}
    H(v;q_a)&=\exp_*
    \left(\sum_{r>0}\frac{1}{r}
    \frac{1-q_a^r}{1-q^r}\frac{(t-q)^r}{(1-q_a)^r}
    v^{r} S_r
    \right)
\end{align}    
\end{lemma}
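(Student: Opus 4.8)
The plan is to transport both identities to the ring of symmetric functions $\Lambda$ via the algebra isomorphism $\iota\colon \Ac \xrightarrow{\ \sim\ } \Lambda$ recalled in this section. Since $\iota$ carries the shuffle product $*$ to the ordinary product of $\Lambda$, it intertwines $\exp_*$ with the ordinary exponential; hence applying $\iota$ to \eqref{eq:E-gen} and \eqref{eq:H-gen} converts their right-hand sides into ordinary exponential generating functions in the elements $\iota(S_r)$. These are then to be compared with the classical generating functions \eqref{eq:generating-e}--\eqref{eq:generating-g-dual}. Because $\iota$ is injective, it suffices to verify the resulting identities in $\Lambda$.

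The argument then rests on two computations. First I would determine $\iota(S_r)$: using that the $S_r$ are the power-sum-like generators of $\Ac$ (so that $\iota(S_r)=\beta_r p_r$ for an explicit scalar $\beta_r$, cf.\ \cite{Ng_rev}), which can be read off from the symmetrization formula \eqref{eq:Psum} by evaluating on a one-variable specialization. Second I would compute the images $\iota(E_k(q_a))$ and $\iota(H_k(q_a))$ of the completely factorized element \eqref{eq:E_def} and of the determinantal element \eqref{eq:H_def}; these should come out to be the appropriate $q_a$-deformations of the elementary and complete families, most conveniently evaluated through the Macdonald-content matrix representation that the rest of the section sets up. With both inputs in hand, matching the exponentials reduces to the single scalar identity $\tfrac{(-1)^{r+1}}{r}c_r\beta_r$ (resp.\ $\tfrac1r c_r\beta_r$) equalling the coefficient of $v^r p_r$ in the logarithm of $\iota E(v;q_a)$ (resp.\ $\iota H(v;q_a)$), where $c_r:=\tfrac{1-q_a^r}{1-q^r}\tfrac{(t-q)^r}{(1-q_a)^r}$; this is checked degree by degree after taking $\log$.

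Two structural shortcuts cut the work roughly in half. The two formulas differ only in the sign $(-1)^{r+1}\mapsto 1$, so at the level of logarithms one has $\log_* E(v;q_a)+\log_* H(-v;q_a)=0$, i.e.\ $E(v;q_a)*H(-v;q_a)=1$; this is the shuffle analogue of $\sum_r(-1)^r e_r h_{k-r}=0$ and should follow from the determinantal identity expressing $H_k(q_a)$ through $E_k(q_a)$ (the Jacobi--Trudi-type relation referenced later). Granting it, proving \eqref{eq:E-gen} yields \eqref{eq:H-gen}. Equivalently, one can avoid $\iota$ altogether and prove the Newton-type recursion $k\,E_k(q_a)=\sum_{r=1}^{k}(-1)^{r-1}(c_r S_r)*E_{k-r}(q_a)$ directly in $\Ac$, which is precisely the $v$-logarithmic derivative of \eqref{eq:E-gen} and gives the result by induction on $k$.

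The main obstacle is the explicit evaluation of the shuffle products $S_r * E_{k-r}(q_a)$ (equivalently, the image $\iota(E_k(q_a))$). The factor $S_r$ is given by a full symmetrization over $\mathcal S_r$ of a rational function carrying the $\zeta$-factors from \eqref{eq:sh}, so the product is a symmetrized sum of rational terms whose individual poles must cancel; controlling these cancellations---governed by the wheel conditions \eqref{eq:p_wheel}---and collapsing the symmetrization to the factorized answer is where the genuine calculation lies. Everything else is the formal manipulation of exponential generating functions.
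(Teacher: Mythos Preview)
Your route via the isomorphism $\iota\colon\Ac\to\Lambda$ is genuinely different from what the paper does. The paper's proof is much shorter and more indirect: it cites \cite{Ng_moduli} for the exponential formula \eqref{eq:E-gen}, and then obtains \eqref{eq:H-gen} from a quadratic identity
\[
H_k(q_a)=\sum_{r=0}^k q_c^{\,k-r}\Bigl(\tfrac{1-q_b}{1-q_bq_c}\Bigr)^{k-r}\Bigl(\tfrac{1-q_c}{1-q_bq_c}\Bigr)^{r}E_{k-r}(q_b)*E_r(q_c),
\]
which is proved in the appendix by expanding the Izergin determinant as a subset sum. Summing this in $v$ gives a product of two copies of $E(\,\cdot\,;q_b)$ and $E(\,\cdot\,;q_c)$, and their exponents combine into \eqref{eq:H-gen}. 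Note in particular that the ``Jacobi--Trudi'' relation you anticipate, expressing $H_k(q_a)$ in terms of $E_r(q_a)$ with the \emph{same} parameter, is not what the paper uses; the identity mixes the other two parameters $q_b,q_c$.

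Your plan has a structural problem in the paper's logical order: the value $\iota(S_r)=\tfrac{(1-q)^r}{(t-q)^r}p_r$ is established in Corollary~\ref{cor:iso} \emph{by using} \eqref{eq:E-gen} at $q_a=q$ together with $\iota(E_k(q))=e_k$. So invoking $\iota(S_r)$ to prove \eqref{eq:E-gen} is circular unless you supply an independent computation. Your suggestion to read off $\beta_r$ from a ``one-variable specialization'' of \eqref{eq:Psum} presupposes that $\iota(S_r)$ is already known to be a scalar multiple of $p_r$; that is the nontrivial part, and it is precisely what the paper extracts from \eqref{eq:E-gen}. Likewise, to run your argument for general $a$ you would need $\iota(E_k(q_a))$ explicitly, which you correctly flag as the main obstacle but do not resolve. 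In short, the proposal identifies the right objects but defers exactly the computation that carries the content of the lemma; the paper sidesteps this by importing \eqref{eq:E-gen} from the literature and reducing \eqref{eq:H-gen} to a determinant identity.
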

\begin{proof}
The generating function $E(v;p)$ was computed and written in the exponential form \eqref{eq:E-gen} in \cite{Ng_moduli}. The generating function $H(v;q_a)$ follows from the quadratic identity:
\begin{align}\label{eq:Hp_E}
    H_k(q_a) =   
    \sum_{r=0}^k 
    q_c^{k-r}
    \left(\frac{1-q_b}{1-q_b q_c}\right)^{k-r}
    \left(\frac{1-q_c}{1-q_b q_c}\right)^{r}
    E_{k-r}(q_b)*E_{r}(q_c)
\end{align}
where $(a,b,c)$ is a permutation of $(1,2,3)$. We  prove \eqref{eq:Hp_E} in Appendix \ref{app:EH}. By summing \eqref{eq:Hp_E} over $k$ with $v^k$ we obtain a product of two shuffle-exponentials on the right hand side of the resulting equation. These exponentials combine and produce the r.h.s. of \eqref{eq:H-gen}.     
\end{proof}
\subsection{Evaluation representation of the shuffle algebra \texorpdfstring{$\mathcal{A}$}{}}
Recall that $\square=(a,b)$ denotes a box in the Young diagram of $\lambda\vdash k$ located in the $a$-th column and $b$-th row. Define $\chi_\square$ to be the content of the box $\square$:
\begin{align}\label{eq:chi}
    \chi_\square = q^{a-1} t^{1-b},
    \qquad \square \in \lambda
\end{align}
Consider an example of $\lambda=(532)$ and in each box of the Young diagram of $\lambda$ write its content:
\begin{align}\label{yd:qt}
\ytableausetup
{mathmode, boxframe=normal, boxsize=2.1em}
\begin{ytableau}
\ss        1 & \ss  q & \ss  q^2 & \ss  q^3 & \ss  q^4 \\
\ss        t^{-1} &  \ss  q t^{-1} & \ss  q^2 t^{-1}  \\
\ss        t^{-2} & \ss  q t^{-2} 
\end{ytableau}
\end{align}
\begin{dfn}
Let $\lambda\vdash k$ be a partition, we say that $\square\in \lambda$ is the $i$-th box of $\lambda$ if  $\square$ is located on the $i$-th position in the reading order\footnote{Since we will be working with symmetric functions $f(x_1 \dots x_k)$ the order in which we associate $x_i$ with a particular $\square$ in $\lambda$ does not matter but it is convenient to fix it.}. For a function $f(x_1 \ldots x_k)$ we define  evaluations {\normalfont ev$_\lambda$}:
\begin{align}\label{eq:ev}
    {\normalfont \ev_\lambda}: \quad f(\dots x_i \dots) \mapsto 
    f(\dots \chi_\square \dots)
\end{align}    
which means that each $x_i$ is replaced with the content of the $i$-th box of $\lambda$.
\end{dfn}
For example, if we need to compute $\ev_\lambda\left( f(\dots x_i \dots)\right)$ with $\lambda=(532)\vdash 10$ we first assign $x$'s to the boxes as follows:
\begin{align}\label{yd:x}
\ytableausetup
{mathmode, boxframe=normal, boxsize=1.5em}
\begin{ytableau}
       \ss x_1 & \ss x_2 & \ss x_3 & \ss x_4 & \ss x_5 \\
       \ss x_6 &  \ss x_7 & \ss x_8  \\
       \ss x_9 & \ss x_{10}
\end{ytableau}
\end{align}
and then substitute for $x$'s the values of the contents of their boxes. We would like to apply $\ev_\lambda$ to symmetric functions in $\mathcal{A}$, however, due to the poles at $x_i=q t^{-1} x_j$ in \eqref{eq:A_element} we need to take extra care. This can be realized using an intermediate step:
\begin{align}\label{eq:evxy}
    \ev_\lambda (f(x_1 \ldots x_k)) =
    \ev^y \left(\ev^x_\lambda (f(x_1\ldots x_k))\right) 
    ,    \qquad \lambda \vdash k
   \end{align}
where
\begin{align}\label{eq:ev_x}
    \ev^x_\lambda (f(x_1\ldots x_k))  &= f(y_1, q y_1  \ldots  q^{\lambda_1-1} y_1 ,y_2, q y_2  \ldots q^{\lambda_2-1} y_2  \ldots
    y_{\ell(\lambda)}, q y_{\ell(\lambda)} \ldots q^{\lambda_{\ell(\lambda)}-1} y_{\ell(\lambda)}  \\
    \label{eq:ev_y}
    \ev^y (g(y_1\ldots y_j))  &= g(y,t^{-1} y \ldots t^{j-1}  y )
\end{align}
Note that for $F\in \mathcal{A}_k$, $\ev^x_\lambda (F(x_1 \ldots x_k))$ will produce a function of $y_1 \ldots  y_{\ell(\lambda)}$ which has no poles at $y_i=t^{-1} y_j$ due to the wheel conditions \eqref{eq:p_wheel}. The evaluation maps \eqref{eq:ev}, \eqref{eq:ev_x} and \eqref{eq:ev_y} can be extended to the skew diagrams $\lambda/ \mu$, s.t. $|\lambda|-|\mu|=k$:
\begin{align}\label{eq:ev_skew}
    \ev_{\lambda/ \mu}:\quad f(\ldots x_i \ldots) \mapsto 
    f(\dots \chi_\square \dots)
\end{align}
and in the skew analogue of \eqref{yd:x} $x$'s are distributed in the reading order similar to \eqref{yd:x}. 

Let o$(\lambda)$ and i$(\lambda)$ be the sets of coordinates of boxes corresponding to outer and inner corners of a Young diagram of $\lambda$ respectively. In order to explain this more precisely we consider an example with $\lambda=(532)$. The locations of the o$(\lambda)$ boxes and i$(\lambda)$ boxes are indicated on the left and on the right pictures respectively:
\begin{center}
\ytableausetup
{mathmode, boxframe=normal, boxsize=1em}
o$(\lambda): \quad$ 
\begin{ytableau}
        \, & \, & \, & \, & \, & \none
       \\
        \, &  \, & \, & \none & \none & \none[\bullet]
       \\
        \, & \,  & \none & \none[\bullet]
       \\ 
       \none & \none & \none[\bullet]
\end{ytableau}
\qquad \, \qquad
i$(\lambda): \quad$
\begin{ytableau}
        \, & \, & \, & \, & \, & \none[\bullet]
       \\
        \, &  \, & \, & \none[\bullet]
       \\
        \, & \,  & \none[\bullet]
       \\ 
       \none[\bullet]
\end{ytableau}
\end{center}
In this example we have o$(\lambda)=\{(2,6),(3,4),(4,3)\}$ and i$(\lambda)=\{(1,6),(2,4),(3,3),(4,1)\}$.
\begin{dfn}
Define a combinatorial factor $d_{\lambda/\mu}$:
\begin{align*}
    d_{\lambda/\mu}:=
    \left(\frac{(1-q)(1-t^{-1})}{1-qt^{-1}}\right)^{|\lambda/\mu|}
    \prod_{\square \in \lambda/\mu}\frac{\prod_{\square' \in {\normalfont{\text{o}}}(\lambda)} (1-\chi_{\square'}/\chi_\square)}{\prod_{\square' \in {\normalfont\text{i}}(\lambda)} (1-\chi_{\square'}/\chi_\square)}
\end{align*}
\end{dfn}
Let us turn to the \emph{evaluation representation} \cite{FT-shuffle,SV-Hall,Ng_moduli} of the shuffle algebra $\mathcal{A}$ on a graded vector space $\mathcal{F}$ whose basis vectors (assumed to be orthonormal) are labelled by integer partitions and the grading is determined by the weight of partitions.
\begin{prop}
To $F(x_1 \ldots x_k)\in \mathcal{A}_k$ we associate an infinite dimensional matrix whose rows and columns are labelled by partitions $\lambda$ and $\mu$:
\begin{align}\label{eq:sh_rep}
    \bra{\lambda} F(x_1 \ldots x_k) \ket{\mu}
    =\delta_{|\lambda/\mu|=k}\,d_{\lambda/\mu}{\normalfont\ev}_{\lambda/ \mu}(F(x_1 \ldots x_k))
\end{align}
We set the right hand side of \eqref{eq:sh_rep} to zero if $\lambda/\mu$ is not a skew partition. The map from $\mathcal{A}_k$ to such matrices defines a representation of $\mathcal{A}$.
\end{prop}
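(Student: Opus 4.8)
The plan is to show that \eqref{eq:sh_rep} is an algebra homomorphism, i.e.\ that it sends the shuffle product to matrix multiplication: writing $\widehat F_{\lambda,\mu}=\bra{\lambda}F\ket{\mu}$, I must prove $\widehat{F*G}=\widehat F\,\widehat G$ for $F\in\mathcal{A}_k$, $G\in\mathcal{A}_l$. Each matrix has finitely supported columns (for fixed $\mu$ only finitely many $\lambda$ give a skew shape $\lambda/\mu$ of fixed size), so the matrix product is well defined, while the individual entries are finite by the two-step evaluation \eqref{eq:evxy}--\eqref{eq:ev_y}, the wheel conditions \eqref{eq:p_wheel} killing the apparent poles of \eqref{eq:A_element}. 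Comparing $(\lambda,\mu)$ entries, the claim reduces to
\[
d_{\lambda/\mu}\,\ev_{\lambda/\mu}(F*G)=\sum_{\mu\subseteq\nu\subseteq\lambda}d_{\lambda/\nu}\,\ev_{\lambda/\nu}(F)\;d_{\nu/\mu}\,\ev_{\nu/\mu}(G),
\]
the sum running over intermediate partitions with $|\lambda/\nu|=k$ and $|\nu/\mu|=l$.

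First I would expand the left-hand side with the subset form \eqref{eq:sh_sub} of the shuffle product, evaluated at the contents of the $k+l$ boxes of $\lambda/\mu$: this gives a sum over $k$-subsets $S$ of $\ev_S(F)\,\ev_{S^c}(G)\prod_{i\in S,\,j\in S^c}\zeta(\chi_i/\chi_j)$. The decisive point is a localization: since $\zeta$ in \eqref{eq:zeta} vanishes exactly when its argument equals $q^{-1}$ or $t$, the factor $\zeta(\chi_{\square_F}/\chi_{\square_G})$ is zero whenever an $F$-box lies immediately to the left of, or immediately above, a $G$-box. A short argument then shows that a split with none of these adjacencies is precisely one in which, in every row, the $G$-boxes form the left-justified segment and the column condition forces $\nu_b\ge\nu_{b+1}$; i.e.\ $S$ is the box set of $\lambda/\nu$ and $S^c$ that of $\nu/\mu$ for a genuine intermediate $\nu$. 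The same analysis shows that for these surviving splits no diagonal adjacency occurs, so none of the remaining $\zeta$ factors meets its poles at $1$ or $tq^{-1}$, and each term is finite.

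The main obstacle is precisely the vanishing of the \emph{non}-valid terms: such a term may carry a $\zeta$-zero and a $\zeta$-pole simultaneously, so one cannot simply substitute box contents. I would resolve this by evaluating honestly through the intermediate map \eqref{eq:ev_x}: after the $q$-geometric specialization within rows the wheel conditions of $F$ and $G$ cancel the $qt^{-1}$-poles, and the residual indeterminacies are removed in the limit, leaving only the valid $\nu$. Granting this, the identity becomes term-by-term in $\nu$, namely $d_{\lambda/\mu}\prod_{\square_F\in\lambda/\nu,\,\square_G\in\nu/\mu}\zeta(\chi_{\square_F}/\chi_{\square_G})=d_{\lambda/\nu}\,d_{\nu/\mu}$.

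Finally I would prove this coefficient identity. The prefactors $\bigl((1-q)(1-t^{-1})/(1-qt^{-1})\bigr)^{|\cdot|}$ match since the sizes add, and writing $\Phi_\pi(\chi):=\prod_{\square'\in\mathrm{o}(\pi)}(1-\chi_{\square'}/\chi)\big/\prod_{\square'\in\mathrm{i}(\pi)}(1-\chi_{\square'}/\chi)$ and splitting $\lambda/\mu=(\lambda/\nu)\sqcup(\nu/\mu)$, the $\lambda/\nu$-factors cancel and the claim collapses to the per-box statement
\[
\Phi_\lambda(\chi)\prod_{\square_F\in\lambda/\nu}\zeta(\chi_{\square_F}/\chi)=\Phi_\nu(\chi)
\]
for every content $\chi$. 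This I would establish by induction on $|\lambda/\nu|$, adding one box $\square_0$ (of content $\chi_0$) at an addable corner at a time; the single-box step $\zeta(\chi_0/\chi)=\Phi_\nu(\chi)/\Phi_{\nu\cup\{\square_0\}}(\chi)$ is an elementary check, since adding $\square_0$ alters the corner sets $\mathrm{o},\mathrm{i}$ only locally and the four linear factors of $\zeta$ reproduce exactly that local rearrangement. Combining the localization with this factorization yields $\widehat{F*G}=\widehat F\,\widehat G$, establishing the representation.
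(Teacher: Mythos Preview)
Your argument is correct and follows the same route as the paper's proof sketch: expand $F*G$ in subset form, use the vanishing of $\zeta(\chi_{\square_F}/\chi_{\square_G})$ at the two adjacency patterns to localize the sum onto splits $S=\lambda/\nu$, $S^c=\nu/\mu$, and then verify the coefficient identity $d_{\lambda/\mu}\prod_{i\in S,\,j\in S^c}\zeta(\chi_i/\chi_j)=d_{\lambda/\nu}d_{\nu/\mu}$. The paper records exactly these three steps and leaves the last one as ``it remains to check''; your single-box induction $\zeta(\chi_0/\chi)=\Phi_\nu(\chi)/\Phi_{\nu\cup\{\square_0\}}(\chi)$ is the natural way to carry it out, and your explicit remark that the non-valid subsets may produce $0\cdot\infty$ and must be handled through the two-step evaluation~\eqref{eq:evxy} is a point the paper's sketch passes over in silence.
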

Let us sketch the proof of this statement.  
We need to show that:
\begin{align}\label{eq:sh_hom}
    \bra{\lambda} 
    F(x_1\dots x_k) * G(x_1\dots x_l) 
    \ket{\mu} =
    \sum_{\nu}
    \bra{\lambda} 
    F(x_1\dots x_k)\ket{\nu} 
    \bra{\nu}
    G(x_1\dots x_l) 
    \ket{\mu}
\end{align}
Due to \eqref{eq:sh_rep} the summation on the r.h.s. of \eqref{eq:sh_hom} contains non-zero terms only for such partitions $\nu$ that both $\lambda/\nu$ and $\nu/ \mu$ correspond to skew Young diagrams with $k$ and $l$ boxes respectively. In order to see that the same summation occurs on the l.h.s. write the shuffle product in the form \eqref{eq:sh_sub} and consider the map ev$_{\lambda/\mu}$ applied to each summand:
\begin{align}\label{eq:ev_FG}
\ev_{\lambda/\mu} 
\left(F(x_1\dots,x_k) * G(x_1\dots,x_l) \right)=
\sum_{\substack{S\subseteq [1\dots k+l]\\ |S|=k}}
\ev_{\lambda/\mu}\left(
F(x_S)G(x_{S^c}) \prod_{\substack{i\in S\\j\in S^c}} \zeta\left(\frac{x_i}{x_j}\right)\right)
\end{align}
Fix an $S\in [1\dots k+l]$ and compute the corresponding term on the r.h.s. of \eqref{eq:ev_FG}. The evaluation of $\zeta$ factors reads:
$$
\ev_{\lambda/\mu} \zeta\left(\frac{x_i}{x_j}\right)
= 
\zeta\left(\frac{\chi_{\square_i}}{\chi_{\square_j}}\right)
=
\zeta\left(q^{a_i-a_j}t^{b_j-b_i}\right)
=0\quad \text{if }
\begin{cases} 
    a_j-a_i=1 \quad  b_i=b_j \\
    b_j-b_i=1 \quad   a_i= a_j 
  \end{cases},
  \qquad i\in S, ~~j\in S^c
$$ 
where $\square_i=(a_i,b_i)$ and $\square_j=(a_j,b_j)$ are the $i$-th and $j$-th boxes of $\lambda/\mu$ respectively.
This means that if the boxes $\square_i$ and $\square_j$ are located in the same row ($b_i=b_j$) and are bordering each other then $\square_i$ must be on the right to $\square_j$ for the $\zeta$ factor not to vanish. It also means that if the boxes $\square_i$ and $\square_j$ are located in the same column ($a_i=a_j$) and are bordering each other then $\square_i$ must be on the bottom to $\square_j$ for the $\zeta$ factor not to vanish. Together these conditions mean that, if the $x$'s are distributed over the boxes of $\lambda/\mu$ in the reading order, the non-zero terms of ev$_{\lambda/\mu}(F*G)$ are such that the sets $S$ and $S^c$ split the boxes of $\lambda/\mu$ into two regions whose boundary defines another partition $\nu$. For $\lambda=(442)$ and $\mu=(31)$, $k=3$ and $l=3$ an example of a term, on the r.h.s. of \eqref{eq:ev_FG}, which  gives a non-zero contribution under $\ev_{\lambda/ \mu}$ is:
\begin{align}\label{yd:S}
\ytableausetup
{mathmode, boxframe=normal, boxsize=1.5em}
\begin{ytableau}
       \none  & \none  & \none  & *(gray)\ss x_1 \\
       \none  & *(gray)\ss  x_2 &\ss   x_3 &\ss  x_4  \\
       *(gray)\ss  x_5 &\ss  x_6
\end{ytableau}
\qquad S=\{3,4,6\}, \quad S^c=\{1,2,5\}
\end{align}
where in the diagram we shaded those $x$'s which correspond to the set $S^c$. The partition $\nu$ corresponding to this term is $\nu = (421)$. Continuing with \eqref{eq:ev_FG} in this particular case we have: 
\begin{multline*}
\ev_{(442)/(31)} 
\left(F(x_1,x_2, x_3) * G(x_1,x_2, x_3) \right)=
\cdots +
\ev_{(442)/(31)}
\left(F(x_S)G(x_{S^c}) \prod_{\substack{i\in S\\j\in S^c}} \zeta\left(\frac{x_i}{x_j}\right)\right)+\cdots\\
=
\cdots +
\ev_{(442)/(421)}\left(F(x_1,x_2,x_3)\right)
\ev_{(421)/(31)}\left(G(x_1,x_2,x_3)\right) \ev_{(442)/(31)}\left(
\prod_{\substack{i\in S\\j\in S^c}} 
\zeta\left(\frac{x_i}{x_j}\right)\right)+\cdots
\end{multline*}
In order to verify \eqref{eq:sh_hom} it remains to check that:
$$
\ev_{\lambda/\mu}
\left(\prod_{\substack{i\in S\\j\in S^c}}\zeta\left(\frac{x_i}{x_j}\right) \right)
=\frac{d_{\lambda/\nu} d_{\nu/\mu}}{d_{\lambda/\mu}}
$$

\subsection{Isomorphism of \texorpdfstring{$\Ac$}{TEXT} and \texorpdfstring{$\Lambda$}{TEXT}.}
Let us now restrict our attention to $\mathcal{A}^\circ$ and derive a graded algebra isomorphism between $\Ac$ and $\Lambda$ \cite{FT-shuffle}. The representation \eqref{eq:sh_rep} allows us to construct the vector space $\mathcal{F}$ as a module starting from the {\it vacuum vector} $\ket{\varnothing}$. More precisely, for any partition $\lambda\vdash k$ one can construct shuffle algebra elements $F_\lambda \in \mathcal{A}^\circ_k$ such that $F_\lambda \ket{\varnothing} = \ket{\lambda}$. Indeed, by \eqref{eq:sh_rep} we have:
\begin{align}\label{eq:G_vac}
    G(x_1 \ldots x_k) \ket{\varnothing} = \sum_{\lambda\vdash k} 
    d_{\lambda}\ev_{\lambda}(G(x_1 \ldots x_k)) \ket{\lambda},
    \qquad G\in \mathcal{A}^\circ_k
\end{align}
By choosing a basis in $\mathcal{A}^\circ_k$, for example $E_\mu(x;q)$ for all $\mu\vdash k$, we can write $p(k)$ such equations (see explicit evaluations of $E_k(x;q)$ in \eqref{eq:ev_E_prod}). Thus we can solve this system for  $\ket{\lambda}$ in terms of the elements of $\Ac_k$. In other words there exists an element  $F_\lambda \in \mathcal{A}^\circ_k$ such that:
\begin{align}
    \label{eq:F}
    F_\lambda(x_1\dots x_k) \ket{\varnothing} = \ket{\lambda}, \qquad \lambda \vdash k
\end{align}
Applying the dual vector $\bra{\mu}$ to this equation and using (\ref{eq:sh_rep}) gives us:
\begin{align}\label{eq:evF}
    \ev_\mu(F_\lambda(x_1\dots x_k))= \delta_{\lambda,\mu} \frac{1}{d_\lambda}
\end{align}
In particular, these equations can be used to compute $F_\lambda$. From \eqref{eq:G_vac} and \eqref{eq:F} it follows that for any element of $\Ac_k$ we have the expansion in the basis of $F_\lambda$:
\begin{align}\label{eq:P_F}
    G(x_1\ldots x_k)  = \sum_{\lambda\vdash k} 
    d_{\lambda}\ev_{\lambda}(G(x_1\ldots x_k)) F_\lambda,
    \qquad G\in \mathcal{A}^\circ_k
\end{align}
Recall the coefficients $\psi'_{\lambda/\mu}$ and $c_\lambda$  from \eqref{eq:psi-prime} and \eqref{eq:bc-coef} respectively and define:
\begin{align}
\label{eq:n-coef}    
n(\lambda):= \sum_{i=1}^{\ell(\lambda)} (i-1)\lambda_i
\end{align}
We have the following two Lemmas \cite{FT-shuffle}.
\begin{lemma}\label{lem_EF}
The shuffle product $E_k(q)*F_\mu$ expands in the basis of $F_\lambda$ as follows:
\begin{align}\label{eq:E_F}
    E_k(q)*F_\mu=
    \sum_{\lambda} \phi_{\lambda/\mu} F_\lambda,
\end{align}
where the summation runs over $\lambda$ such that the skew partitions $\lambda/\mu$ are all vertical strips with $k$ boxes and 
\begin{align}\label{eq:phi_psi}
    \phi_{\lambda/\mu}:= (1-t)^{k}q^{n(\lambda')-n(\mu')} \frac{c_\mu}{c_\lambda} \psi'_{\lambda/\mu}
\end{align}
\end{lemma}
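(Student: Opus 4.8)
The plan is to use the representation \eqref{eq:sh_rep} to turn the abstract expansion \eqref{eq:E_F} into a concrete content evaluation, and then to compute that evaluation explicitly. Since the functions $\{F_\lambda:\lambda\vdash k+|\mu|\}$ form a basis of $\Ac_{k+|\mu|}$ and the shuffle product of two elements of $\Ac$ again lies in $\Ac$, the expansion \eqref{eq:E_F} is well posed and its coefficients $\phi_{\lambda/\mu}$ are uniquely determined. To extract them I would apply both sides of \eqref{eq:E_F} to the vacuum $\ket\varnothing$. Using that the map $F\mapsto(\text{matrix in }\eqref{eq:sh_rep})$ is an algebra homomorphism (the Proposition preceding this Lemma) together with $F_\nu\ket\varnothing=\ket\nu$ from \eqref{eq:F}, the left-hand side becomes $E_k(q)\ket\mu$ while the right-hand side becomes $\sum_\lambda\phi_{\lambda/\mu}\ket\lambda$. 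Pairing with the orthonormal $\bra\lambda$ then gives
\[
\phi_{\lambda/\mu}=\bra\lambda E_k(q)\ket\mu=\delta_{|\lambda/\mu|=k}\,d_{\lambda/\mu}\,\ev_{\lambda/\mu}\bigl(E_k(q)\bigr),
\]
so the whole Lemma reduces to evaluating $d_{\lambda/\mu}\,\ev_{\lambda/\mu}(E_k(q))$ on a skew shape.

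Next I would establish the support condition, namely that this evaluation vanishes unless $\lambda/\mu$ is a vertical $k$-strip. Substituting $x_i\mapsto\chi_{\square_i}=q^{a_i-1}t^{1-b_i}$ into the factorised form \eqref{eq:E_def} of $E_k(q)$, I would isolate the numerator factor $(x_i-q^{-1}x_j)$. When $\square_i$ and $\square_j$ are horizontally adjacent in the same row (so $a_j=a_i+1$, $b_j=b_i$, with $i<j$ in reading order) this factor evaluates to $\chi_{\square_i}(1-1)=0$, while a direct check shows that neither denominator factor $(x_i-qt^{-1}x_j)$ nor $(x_i-tq^{-1}x_j)$ degenerates at the same contents over $\mathbb F$. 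Since $E_k(q)\in\mathcal A_k$ the evaluation \eqref{eq:ev}--\eqref{eq:ev_skew} is finite (the $qt^{-1}$-poles are removed by the wheel conditions in the two-step procedure $\ev^x,\ev^y$), so a genuine numerator zero that is not matched by any pole forces the product to vanish. Because the boxes of a skew shape in any fixed row occupy consecutive columns, a row with two boxes necessarily produces such an adjacent pair; hence a nonzero value requires at most one box per row, i.e. a vertical strip, matching the summation range in \eqref{eq:E_F}.

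The remaining, and main, task is to show that for a vertical strip $\lambda/\mu$ the explicit value equals the right-hand side of \eqref{eq:phi_psi}. Here I would use the explicit evaluation of $E_k$ (as in \eqref{eq:ev_E_prod}) or, equivalently, expand the $\binom{k}{2}$ pairwise ratios in \eqref{eq:E_def} over the boxes, which now all lie in distinct rows, and then multiply by the combinatorial factor $d_{\lambda/\mu}$ with its products over the inner and outer corners $\mathrm i(\lambda),\mathrm o(\lambda)$. The aim is to reorganise this product of content ratios $1-\chi_{\square'}/\chi_\square$ into the arm--leg form \eqref{eq:psi-prime} of $\psi'_{\lambda/\mu}$ together with the corner products $c_\mu/c_\lambda$ of \eqref{eq:bc-coef} and the monomial prefactor $(1-t)^k q^{n(\lambda')-n(\mu')}$. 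I expect this matching to be the principal obstacle: it is a purely combinatorial $q,t$-identity, and the cleanest route is to cancel factors telescopically row by row and to recognise the surviving numerator/denominator pairs as exactly those indexed by $i<j$ with $\lambda_i=\mu_i$ and $\lambda_j=\mu_j+1$ in \eqref{eq:psi-prime}. A useful structural check on the final answer is that, under the isomorphism $\Ac\cong\Lambda$ in which $E_k(q)$ corresponds to $e_k$, the identity \eqref{eq:E_F} must reproduce the Macdonald Pieri rule \eqref{eq:Pieri}; this pins the proportionality to $\psi'_{\lambda/\mu}$ and leaves only the scalar normalisation to be confirmed by the content evaluation.
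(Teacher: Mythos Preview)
Your approach is essentially the same as the paper's: reduce \eqref{eq:E_F} to the matrix element $\bra\lambda E_k(q)\ket\mu=d_{\lambda/\mu}\,\ev_{\lambda/\mu}(E_k(q))$ via \eqref{eq:F} and \eqref{eq:sh_rep}, use the factorised form \eqref{eq:E_def} to see the evaluation vanishes unless $\lambda/\mu$ is a vertical strip, and then match the surviving content product (the paper records it as \eqref{eq:ev_E_prod}) times $d_{\lambda/\mu}$ with \eqref{eq:phi_psi}. One caveat: your closing ``structural check'' via the isomorphism $\Ac\cong\Lambda$ and the Pieri rule is circular here, since in the paper that isomorphism (Lemma~\ref{lem:iso}) is \emph{derived from} the present Lemma; it is fine as a heuristic but cannot replace the direct content computation.
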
 

Let us sketch the proof of this statement. Due to \eqref{eq:F} showing \eqref{eq:E_F} is equivalent to showing:
\begin{align}
    \label{eq:E_F_vac}
    E_k(q)\ket{\mu}=
    \sum_{\lambda} \phi_{\lambda/\mu} \ket{\lambda}
\end{align}
By \eqref{eq:sh_rep} we have:
\begin{align}
    \label{eq:phi}
    \phi_{\lambda/\mu}  = d_{\lambda/\mu} \ev_{\lambda/\mu}\left(E_k(q) \right)
\end{align}
Recall that $\ev$ is computed in two steps \eqref{eq:ev_x} and \eqref{eq:ev_y}. Compute $\ev^x_{\lambda/\mu} $:
\begin{align*}
    \ev^x_{\lambda/\mu} 
   \left( \prod_{1\leq i< j\leq k} \frac{(x_i-q x_j)(x_i-q^{-1} x_j)}{(x_i-q t^{-1} x_j)(x_i-t q^{-1} x_j)}\right)
    = 0\quad \text{if $\exists$ $\square,\square'\in \lambda/\mu$, s.t.: $\chi_\square/\chi_{\square'}=q$}
\end{align*}
This implies that the skew partition $\lambda/\mu$ cannot contain more than one box in a single row and thus is a vertical strip. This determines that the summation set over $\lambda$ in \eqref{eq:E_F_vac} and \eqref{eq:E_F} must be given by the set of all vertical strips with $k$ boxes. Computing ev$_{\lambda/\mu}$ of $ E_k(q)$ gives:
\begin{align}\label{eq:ev_E_prod}
    \ev_{\lambda/\mu} \left(E_k(q) \right)= 
    t^{-k(k-1)/2}\frac{(1-q t^{-1})^k}{(1-q)^k}
    \prod_{\square,\square' \in \lambda/\mu}
    \frac{1-
    q \chi_{\square'}/\chi_{\square}}{1-
    q t^{-1}  \chi_{\square'}/\chi_{\square}}
\end{align}
In order to complete the proof one needs to multiply the expression on the r.h.s. in \eqref{eq:ev_E_prod} by $d_{\lambda/\mu}$ and compare it with the definition of $\phi$ from \eqref{eq:phi_psi}.

\begin{lemma}\label{lem:iso}
We have an isomorphism $\iota$ of algebras $\mathcal{A}^\circ$ and $\Lambda$ given by matching $F_\lambda$ with Macdonald functions $P_\lambda$:
\begin{align}
    \label{eq:iso}
    \iota:\quad F_\lambda \mapsto 
    \frac{c_\lambda }{q^{n(\lambda')}(1-t)^{|\lambda|}}
    P_\lambda
\end{align}
\end{lemma}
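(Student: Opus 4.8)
The plan is to show that the linear map $\iota$ defined on the basis $\{F_\lambda\}$ by \eqref{eq:iso} is an isomorphism of graded algebras, i.e.\ that it is bijective, grading-preserving, and intertwines the shuffle product $*$ on $\Ac$ with the ordinary product on $\Lambda$. Bijectivity and the grading come for free: the $F_\lambda$ with $\lambda\vdash k$ form a basis of $\Ac_k$, the $P_\lambda$ form a basis of $\Lambda_k$, and the scalar $c_\lambda/\bigl(q^{n(\lambda')}(1-t)^{|\lambda|}\bigr)$ is a nonzero element of $\mathbb{F}$ (each factor of $c_\lambda$ in \eqref{eq:bc-coef} is a nonzero rational function of $q,t$). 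Hence $\iota$ carries a homogeneous basis to a homogeneous basis, and the only real content is multiplicativity.

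The central computation is to apply $\iota$ to the expansion of Lemma \ref{lem_EF}. Starting from \eqref{eq:E_F}, substituting the value of $\phi_{\lambda/\mu}$ from \eqref{eq:phi_psi} together with \eqref{eq:iso}, and using $k=|\lambda|-|\mu|$, I expect the factors $c_\lambda$, $q^{n(\lambda')}$ and the powers of $(1-t)$ to telescope, leaving
\begin{align*}
\iota\bigl(E_k(q)*F_\mu\bigr)
= \frac{c_\mu}{q^{n(\mu')}(1-t)^{|\mu|}}\sum_{\lambda}\psi'_{\lambda/\mu}\,P_\lambda,
\end{align*}
where $\lambda$ ranges over vertical $k$-strips over $\mu$. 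By the Macdonald Pieri rule \eqref{eq:Pieri} the remaining sum is exactly $e_k P_\mu$, so
\begin{align*}
\iota\bigl(E_k(q)*F_\mu\bigr)=e_k\,\iota(F_\mu).
\end{align*}
Taking $\mu=\varnothing$ (with $F_\varnothing=1$, $P_\varnothing=1$, $c_\varnothing=1$, and the only vertical $k$-strip over $\varnothing$ being $(1^k)$, consistent with \eqref{eq:P-e}) gives in particular $\iota(E_k(q))=e_k$. Since the $F_\mu$ form a basis, linearity upgrades this to the intertwining relation
\begin{align*}
\iota\bigl(E_k(q)*X\bigr)=e_k\,\iota(X),\qquad X\in\Ac,\quad k\ge 1.
\end{align*}

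To pass from generators to the whole algebra I would use that $(\Ac,*)$ is commutative and that the ordered products $E_\lambda(q)=E_{\lambda_1}(q)*\cdots*E_{\lambda_{\ell(\lambda)}}(q)$ form a basis (as recorded after Definition \ref{def:H-el}), so the family $\{E_k(q):k\ge 1\}$ generates $(\Ac,*)$. Writing an arbitrary $A\in\Ac$ as a shuffle-polynomial in the $E_k(q)$ and applying the intertwining relation repeatedly (valid since $*$ is commutative and associative) yields $\iota(A*X)=\iota(A)\,\iota(X)$ for every $X$; setting $X=1$ identifies $\iota(A)$ with the corresponding ordinary polynomial in the $e_k$, and hence $\iota(A*B)=\iota(A)\,\iota(B)$ for all $A,B$. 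Combined with bijectivity, this proves $\iota$ is an isomorphism of graded algebras matching $F_\lambda$ with a multiple of $P_\lambda$ as in \eqref{eq:iso}.

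The main obstacle is the bookkeeping in the central display: one must verify that the three families of scalars — $c_\lambda$, the $q$-powers $q^{n(\lambda')}$, and the $(1-t)$-powers — cancel precisely, so that the rescaled coefficient $\phi_{\lambda/\mu}\,\iota(F_\lambda)/P_\lambda$ collapses to the bare Pieri coefficient $\psi'_{\lambda/\mu}$ times a prefactor depending only on $\mu$, which is exactly $\iota(F_\mu)/P_\mu$. This is a direct but careful substitution driven by the identity $k=|\lambda|-|\mu|$. A secondary point to make rigorous is that the $E_k(q)$ genuinely generate $\Ac$ under $*$, which I would justify from the basis property of the $E_\lambda(q)$ together with the commutativity of $\Ac$.
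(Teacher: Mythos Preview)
Your proposal is correct and follows essentially the same route as the paper: both use Lemma~\ref{lem_EF} together with the scalar cancellation from \eqref{eq:phi_psi} and \eqref{eq:iso} to reduce to the Pieri rule \eqref{eq:Pieri}, obtain $\iota(E_k(q)*F_\mu)=e_k\,\iota(F_\mu)$ (and hence $\iota(E_k(q))=e_k$), and then bootstrap multiplicativity via the basis $\{E_\lambda(q)\}$ of $\Ac$. The only cosmetic difference is ordering---the paper first isolates $\iota(E_k(q))=e_k$ from the case $\mu=\varnothing$ and then runs the general computation, whereas you do the general computation first---but the content is identical.
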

\begin{proof}
We consider \eqref{eq:iso} to be a linear map and show that it is an isomorphism. A particular case of the map \eqref{eq:iso} is when $\lambda=(1^k)$. In this case the Macdonald function coincides with the elementary symmetric function $P_{(k)} = e_k$. Consider  \eqref{eq:E_F} with $\mu=\varnothing$, in this case the r.h.s. of \eqref{eq:E_F} contains a single term with $\lambda=(1^k)$. By \eqref{eq:F} we have $F_{\varnothing}=1$ and computing $\phi_{(1)^k/\varnothing}$ gives:
\begin{align}
    \label{eq:F1n}
    F_{(1^k)} = \prod_{i=1}^k\frac{1-t^i}{1-t} E_k(q)
\end{align}
Therefore a special case of the map \eqref{eq:iso} is:
\begin{align}\label{eq:iota_e}
    \iota: \quad E_k(q) \mapsto  e_k
\end{align}
where the factors depending on $t$ in \eqref{eq:F1n} canceled with the factors from \eqref{eq:iso}. Next we compute:
\begin{align*}
    \iota \left(E_k(q) * F_\mu \right)  
    &= \sum_\lambda \phi_{\lambda/\mu}  \iota\left( F_\lambda \right)   = 
    \frac{c_\mu }{q^{n(\mu')}(1-t)^{|\mu|}}
     \sum_\lambda \psi'_{\lambda/\mu} P_\lambda
    =
    \frac{c_\mu }{q^{n(\mu')}(1-t)^{|\mu|}}
    e_k P_\mu = \iota \left(E_k(q)\right)  \iota\left( F_\mu \right)
\end{align*}
where we used \eqref{eq:E_F} in the first equality, \eqref{eq:phi_psi} and \eqref{eq:iso} in the second equality, the Pieri formula \eqref{eq:Pieri} in the third equality and \eqref{eq:iso} and \eqref{eq:iota_e} in the fourth equality. Consider an expansion of $E_\lambda(q)$ in the basis $F_\mu$ and let $C_{\lambda,\mu}$ denote the expansion coefficients, then compute using linearity of $\iota$ and the above equation:
\begin{multline*}
\iota \left( E_k(q) * E_\lambda(q) \right)= \sum_\mu C_{\lambda,\mu}\, \iota \left( E_k(q) * F_\mu \right)= 
 \sum_\mu C_{\lambda,\mu}\, \iota \left( E_k(q) \right) *\iota\left( F_\mu\right)\\ = \iota \left( E_k(q) \right) *\iota\left(\sum_\mu C_{\lambda,\mu} F_\mu\right) 
 =\iota \left( E_k(q) \right)\iota \left( E_\lambda(q) \right)
\end{multline*}
From this we have $\iota(E_\lambda(q))\mapsto e_\lambda$ and by dimensionality argument we have an isomorphism.
\end{proof}
\begin{cor}\label{cor:iso}
For the functions $S_k$, $E_k(t^{-1}),H_k(t^{-1})$ we have:
\begin{align}
    \label{eq:iota_S}
    &\iota: \quad S_k \mapsto \frac{(1-q)^k}{(t-q)^k} p_k 
     \\
\label{eq:iota_E}    
    &\iota: \quad E_k(t^{-1}) \mapsto \frac{(1-q)^k}{(1-t)^k} g_k  \\
\label{eq:iota_H}    
    &\iota: \quad H_k(t^{-1}) \mapsto \frac{(1-q)^k}{(t-1)^k} g_k^*  
\end{align}
\end{cor}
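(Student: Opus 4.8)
The plan is to exploit that $\iota$ is an \emph{algebra} isomorphism (Lemma \ref{lem:iso}): because it intertwines the shuffle product with the ordinary product, it converts shuffle-exponentials into ordinary exponentials, so that $\iota(\exp_*(A)) = \exp(\iota(A))$ for any $A$. Hence I can apply $\iota$ term-by-term to the generating-function identities of Lemma \ref{lem:EH_gen} and compare the resulting exponents with the exponential definitions \eqref{eq:generating-e}--\eqref{eq:generating-g-dual} of $e_j$, $g_j$ and $g_j^*$. Since the power sums $p_r$ are algebraically independent, matching the coefficient of $v^r p_r$ in the two exponents pins down $\iota(S_r)$, and then $\iota(E_k(t^{-1}))$ and $\iota(H_k(t^{-1}))$ in turn.

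First I would determine $\iota(S_k)$. Taking $a=1$ in \eqref{eq:E-gen}, so that $q_a = q_1 = q$ by \eqref{eq:q123}, the factor $(1-q_a^r)/(1-q^r)$ collapses to $1$ and the identity reads $E(v;q) = \exp_*\bigl(\sum_{r>0}\tfrac{(-1)^{r+1}}{r}\tfrac{(t-q)^r}{(1-q)^r} v^r S_r\bigr)$. Applying $\iota$ and using $\iota(E_k(q)) = e_k$ from \eqref{eq:iota_e} gives $\sum_k v^k e_k$ on the left, which by \eqref{eq:generating-e} equals $\exp\bigl(\sum_r \tfrac{(-1)^{r+1}}{r} v^r p_r\bigr)$. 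Comparing the coefficient of $v^r$ in the two exponents yields $\tfrac{(t-q)^r}{(1-q)^r}\iota(S_r) = p_r$, which is exactly \eqref{eq:iota_S}.

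With $\iota(S_r) = \tfrac{(1-q)^r}{(t-q)^r} p_r$ in hand, I would substitute it into \eqref{eq:E-gen} and \eqref{eq:H-gen} for $a=2$, i.e.\ $q_a = t^{-1}$. In both cases the factor $(t-q)^r/(1-q_a)^r$ cancels the $(t-q)^r$ coming from $\iota(S_r)$, so the exponent coefficient becomes proportional to $\tfrac{1-t^{-r}}{1-q^r}\tfrac{(1-q)^r}{(1-t^{-1})^r}$. The one genuinely substantive computation is the sign-sensitive simplification $\tfrac{1-t^{-r}}{(1-t^{-1})^r} = (-1)^{r+1}\tfrac{1-t^r}{(1-t)^r}$, obtained by writing $1-t^{-r} = -t^{-r}(1-t^r)$ and $(1-t^{-1})^r = (-1)^r t^{-r}(1-t)^r$. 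For $E(v;t^{-1})$ the surviving $(-1)^{r+1}$ prefactor in \eqref{eq:E-gen} combines with this sign to give precisely $\tfrac{1}{r}\tfrac{1-t^r}{1-q^r}\bigl(\tfrac{1-q}{1-t}\bigr)^r$, which by \eqref{eq:generating-g} (under the rescaling $v \mapsto \tfrac{1-q}{1-t} v$) is the exponent generating $\sum_k v^k \tfrac{(1-q)^k}{(1-t)^k} g_k$; this establishes \eqref{eq:iota_E}. For $H(v;t^{-1})$ there is no $(-1)^{r+1}$ prefactor in \eqref{eq:H-gen}, so the surviving sign matches instead the rescaling $v \mapsto \tfrac{1-q}{t-1} v$ in the $g_j^*$ generating function \eqref{eq:generating-g-dual}, giving \eqref{eq:iota_H}.

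The only obstacle here is this bookkeeping of signs, together with the distinction between $(1-t)$ in the $E$-case and $(t-1)$ in the $H$-case; once the identity for $\tfrac{1-t^{-r}}{(1-t^{-1})^r}$ is fixed everything else follows purely by matching the exponents coefficient-by-coefficient, using algebraic independence of the $p_r$.
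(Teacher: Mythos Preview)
Your proof is correct and follows essentially the same approach as the paper: both use that $\iota$ is an algebra isomorphism to transport the shuffle-exponential formulas \eqref{eq:E-gen}--\eqref{eq:H-gen} to ordinary exponentials, first extracting $\iota(S_k)$ from the $q_a=q$ case via the known $\iota(E_k(q))=e_k$, and then reading off $\iota(E_k(t^{-1}))$ and $\iota(H_k(t^{-1}))$ from the $q_a=t^{-1}$ case by comparison with \eqref{eq:generating-g}--\eqref{eq:generating-g-dual}. Your write-up is in fact more explicit about the sign manipulation $\tfrac{1-t^{-r}}{(1-t^{-1})^r}=(-1)^{r+1}\tfrac{1-t^r}{(1-t)^r}$, which the paper leaves implicit.
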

\begin{proof}
The first map \eqref{eq:iota_S} follows from \eqref{eq:iota_e} and matching the generating function of $E_k(q)$ and the generating function \eqref{eq:generating-e} of the elementary symmetric functions $e_k$:
\begin{align*}
\iota    \left(    E(v;q)\right)= \exp
    \left(
    \sum_{r>0}\frac{(-1)^{r+1}}{r} v^{r} p_r\right)
\end{align*}
Due to \eqref{eq:E-gen} and the fact that $\iota$ is an isomorphism it implies \eqref{eq:iota_S}. 
The two other maps \eqref{eq:iota_E} and \eqref{eq:iota_H} can be verified by applying $\iota^{-1}$ to the generating functions of $E_k(t^{-1})$ and $H_k(t^{-1})$ and then comparing the result with the generating functions \eqref{eq:generating-g} and \eqref{eq:generating-g-dual} of $g_k$ and $g_k^*$. 
\end{proof}
In Lemma \ref{lem:iso} and Corollary \ref{cor:iso} we omitted the dependence of symmetric functions of $\Lambda$ on the alphabet. This will be important in a later section. Consider a pair $F\in \Ac$ and $f\in \Lambda$ such that $\iota(F)=f$ and let $\Lambda$ be the ring of symmetric function in the alphabet $(z)=(z_1,z_2\ldots )$, then we will write:
$$
\iota_z(F)= f(z_1,z_2\ldots )
$$

\section{Six vertex model and the shuffle algebra \texorpdfstring{$\mathcal{A}^\circ$}{}}\label{sec:six_vertex}
In this section we explain one of the main results of the paper using the example of the six vertex model. We start with the Boltzmann weights of the fundamental $R$-matrix of $U_{t}(\widehat{sl}_2)$ and consider the associated square lattice partition functions. The configurations of these partition functions involve lattice paths of a single colour, labelled ``1''. The conic partition function, which is discussed in the introduction, can be expressed as a sum of planar partition functions with identified boundary conditions. In the algebraic language the conic partition function equals to the trace of a product of $R$-matrices. We use the notion of the {\it $F$-basis} to rewrite this trace as a symmetrization, w.r.t. the spectral parameters $x_1\ldots x_N$, of a fixed partition function with ordered labels on the boundaries. This produces a shuffle product formula for the conic partition function and leads us to a proof of Theorem \ref{thm:Z_intro} in the special case of $m=0$ and $n=1$.
\subsection{The six vertex model}
We consider a grid made up of a finite number of horizontal lines oriented from right to left and the same number of vertical lines oriented from top to bottom. We attach {\it spectral} the parameter $x_i(y_i)$  to each $i^{th}$ horizontal (vertical) line counting from the top (left):
\begin{align}
\label{tikz:latticemodel}
\begin{tikzpicture}[scale=0.8,baseline=(current  bounding  box.center)]
\foreach\j/\lab in {1/,2/,3/,4/}
\draw[invarrow=0.95] (\j,0.5) -- node[pos=1,above] {$\scriptstyle \lab$} (\j,4.5);
\foreach\i/\lab in {4/,3/,2/,1/}
\draw[invarrow=0.95] (0.5,\i) -- node[pos=1,right] {$\scriptstyle \lab$} (4.5,\i);
\foreach\i/\lab in {1/x_N,2/ ,3/x_2 ,4/x_1}
\node at (5.2,\i) {$\scriptstyle \lab$};
\foreach\i/\lab in {1/y_1,2/y_2,3/ ,4/y_N}
\node at (\i,5.2) {$\scriptstyle \lab$};
\foreach\i/\lab in {4/,3/,2/ ,1/}
\node at (0.2,\i) {$\scriptstyle \lab$};
\foreach\i/\lab in {4/,3/,2/ ,1/}
\node at (\i,0.3) {$\scriptstyle \lab$};
\end{tikzpicture}
\end{align}
We are interested in the special case $y_i = q x_i$, however, the more general case which involves the $y$ parameters will be useful for computational purposes. We refer to an intersection of a horizontal and vertical line as a vertex. In this section every edge of a vertex can be labelled either $0$ or $1$ and to the edges carrying the label $1$ we will associate a red path. Specifying the boundary conditions in \eqref{tikz:latticemodel} means assigning labels to the $4N$ external edges. To every vertex, depending on the local configuration, we attach a Boltzmann weight:
\begin{equation}\label{tikz:vertices}
\begin{tabular}{c@{\hskip 0.32cm}c@{\hskip 0.32cm}c@{\hskip 0.32cm}c@{\hskip 0.32cm}c@{\hskip 0.32cm}c@{\hskip 0.32cm}c}
\begin{tikzpicture}[scale=0.6,baseline=-2pt]
\draw[invarrow=0.75] (-1,0) --(1,0);
\draw[invarrow=0.75] (0,-1) --(0,1);
\node[right] at (1,0) {$\scriptstyle x$};
\node[above] at (0,1) {$\scriptstyle y$};
\end{tikzpicture}:
&
\begin{tikzpicture}[scale=0.6,baseline=-2pt]
\draw[red, ultra thick] (-1,0) node[left,black]{$\ss 1$}  -- (0,0)--(0,1) node[above,black]{$\ss 1$};
\draw (0,-1) node[below,black]{$\ss 0$} --(0,0)-- (1,0)node[right,black]{$\ss 0$};
\end{tikzpicture}
&
\begin{tikzpicture}[scale=0.6,baseline=-2pt]
\draw (-1,0) node[left,black]{$\ss 0$}  -- (0,0)--(0,1) node[above,black]{$\ss 0$};
\draw[red, ultra thick] (0,-1) node[below,black] {$\ss 1$} --(0,0)-- (1,0) node[right,black]{$\ss 1$};
\end{tikzpicture}
&
\begin{tikzpicture}[scale=0.6,baseline=-2pt]
\draw[red, ultra thick] (-1,0)  node[left,black]{$\ss 1$}-- (1,0) node[right,black]{$\ss 1$};
\draw (0,-1)  node[below,black]{$\ss 0$}--(0,1) node[above,black]{$\ss 0$};
\end{tikzpicture}
&
\begin{tikzpicture}[scale=0.6,baseline=-2pt]
\draw (-1,0) node[left,black]{$\ss 0$}  -- (1,0) node[right,black]{$\ss 0$};
\draw[red, ultra thick] (0,-1)node[below,black]{$\ss 1$}  --(0,1) node[above,black]{$\ss 1$};
\end{tikzpicture}
&
\begin{tikzpicture}[scale=0.6,baseline=-2pt]
\draw[ red,ultra thick] (-1,0) node[left,black]{$\ss 1$}  -- (1,0) node[right,black]{$\ss 1$};
\draw[red, ultra thick] (0,-1)node[below,black]{$\ss 1$}  --(0,1)node[above,black]{$\ss 1$};
\end{tikzpicture}&
\begin{tikzpicture}[scale=0.6,baseline=-2pt]
\draw (-1,0) node[left,black]{$\ss 0$}  -- (1,0) node[right,black]{$\ss 0$};
\draw (0,-1)node[below,black]{$\ss 0$}  --(0,1)node[above,black]{$\ss 0$};
\end{tikzpicture}
\\[3em]
& $\dfrac{1-t}{1-tx/y}$ & $\dfrac{(1-t)x/y}{1-t x/y}$& $\dfrac{t(1-x/y)}{1-t x/y}$&$\dfrac{1-x/y}{1-t x/y}$&$1$&$1$
\end{tabular}
\end{equation}
and the Boltzmann weights of all other local path configurations being zero. The leftmost vertex in \eqref{tikz:vertices} with oriented lines and unspecified edge labels represents the collection of all vertices with their Boltzmann weights and will be referred to as the graphical representation of the $\check R$-matrix of the six vertex model. When the values of the external edges of the graphical $\check R$-matrix are specified we will identify them with the corresponding Boltzmann weights in \eqref{tikz:vertices}. We can join edges of several graphical $\check R$-matrices together in which case we will assume that the values at the joined edges are summed over. 
Consider an example of \eqref{tikz:latticemodel} with $N=2$ and a choice of boundary conditions:
\begin{align}\label{eq:example_config}
\begin{tikzpicture}[scale=0.6,baseline=-2pt]
\node[right] at (3+0.5,0+0.5) {$\scriptstyle x_1$};
\node[right] at (3+0.5,-1+0.5) {$\scriptstyle x_2$};
\node[above] at (1,1+1) {$\scriptstyle y_1$};
\node[above] at (2,1+1) {$\scriptstyle y_2$};
\node[right] at (3,0+0.5) {$\ss 1$};
\node[right] at (3,-1+0.5) {$\ss 1$};
\node[left] at (0,0+0.5) {$\ss 1$};
\node[left] at (0,-1+0.5) {$\ss 0$};
\node[above] at (1,1+0.5) {$\ss 1$};
\node[above] at (2,1+0.5) {$\ss 0$};
\node[below] at (1,-2+0.5) {$\ss 1$};
\node[below] at (2,-2+0.5) {$\ss 1$};
\draw[invarrow=0.875] (0,0+0.5) --(3,0+0.5);
\draw[invarrow=0.875] (0,-1+0.5) --(3,-1+0.5);
\draw[arrow=0.125] (1,1+0.5) --(1,-2+0.5);
\draw[arrow=0.125] (2,1+0.5) --(2,-2+0.5);
\end{tikzpicture}
=
\begin{tikzpicture}[scale=0.6,baseline=-2pt]
\node[right] at (3,0+0.5) {$\ss 1$};
\node[right] at (3,-1+0.5) {$\ss 1$};
\node[left] at (0,0+0.5) {$\ss 1$};
\node[left] at (0,-1+0.5) {$\ss 0$};
\node[above] at (1,1+0.5) {$\ss 1$};
\node[above] at (2,1+0.5) {$\ss 0$};
\node[below] at (1,-2+0.5) {$\ss 1$};
\node[below] at (2,-2+0.5) {$\ss 1$};
\draw (0,0+0.5) --(3,0+0.5);
\draw (0,-1+0.5) --(3,-1+0.5);
\draw (1,1+0.5) --(1,-2+0.5);
\draw (2,1+0.5) --(2,-2+0.5);
\draw[red, ultra thick] (0,0+0.5) --(3,0+0.5);
\draw[red, ultra thick]  (1,1+0.5) --(1,-2+0.5);
\draw[red, ultra thick] (2,-2+0.5) --(2,-1+0.5) --(3,-1+0.5);
\end{tikzpicture}
&+
\begin{tikzpicture}[scale=0.6,baseline=-2pt]
\node[right] at (3,0+0.5) {$\ss 1$};
\node[right] at (3,-1+0.5) {$\ss 1$};
\node[left] at (0,0+0.5) {$\ss 1$};
\node[left] at (0,-1+0.5) {$\ss 0$};
\node[above] at (1,1+0.5) {$\ss 1$};
\node[above] at (2,1+0.5) {$\ss 0$};
\node[below] at (1,-2+0.5) {$\ss 1$};
\node[below] at (2,-2+0.5) {$\ss 1$};
\draw (0,0+0.5) --(3,0+0.5);
\draw (0,-1+0.5) --(3,-1+0.5);
\draw (1,1+0.5) --(1,-2+0.5);
\draw (2,1+0.5) --(2,-2+0.5);
\draw[red, ultra thick] (0,0+0.5) --(1,0+0.5) --(1,1+0.5);
\draw[red, ultra thick] (1,-2+0.5) -- (1,-1+0.5) -- (3,-1+0.5);
\draw[red, ultra thick] (2,-2+0.5) --(2,0+0.5) --(3,0+0.5);
\end{tikzpicture} \\
=
  \frac{t \left(1-x_1/y_2\right) 
  \left(1-x_2/y_1\right)(1-t) x_2/y_2}
   {\left(1- t x_1/y_2\right) \left(1- t x_2/y_1\right) \left(1-t x_2/y_2\right)}
 &  +
    \frac{(1-t)^3 x_1/y_2 \,x_2/y_1 }{\left(1-t x_1/y_1\right)  \left(1-t x_1/y_2\right)
    \left(1-t x_2/y_1\right)}
\nonumber
\end{align}
In this case, if we sum over all possible labels of the internal edges on the l.h.s. of \eqref{eq:example_config} we will find two configurations with non-zero  Boltzmann weights which we computed using \eqref{tikz:vertices} in the second line.
\begin{dfn}\label{def:Z}
Let $\alpha,\beta,\gamma,\delta\in\{0,1\}^N$ be collections of $0$'s and $1$'s. The six vertex partition function $Z_{\alpha,\gamma}^{\beta,\delta}(x;y):=Z_{\alpha,\gamma}^{\beta,\delta}(x_1\dots x_N;y_1\dots y_N)$ is defined as the rational function in the spectral parameters equal to the weighted sum over all possible six vertex configurations with the boundary conditions as specified below: 
\begin{align}\label{eq:Z_def}
Z_{\alpha,\gamma}^{\beta,\delta}(x;y)
=
\begin{tikzpicture}[scale=0.8, baseline=(current  bounding  box.center)]
\foreach\j/\lab in {1/\beta_1,2/\beta_2,3/,4/\beta_N}
\draw[invarrow=0.95] (\j,0.5) -- node[pos=1,above] {$\scriptstyle \lab$} (\j,4.5);
\foreach\i/\lab in {4/\delta_1,3/\delta_2,2/,1/\delta_N}
\draw[invarrow=0.95] (0.5,\i) -- node[pos=1,right] {$\scriptstyle \lab$} (4.5,\i);
\foreach\i/\lab in {4/x_1,3/x_2,2/ ,1/x_N}
\node at (5.5,\i) {$\scriptstyle \lab$};
\foreach\i/\lab in {1/y_1,2/y_2,3/ ,4/y_N}
\node at (\i,5.3) {$\scriptstyle \lab$};
\foreach\i/\lab in {4/\alpha_1,3/\alpha_2,2/ ,1/\alpha_N}
\node at (0.2,\i) {$\scriptstyle \lab$};
\foreach\i/\lab in {1/\gamma_1,2/\gamma_2,3/ ,4/\gamma_N}
\node at (\i,0.3) {$\scriptstyle \lab$};
\end{tikzpicture}
\end{align}
\end{dfn} 
Let us turn to the algebraic picture. We define the six vertex $\check R$-matrix:
\begin{align}\label{eq:Rc}
    \check R(x/y) := \left(
\begin{array}{cccc}
 1 & 0 & 0 & 0 \\
 0 & \dfrac{(1-t) x/y}{1-t x/y} & \dfrac{1-x/y}{1-t x/y} & 0 \\
 0 & \dfrac{t (1-x/y)}{1-t x/y} & \dfrac{1-t}{1-t x/y} & 0 \\
 0 & 0 & 0 & 1 \\
\end{array}
\right)
\end{align}
This matrix acts in $V_y\otimes V_x$ with $V_y,V_x \simeq \mathbb C^2$. Let $P$ be the permutation matrix acting in $\mathbb C^2\otimes \mathbb C^2$ by swapping the basis vectors, then we define the $R$-matrix:
\begin{align}\label{eq:R}
R(x) = P\check R(x)    
\end{align}
Let $\ket{0}=(1,0)^T$ and $\ket{1}=(0,1)^T$ denote the standard basis in $\mathbb C^2$ and $\ket{i_1\dots i_N}$, with $i_1,\dots, i_N\in \{0,1\}$, its generalization to the $N$-fold tensor product of $\mathbb C^2$. Define similarly the dual basis, then we have:
\begin{align}
    \label{eq:R-Rg}
    \check R(x/y) = \sum_{a,b,c,d=0,1}
\left[
\begin{tikzpicture}[scale=0.6,baseline=-2pt]
\draw[invarrow=0.75] (-1,0) --(1,0);
\draw[invarrow=0.75] (0,-1) --(0,1);
\node[right] at (1+0.5,0) {$\scriptstyle x$};
\node[above] at (0,1+0.5) {$\scriptstyle y$};
\node[below] at (0,-1) {$\scriptstyle c$};
\node[left] at (-1,0) {$\scriptstyle a$};
\node[right] at (1,0) {$\scriptstyle d$};
\node[above] at (0,1) {$\scriptstyle b$};
\end{tikzpicture}
\right]
 \ket{a,c}\bra{b,d}
\end{align}
Let $\id$ be the identity matrix in $\mathbb C^2$ then $\check R$ acts in $\otimes_{i=1}^N V_{x_i}$ by:
\begin{align}\label{eq:R_i}
    \check R_i(x_{i+1}/x_i) := \underbrace{\id \otimes \cdots \otimes\id}_{i-1} \otimes 
    \check R(x_{i+1}/x_i) \otimes \underbrace{\id \otimes \cdots \otimes\id}_{N-i-1}  
\end{align}
and similarly we define $R_i(x_{i+1}/x_i)\in \text{End}\left(\otimes_{i=1}^N V_{x_i}\right)$ and the permutation matrix $P_i$. We have the Yang--Baxter equation:
\begin{align}
    \label{eq:YB}
    \check R_{i}(z/y)
    \check R_{i+1}(z/x)
    \check R_{i}(y/x)
    =
    \check R_{i+1}(y/x)
    \check R_{i}(z/x)
    \check R_{i+1}(z/y)
\end{align}
and the unitarity relation:
\begin{align}
    \label{eq:unitarity}
    \check R_{i}(x/y)
    \check R_{i}(y/x)
    = \text{id}\otimes \text{id}
\end{align}
The graphical notation associated to the matrix $\check R_i(x_{i+1}/x_i)$ is:
\begin{align}
\label{eq:R_i-graph}
\check R_i(x_{i+1}/x_i):\quad
\begin{tikzpicture}[baseline=0.5cm]
\draw[invarrow] (0,0) -- (0,1);
\node at (1,0.5) {$\cdots$};
\draw[invarrow] (2,0) -- (2,1);
\draw[invarrow=0.75] (3,0) -- node[pos=0.75,right] {$\scriptstyle x_{i+1}$} (4,1);
\draw[invarrow=0.75] (4,0) -- node[pos=0.75,left] {$\scriptstyle x_i$} (3,1);
\draw[invarrow] (5,0) -- (5,1);
\node at (6,0.5) {$\cdots$};
\draw[invarrow] (7,0) -- (7,1);
\draw[decorate,decoration=brace] (2,-0.2) -- node[below] {$i-1$} (0,-0.2);
\draw[decorate,decoration=brace] (7,-0.2) -- node[below] {$N-i-1$} (5,-0.2);
\end{tikzpicture}
\end{align}
where the rotation of the cross compared to \eqref{tikz:vertices} does not present an ambiguity due to the presence of the arrows. The arrows also help us to keep track of the ordering of operators: moving forward w.r.t. the orientation of a line is reading an expression right to left. With this convention the graphical version of the Yang--Baxter equation \eqref{eq:YB} reads:
\[
\begin{tikzpicture}[scale=0.8, baseline=0.5cm]
\node[above] at (0,3-1) {$\scriptstyle x$};
\node[above] at (1,3-1) {$\scriptstyle y$};
\node[above] at (2,3-1) {$\scriptstyle z$};
\draw[arrow=0.125] (0,3-1) -- (2,1-1) -- (2,0-1);
\draw[arrow=0.125] (1,3-1) -- (0,2-1) -- (0,1-1) -- (1,0-1);
\draw[arrow=0.125] (2,3-1) -- (2,2-1) -- (0,0-1);
\end{tikzpicture}
=
\begin{tikzpicture}[scale=0.8, baseline=0.5cm]
\node[above] at (0,3-1) {$\scriptstyle x$};
\node[above] at (1,3-1) {$\scriptstyle y$};
\node[above] at (2,3-1) {$\scriptstyle z$};
\draw[arrow=0.125] (0,3-1) -- (0,2-1) -- (2,0-1);
\draw[arrow=0.125] (1,3-1) -- (2,2-1) -- (2,1-1) --(1,0-1);
\draw[arrow=0.125] (2,3-1) -- (0,1-1) -- (0,0-1);
\end{tikzpicture}
\]

Using the correspondence of the $\check R_i$-matrix in \eqref{eq:R_i} and its graphical counterpart in \eqref{eq:R_i-graph}, we can view the (rotated counterclockwise by $\pi/4$) object in \eqref{tikz:latticemodel} as a tensor. In what follows, we define three such tensors: $Z_N(x;y),W_N(x;y)$ and $W_N(x)$, where $W_N(x)$ is a special case of $W_N(x;y)$ which in turn is a projection of $Z_N(x;y)$. The reason for this is that we will be interested in the matrix elements of $W_N(x)$ and will require their properties, which can be determined from the more general objects $Z_N(x;y)$ and $W_N(x;y)$.

\begin{dfn}
For two sets of parameters $(x)=(x_1\dots x_N)$ and $(y)=(y_1\dots y_N)$ we define the following tensors:
\begin{align}\label{eq:defZ}
Z_N(x;y)&
:=\check R_N(x_N/y_1)\check R_{N-1}(x_{N-1}/y_1)\check R_{N+1}(x_N/y_2)
\cdots
\check R_N(x_1/y_N)\\
\label{eq:defWxy}
W_N(x;y)&:=\left(\bra{0^N}\otimes \cdot\right) Z_N(x;y) \left(\ket{0^N}\otimes \cdot \right)\\
\label{eq:defW}
W_N(x)&:=W(x_1\ldots x_N;q x_1\ldots q x_N)
\end{align}
\end{dfn}
The graphical representation for the matrix elements of $Z_N(x;y)$ follows from Definition \ref{def:Z}:
\begin{align}
    \label{eq:Zel}
Z_N(x;y) &= 
\sum_{\alpha,\beta,\gamma,\delta}
Z_{\alpha,\gamma}^{\beta,\delta}(x;y)
\ket{\alpha,\gamma}\bra{\beta,\delta}
\end{align}
By setting $W_{\gamma}^{\delta}(x;y):=Z_{(0^N),\gamma}^{(0^N),\delta}(x;y)$ and $W_{\gamma}^{\delta}(x):=W_{\gamma}^{\delta}(\ldots x_i \ldots;\ldots q x_i \ldots)$ we also have:
\begin{align}
    \label{eq:Wel}
W_N(x;y) = 
\sum_{\gamma,\delta}
W_{\gamma}^{\delta}(x;y)
\ket{\gamma}\bra{\delta},
\qquad
W_N(x) = 
\sum_{\gamma,\delta}
W_{\gamma}^{\delta}(x)
\ket{\gamma}\bra{\delta}
\end{align}
and the graphical representations of $W_{\gamma}^{\delta}(x;y)$ and $W_{\gamma}^{\delta}(x)$ are given by \eqref{eq:Z_def} but with the specialized labels $\alpha_i=0$ and $\beta_i=0$ and parameters $y_i=q x_i$ in the case of $W_{\gamma}^{\delta}(x)$. 
\begin{lemma}\label{lem:ZW-exchange}
For $i=1\ldots N-1$,
the tensor $Z_N(x;y)$ satisfies the following exchange relations:
\begin{align}
    \label{eq:exx-six}
&\check R_i(x_{i+1}/x_i) Z_N(x;y)=
    Z_N(\ldots x_{i+1},x_i \ldots;y) \check R_{N+i}(x_{i+1}/x_i)
\\
    \label{eq:exy-six}
&Z_N(x;y)\check R_i(y_i/y_{i+1})
=
\check R_{N+i}(y_i/y_{i+1}) Z_N(x;\ldots y_{i+1},y_{i} \ldots)
\end{align}
The tensor $W_N(x;y)$ satisfies:
\begin{align}
    \label{eq:exWxy1-six}
W_N(x;y)
&=    W_N(\ldots x_{i+1},x_i \ldots;y) \check R_{i}(x_{i+1}/x_i)\\
    \label{eq:exWxy2-six}
 W_N(x;y)
&=  \check R_{i}(y_i/y_{i+1})  W_N(x;\ldots y_{i+1},y_i\ldots)     
\end{align}
and for $W_N(x)$ we have:
\begin{align}
\label{eq:exW-six}
\check R_{i}(x_{i+1}/x_i) W_N(x)
=    W_N(\ldots x_{i+1},x_i\ldots) \check R_{i}(x_{i+1}/x_i)
\end{align}
\end{lemma}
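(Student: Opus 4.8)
The plan is to establish the three groups of relations in the order they are stated: first the exchange relations \eqref{eq:exx-six}, \eqref{eq:exy-six} for the full tensor $Z_N(x;y)$ by the standard ``train'' argument, then \eqref{eq:exWxy1-six}, \eqref{eq:exWxy2-six} for $W_N(x;y)$ by projecting onto the frozen boundary vectors $\bra{0^N}$ and $\ket{0^N}$, and finally \eqref{eq:exW-six} for $W_N(x)$ by combining the previous two relations with the unitarity \eqref{eq:unitarity}.

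For \eqref{eq:exx-six} I would argue graphically. Reading the product \eqref{eq:defZ} as an $N\times N$ array of vertices in which the $x_i$-line meets every $y_j$-line once, the factor $\check R_i(x_{i+1}/x_i)$ inserts a crossing of the two lines carrying $x_i$ and $x_{i+1}$ on one side of the array. I would push this crossing through the array one column at a time: at the $y_j$-line the local picture is a triple crossing of $x_i$, $x_{i+1}$, $y_j$, and the Yang--Baxter equation \eqref{eq:YB} moves the $x_i$--$x_{i+1}$ crossing across the $y_j$-vertex. After sweeping all $N$ columns the crossing re-emerges on the opposite side as $\check R_{N+i}(x_{i+1}/x_i)$, with the $x_i$- and $x_{i+1}$-lines now interchanged inside the array, producing $Z_N(\ldots x_{i+1},x_i\ldots;y)$; this is \eqref{eq:exx-six}. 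Relation \eqref{eq:exy-six} is the same argument with horizontal and vertical lines interchanged, sweeping a crossing of the $y_i$- and $y_{i+1}$-lines through the rows.

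The relations for $W_N(x;y)$ follow by composing \eqref{eq:exx-six} and \eqref{eq:exy-six} with the boundary projectors of \eqref{eq:defWxy}. The key point is that the $(00)\to(00)$ entry of $\check R$ in \eqref{eq:Rc} equals $1$, so $\check R$ fixes the empty state: $\bra{0^N}\check R_i(u)=\bra{0^N}$ and $\check R_i(u)\ket{0^N}=\ket{0^N}$ on the frozen factors. In \eqref{eq:exx-six} the left factor $\check R_i(x_{i+1}/x_i)$ then acts on a frozen $x$-boundary and is absorbed, while $\check R_{N+i}(x_{i+1}/x_i)$ acts on the surviving boundary and, after dropping the $N$ frozen factors, re-indexes to $\check R_i(x_{i+1}/x_i)$; this gives \eqref{eq:exWxy1-six}. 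Projecting \eqref{eq:exy-six} in the same way, now with the right factor absorbed into the frozen $y$-boundary, gives \eqref{eq:exWxy2-six}.

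Finally, for \eqref{eq:exW-six} I would set $y_i=qx_i$. Swapping $x_i\leftrightarrow x_{i+1}$ in $W_N(x)=W_N(x;qx)$ forces the simultaneous swap $y_i\leftrightarrow y_{i+1}$, so both \eqref{eq:exWxy1-six} and \eqref{eq:exWxy2-six} are needed. Applying \eqref{eq:exWxy1-six} swaps the $x$-arguments at the cost of a right factor $\check R_i(x_{i+1}/x_i)$, and \eqref{eq:exWxy2-six}, evaluated at $y_i/y_{i+1}=x_i/x_{i+1}$, then swaps the $y$-arguments to restore $y=qx'$ at the cost of a left factor $\check R_i(x_i/x_{i+1})$, yielding $W_N(x)=\check R_i(x_i/x_{i+1})\,W_N(\ldots x_{i+1},x_i\ldots)\,\check R_i(x_{i+1}/x_i)$. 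Multiplying on the left by $\check R_i(x_{i+1}/x_i)$ and using \eqref{eq:unitarity} to cancel $\check R_i(x_{i+1}/x_i)\check R_i(x_i/x_{i+1})=\id$ produces \eqref{eq:exW-six}. I expect the main obstacle to be the bookkeeping: getting the index shift $i\mapsto N+i$ right in the train argument and ensuring one projected factor lands on a frozen boundary, and then coordinating the two swaps under $y=qx$ so that the arguments of the two surviving $\check R$-factors line up exactly for unitarity to apply.
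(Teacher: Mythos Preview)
Your proposal is correct and follows essentially the same route as the paper: the Yang--Baxter ``train'' argument for \eqref{eq:exx-six}--\eqref{eq:exy-six}, projection onto $\bra{0^N}$, $\ket{0^N}$ using $\check R_i\ket{0^N}=\ket{0^N}$ for \eqref{eq:exWxy1-six}--\eqref{eq:exWxy2-six}, and a combination of both followed by unitarity and the specialization $y_i=qx_i$ for \eqref{eq:exW-six}. The only cosmetic difference is that the paper equates the right-hand sides of \eqref{eq:exWxy1-six} and \eqref{eq:exWxy2-six} and then swaps $x_i\leftrightarrow x_{i+1}$ before specializing, whereas you compose the two relations directly; both manipulations yield the same sandwiching identity $W_N(x)=\check R_i(x_i/x_{i+1})\,W_N(\ldots x_{i+1},x_i\ldots)\,\check R_i(x_{i+1}/x_i)$.
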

\begin{proof}
The two equations \eqref{eq:exx-six} and \eqref{eq:exy-six} are well-known exchange relations which are a consequence of the Yang--Baxter equation \eqref{eq:YB} (see e.g. \cite{GZJ}). The exchange relations for $W_N(x;y)$ \eqref{eq:exWxy1-six} and \eqref{eq:exWxy2-six} follow from \eqref{eq:exx-six} and \eqref{eq:exy-six} by applying the projection in \eqref{eq:defWxy} and using: 
$$
\bra{0^N}\check R_i(x)=\bra{0^N}, 
\qquad
\check R_i(x)\ket{0^N}=\ket{0^N}
$$
After relabelling the vector spaces $N+i\rightarrow i$ we get  \eqref{eq:exWxy1-six} and \eqref{eq:exWxy2-six}. 
The last equation  \eqref{eq:exW-six} is obtained by equating the right hand sides of \eqref{eq:exWxy1-six} and \eqref{eq:exWxy2-six}, swapping $x_i \leftrightarrow x_{i+1}$:
$$
W_N(x;y) \check R_{i}(x_i/x_{i+1})
= \check R_{i}(y_i/y_{i+1})  W_N(\ldots x_{i+1},x_i \ldots;\ldots y_{i+1},y_i\ldots)     
$$
multiplying both sides by $\check R_{i}(y_{i+1}/y_i)$ on the left and by $\check R_{i}(x_{i+1}/x_i)$ on the right, using the unitarity property \eqref{eq:unitarity} and then setting $y_i=q x_i$ for all $i$. 
\end{proof}

\subsection{The \texorpdfstring{$F$}{}-matrix and  transformed tensors}\label{sec:F}
In this section we introduce the $F$-matrix  \cite{MdS,ABFR}, following the conventions of \cite{BWcoloured}. Then we will use it to transform the tensors $Z_N(x;y),W_N(x;y)$ and $W_N(x)$ and establish their properties.
\begin{dfn}
The $2$-site $F$-matrix reads:
\begin{align}\label{eq:F2-matrix}
    F_{2}(x_1,x_2) =\left(
\begin{array}{cccc}
 1 & 0 & 0 & 0 \\
 0 & 1 & 0 & 0 \\
 0 & \frac{(1-t) x_2/x_1}{1-t x_2/x_1} & \frac{1-x_2/x_1}{1-t x_2/x_1} & 0 \\
 0 & 0 & 0 & 1 \\
\end{array}
\right)
\end{align}
This matrix satisfies the property:
\begin{align}\label{eq:F2-prop}
F_2(x_1,x_2) = P F_2(x_2,x_1)\check R(x_2/x_1) 
\end{align}
\end{dfn}
The 2-site $F$-matrix and the property \eqref{eq:F2-prop} are associated to the vector space $V_{x_1}\otimes V_{x_2}$. The $N$-site $F$-matrix and the analogue of \eqref{eq:F2-prop} correspond to $\otimes_{i=1}^N V_{x_i}$. This generalization is done
with the help of the matrices $\check R_\sigma$ and $R_\sigma$ for $\sigma \in \mathcal{S}_N$. Consider $\sigma \in \mathcal{S}_N$ and its decomposition into simple transpositions. For some $i$ and $\sigma' \in \mathcal{S}_N$ we can write\footnote{In our conventions for $\sigma=s_{i} \sigma'$ and $j=1\ldots N$ we have $\sigma(j)=s_i(\sigma'(j))$.}:
\begin{align}
    \sigma=s_{i} \sigma'
\end{align}
Define recursively the matrices $\check R_{\sigma}$ and $R_{\sigma}$:
\begin{align}
\label{eq:Rch-sigma}
&\check R_{\text{id}} = 1,\qquad    
\check R_{\sigma} = \check R_{i}(x_{\sigma'(i+1)}/x_{\sigma'(i)})
   \check R_{\sigma'}    \\
\label{eq:R-sigma}
&R_{\text{id}} = 1,\qquad    R_{\sigma} = R_{\sigma'(i),\sigma'(i+1)}(x_{\sigma'(i+1)}/x_{\sigma'(i)})
    R_{\sigma'}
\end{align}
where the two-index notation $R_{i,j}(x)$ is defined by setting $R_{i,i+1}(x)=R_i(x)$ and $R_{i,j+1}(x)=P_{j}R_{i,j}(x)P_j$. Next we consider an explicit decomposition $\sigma=s_{i_k}\cdots s_{i_1}$ and define:
\begin{align}
    \label{eq:Psig}
    P_\sigma := P_{i_1}\cdots P_{i_k}
\end{align}
This factorization is opposite to $\sigma=s_{i_k}\cdots s_{i_1}$ which means that when $P_\sigma$ acts on vectors $\ket{\alpha}$ in $\otimes_{i=1}^N V_{x_i}$ it permutes the label $\alpha$ according to $\sigma^{-1}$. With the above definitions we have $P_\sigma=R_\sigma|_{x_1=\dots=x_N}$ and the relation:
\begin{align}\label{eq:check_R-R_sigma}
    \check{R}_\sigma = P_{\sigma^{-1}} R_\sigma
\end{align}
\begin{dfn}
For $k,l\in \{0,1\}$ and $r=1\dots N$, let $E^{(k l)}_{r}\in \normalfont{ \text{End}}\left(\otimes_{i=1}^N V_{x_i}\right)$ be the matrix acting non-trivially on the $r$-th tensor space $V_{x_r}$ by the matrix unit $E^{(k l)}$ (the two by two matrix with a single non-zero entry in the $k$-th row and $j$-th column which is equal to $1$). The $N$-site $F$-matrix reads:
\begin{align}\label{eq:F-matrix}
    F_{N}(x_1\dots x_N) = 
    \sum_{\sigma \in \mathcal{S}_N} \sum_{(k_1\dots k_N)\in \mathcal{J}_\sigma} \prod_{i=1}^N E^{(k_i k_i)}_{\sigma(i)} R_\sigma
\end{align}
where the set $\mathcal{J}_\sigma$ is defined by:
\begin{align}\label{eq:J-set}
    \mathcal{J}_\sigma = 
    \Big{\{}
    0\leq k_1\leq \dots \leq k_N\leq 1:
    k_i<k_{i+1} ~~~\text{if}~~~ \sigma(i) >\sigma(i+1)
    \Big{\}}
\end{align}
This matrix satisfies the property:
\begin{align}\label{eq:F-prop}
F_N(x_1\dots x_N) = P_\sigma F_N(x_{\sigma(1)}\ldots x_{\sigma(N)})\check R_\sigma
\end{align}
\end{dfn}
\begin{lemma}
The $F$-matrix has the properties:
\begin{align}
    \label{eq:F-diag}
     \bra{0^{N-k}1^k} F_N(x) =  \bra{0^{N-k}1^k},
 \qquad
 F_N(x) \ket{1^k0^{N-k}} = \prod_{i=1}^k
 \prod_{j=k+1}^N \frac{x_i-x_j}{x_i-t x_j}\ket{1^k0^{N-k}} 
\end{align}
\end{lemma}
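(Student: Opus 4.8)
The plan is to read both identities off the defining expansion \eqref{eq:F-matrix}, using that for fixed $\sigma$ and fixed $(k_1\dots k_N)$ the operator $\prod_{i=1}^N E^{(k_ik_i)}_{\sigma(i)}$ is the rank-one projector onto the single basis state $\ket{v}$ determined by $v_{\sigma(i)}=k_i$. Thus \eqref{eq:F-diag} is a pair of left/right eigenvector statements, and the only real content is to determine which pairs $(\sigma,(k_i))$ survive the relevant projection.

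For the left identity I would pair $\bra{0^{N-k}1^k}$ with each term of \eqref{eq:F-matrix}. The projector forces the unique choice $k_i=1$ if $\sigma(i)>N-k$ and $k_i=0$ otherwise, and this string lies in $\mathcal J_\sigma$ (see \eqref{eq:J-set}) only if $\sigma$ has no descents: at any descent $\sigma(i)>\sigma(i+1)$ the set $\mathcal J_\sigma$ would require $k_i<k_{i+1}$, i.e.\ $\sigma(i)\le N-k<\sigma(i+1)$, contradicting $\sigma(i)>\sigma(i+1)$. Hence only $\sigma=\id$ survives; there $R_{\id}=1$ and the surviving term is precisely the projector onto $\ket{0^{N-k}1^k}$, giving eigenvalue $1$.

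For the right identity I act on $\ket{1^k0^{N-k}}$. Since each $\check R$ in \eqref{eq:Rc} conserves the number of $1$'s, so does every $R_\sigma$, whence $R_\sigma\ket{1^k0^{N-k}}$ is supported on states with exactly $k$ ones; the projector then forces $(k_i)$ to be the unique weakly increasing $0/1$ string with $k$ ones, namely $(0^{N-k}1^k)$. The descent condition of \eqref{eq:J-set} now restricts $\sigma$ to the shuffles $\sigma(1)<\dots<\sigma(N-k)$, $\sigma(N-k+1)<\dots<\sigma(N)$, i.e.\ the minimal coset representatives of $\mathcal S_N/(\mathcal S_{N-k}\times\mathcal S_k)$, and the surviving output state is $\ket{w_A}$ with ones at $A:=\sigma(\{N-k+1,\dots,N\})$. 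As $\sigma$ runs over these shuffles $A$ runs bijectively over all $k$-subsets, so $F_N(x)\ket{1^k0^{N-k}}=\sum_A \bra{w_A}R_{\sigma_A}\ket{1^k0^{N-k}}\,\ket{w_A}$. The claimed identity is therefore equivalent to the vanishing $\bra{w_A}R_{\sigma_A}\ket{1^k0^{N-k}}=0$ for all $A\ne\{1,\dots,k\}$, together with the evaluation of the single term $A=\{1,\dots,k\}$.

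The surviving term corresponds to the Grassmannian permutation $\sigma_0$ with $\sigma_0(i)=k+i$ for $i\le N-k$ and $\sigma_0(N-k+j)=j$ for $j\le k$, which has exactly $k(N-k)$ inversions. Evaluating $\bra{1^k0^{N-k}}R_{\sigma_0}\ket{1^k0^{N-k}}$ is the routine part: expanding $R_{\sigma_0}$ along a reduced word and tracking the packed state, each of the $k(N-k)$ elementary crossings of a parameter $x_i$ ($i\le k$) past $x_j$ ($j>k$) contributes the corresponding entry $\tfrac{x_i-x_j}{x_i-tx_j}$ of $\check R$, and the product over all inversions gives exactly $\prod_{i=1}^k\prod_{j=k+1}^N\tfrac{x_i-x_j}{x_i-tx_j}$. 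I expect the off-diagonal vanishing to be the main obstacle. I would establish it by induction on $N$ (equivalently on $\ell(\sigma_A)$): peeling off the leftmost factor of a reduced word for $\sigma_A$ and using the explicit entries of \eqref{eq:Rc}, one checks that against the fully left-packed state a $1$ can only be transmitted or pushed to larger index, so the component of $R_{\sigma_A}\ket{1^k0^{N-k}}$ with ones exactly at $A$ can be nonzero only when $A$ is the leftmost set $\{1,\dots,k\}$. This triangularity — that $F_N$ lower-triangularizes the packed state — is precisely what, together with the left identity, yields \eqref{eq:F-diag}. As an alternative to the induction one can instead feed the factorization \eqref{eq:F-prop} for $\sigma=\sigma_0$ into the computation, reducing $\check R_{\sigma_0}\ket{1^k0^{N-k}}$ to the sorted state; I would keep the direct inductive argument as the primary route.
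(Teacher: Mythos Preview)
Your argument for the left identity is correct and is exactly what the paper does (in Appendix~B.2, for the more general supersymmetric case): only $\sigma=\id$ survives the descent constraint in $\mathcal J_\sigma$.

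For the right identity your overall reduction is fine, and your evaluation of the $\sigma_0$ term via freezing is the same as the paper's computation of $\bra{\lambda^-}\check R_{\sigma_0}\ket{\lambda^+}$ in \eqref{eq:min_R_pl}. The gap is in your off-diagonal vanishing. The heuristic ``against the fully left-packed state a $1$ can only be transmitted or pushed to larger index'' is not true as stated: from \eqref{eq:Rc} one has $\check R\ket{01}\ni\ket{10}$, so a $1$ can move left, and after one step you are no longer in a packed state, so the induction does not close on that hypothesis. What does work is an inversion-count argument. Using $R_{\sigma_A}=P_{\sigma_A}\check R_{\sigma_A}$ (cf.\ \eqref{eq:check_R-R_sigma}) one finds
\[
\bra{w_A}R_{\sigma_A}\ket{1^k0^{N-k}}=\bra{0^{N-k}1^k}\check R_{\sigma_A}\ket{1^k0^{N-k}}.
\]
Each factor $\check R_i$ changes the number of $01$-inversions of the state by at most $1$, so passing from $\ket{1^k0^{N-k}}$ (zero inversions) to $\bra{0^{N-k}1^k}$ ($k(N-k)$ inversions) requires at least $k(N-k)$ factors. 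But for a shuffle $\sigma_A$ one has $\ell(\sigma_A)=\#\{(a,b):a\in A,\ b\in A^c,\ a<b\}$, which equals $k(N-k)$ only for $A=\{1,\dots,k\}$ and is strictly smaller otherwise; hence the matrix element vanishes. This is the rigorous replacement for your ``pushing'' step.

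By contrast, the paper does not argue this vanishing termwise: in Appendix~B.2 it simply invokes the lower-triangularity of $F_N$ (so that $\ket{\lambda^+}$ is automatically an eigenvector) and then computes the single diagonal entry via $R_\sigma=P_\sigma\check R_\sigma$ and the graphical freezing \eqref{eq:min_R_pl}. Your alternative route through \eqref{eq:F-prop} with $\sigma=\sigma_0$ is essentially this same argument; it is cleaner than the induction and I would promote it to the primary proof.
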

These are known properties which are a consequence of  the explicit form of the $F$-matrix \eqref{eq:F-matrix} (for more details see e.g. \cite{McW}).
\begin{dfn}
Introduce transformed tensors $\widetilde Z_N(x;y)$, $\widetilde W_N(x;y)$ and $\widetilde W_N(x)$:
\begin{align}
    \label{eq:Ztilde}
\widetilde{Z}_{N}(x;y) &:= 
\left(F_{N}(x)\otimes
F_{N}(y)\right)
Z_{N}(x;y) 
\left(F^{-1}_{N}(y)\otimes 
F^{-1}_{N}(x)\right)\\
    \label{eq:Wytilde}
\widetilde{W}_{N}(x;y) &:= 
F_{N}(y)
W_{N}(x;y) 
F^{-1}_{N}(x)\\
    \label{eq:Wtilde}
\widetilde{W}_{N}(x) &:= 
F_{N}(x)
W_{N}(x) 
F^{-1}_{N}(x)
\end{align}    
\end{dfn}
The definition of $\widetilde W_N(x;y)$ is consistent with \eqref{eq:defWxy} because of the property \eqref{eq:F-diag} at $k=0$, in other words we have:
\begin{align}\label{eq:tildeWZ}
\widetilde W_N(x;y)=\left(\bra{0^N}\otimes \cdot\right) 
\widetilde Z_N(x;y) \left(\ket{0^N}\otimes \cdot \right)
\end{align}
Due to the conjugation by the $F$-matrices the tensors $\widetilde Z_N(x;y), \widetilde W_N(x;y)$ and $\widetilde W_N(x)$ obey new exchange relations.
\begin{prop}\label{prop:tilde-ex}
For $i=1 \ldots N-1$,
the tensor $\widetilde Z_N(x;y)$ satisfies the following exchange relations:
\begin{align}\label{eq:Ztilde-exx}
P_i \widetilde  Z(x;y)&=
    \widetilde  Z(\ldots x_{i+1},x_i \ldots;y) P_{N+i}
\\
    \label{eq:Ztilde-exy}
\widetilde  Z(x;y) P_i
&=
P_{N+i}\widetilde  Z(x;\ldots y_{i+1},y_{i} \ldots) 
\end{align}
the tensor $\widetilde W_N(x;y)$ satisfies:
\begin{align}
    \label{eq:Wytilde-ex1}
\widetilde  W_N(x;y)P_{i}
&=   \widetilde  W_N(\ldots x_{i+1},x_i \ldots;y)\\
    \label{eq:Wytilde-ex2}
P_i\widetilde  W_N(x;y)
&=   \widetilde  W_N(x;\ldots y_{i+1},y_i\ldots)
\end{align}
and $\widetilde  W_N(x)$ satisfies:
\begin{align}
\label{eq:Wtilde-ex}
P_i \widetilde W_N(x)
&=\widetilde  W_N(\ldots x_{i+1},x_i \ldots)P_i
\end{align}
\end{prop}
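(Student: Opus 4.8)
The plan is to reduce each of the new exchange relations to its untransformed counterpart in Lemma~\ref{lem:ZW-exchange} by commuting the permutation operators through the conjugating $F$-matrices. The only structural input is the covariance property \eqref{eq:F-prop}, specialized to a simple transposition $\sigma=s_i$. Writing $s_ix:=(\ldots,x_{i+1},x_i,\ldots)$ for the swap of the $i$-th and $(i+1)$-th arguments, and using $P_{s_i}=P_i$ and $\check R_{s_i}=\check R_i(x_{i+1}/x_i)$, equation \eqref{eq:F-prop} becomes
\begin{align}\label{eq:Fsimple}
P_i\,F_N(x)=F_N(s_i x)\,\check R_i(x_{i+1}/x_i).
\end{align}
Taking inverses and replacing $x$ by $s_ix$ (each time rewriting $\check R_i(x_{i+1}/x_i)^{-1}=\check R_i(x_i/x_{i+1})$ via the unitarity relation \eqref{eq:unitarity}) produces the companion forms that move $P_i$ past $F_N^{\pm1}$ on either side; two representative ones are
\begin{align}\label{eq:Fsimple2}
\check R_i(x_{i+1}/x_i)\,F_N^{-1}(x)&=F_N^{-1}(s_i x)\,P_i,\\
F_N^{-1}(x)\,P_i&=\check R_i(x_i/x_{i+1})\,F_N^{-1}(s_i x),
\end{align}
and the remaining two read $F_N(x)\,\check R_i(x_i/x_{i+1})=P_i\,F_N(s_i x)$ and its inverse. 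These identities let me trade a bare $P_i$ adjacent to $F_N^{\pm1}$ for a single $\check R$-factor with the correct spectral parameters, and conversely.

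Concretely, for the final relation \eqref{eq:Wtilde-ex} I would start from
\[
P_i\,\widetilde W_N(x)=P_i\,F_N(x)\,W_N(x)\,F_N^{-1}(x),
\]
use \eqref{eq:Fsimple} to replace $P_iF_N(x)$ by $F_N(s_ix)\,\check R_i(x_{i+1}/x_i)$, push the $\check R_i(x_{i+1}/x_i)$ through $W_N(x)$ by the original relation \eqref{eq:exW-six}, and finally absorb the surviving $\check R_i(x_{i+1}/x_i)$ into $F_N^{-1}(x)$ via the first identity of \eqref{eq:Fsimple2}; this reconstructs $F_N^{-1}(s_ix)P_i$ and yields $\widetilde W_N(s_ix)\,P_i$, which is exactly \eqref{eq:Wtilde-ex}. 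The relations \eqref{eq:Ztilde-exx}--\eqref{eq:Ztilde-exy} for $\widetilde Z_N$ and \eqref{eq:Wytilde-ex1}--\eqref{eq:Wytilde-ex2} for $\widetilde W_N(x;y)$ follow by the identical three-step pattern: peel off the $F$-matrix with \eqref{eq:Fsimple}/\eqref{eq:Fsimple2}, apply the matching line of Lemma~\ref{lem:ZW-exchange} in the middle step, and re-absorb the $\check R$-factor into the $F$-matrix on the far side. For moves in the $x$-parameters I act on the $x$-copy of $F_N$, and for moves in the $y$-parameters on the $y$-copy.

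The step requiring the most care is the tensor-factor bookkeeping for $\widetilde Z_N$, whose definition \eqref{eq:Ztilde} conjugates by the \emph{crossed} combination $F_N(x)\otimes F_N(y)$ on the left and $F_N^{-1}(y)\otimes F_N^{-1}(x)$ on the right. This crossing is precisely matched to Lemma~\ref{lem:ZW-exchange}: relation \eqref{eq:exx-six} converts a $\check R_i$ acting on the ket ($x$) copy into a $\check R_{N+i}$ acting on the bra copy, and the $x$-labelled factor sitting on that bra copy is exactly $F_N^{-1}(x)$, the right factor, which can absorb it through \eqref{eq:Fsimple2}; symmetrically \eqref{eq:exy-six} moves a $y$-factor from the bra ($y$) copy back to the ket copy. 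I would therefore keep explicit track of which of the $2N$ spaces each operator occupies, using that $\check R_{N+i}$ restricts to $\check R_i$ on the second copy and that $P_i$ on the first (resp.\ second) copy equals $P_i$ (resp.\ $P_{N+i}$) on the full space. Once this correspondence is fixed the three-step reduction goes through verbatim, and no genuinely new computation is needed beyond \eqref{eq:Fsimple}, \eqref{eq:Fsimple2}, unitarity \eqref{eq:unitarity}, and the corresponding line of Lemma~\ref{lem:ZW-exchange}.
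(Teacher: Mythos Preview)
Your proposal is correct and follows essentially the same route as the paper: both arguments hinge on the covariance property \eqref{eq:F-prop} of the $F$-matrix, combined with the untransformed exchange relations of Lemma~\ref{lem:ZW-exchange}, to convert $\check R_i$'s into $P_i$'s. The paper packages the argument slightly differently, proving the general $\sigma,\tau$-version \eqref{eq:Zt-sigma} in one stroke and then deducing \eqref{eq:Wytilde-ex1}--\eqref{eq:Wtilde-ex} as special cases via \eqref{eq:tildeWZ}, whereas you work directly with simple transpositions and treat each tensor separately; this is a cosmetic difference and your tensor-factor bookkeeping for $\widetilde Z_N$ is exactly right.
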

\begin{proof}
Consider two permutations $\sigma,\tau \in \mathcal{S}_{N}$ and their decompositions into simple transpositions, denote $x_\sigma =(x_{\sigma(1)}\ldots x_{\sigma(N)})$ and $y_\tau =(y_{\tau(1)}\ldots x_{\tau(N)})$, then we have:
\begin{align}\label{eq:Z-sigma}
    \left(\check R_\sigma(x) \otimes \check R_\tau(y) \right)Z_N(x;y)=
    Z_N(x_\sigma;y_\tau)
    \left(\check R_\tau(y) \otimes \check R_\sigma(x) \right)
\end{align}
This follows from the recursive definition of $\check R_\sigma$ and repetitive application of \eqref{eq:exx-six} and \eqref{eq:exy-six}. Multiply both sides of \eqref{eq:Z-sigma} on the left and on the right by $F$-matrices as follows:
$$
\left(F_N(x_\sigma)
\otimes 
F_N(y_\tau)\right) \eqref{eq:Z-sigma} \left(
F_N^{-1}(y) \otimes 
F_N^{-1}(x) \right)
$$
and use \eqref{eq:F-prop} and \eqref{eq:Ztilde}, the outcome is:
\begin{align}\label{eq:Zt-sigma}
 \left(P_{\sigma^{-1}}\otimes P_{\tau^{-1}}\right)\widetilde Z_N(x;y)=
    \widetilde Z_N(x_\sigma;y_\tau)
    \left(P_{\tau^{-1}}  \otimes P_{\sigma^{-1}} \right)
\end{align}
where on the r.h.s. we used \eqref{eq:F-prop} in a rearranged form:
\begin{align*}
\check R_\sigma(x) F_N^{-1}(x)=F_N^{-1}(x_\sigma)P_{\sigma^{-1}}
\end{align*}
Clearly \eqref{eq:Zt-sigma} is equivalent to the pair of equations \eqref{eq:Ztilde-exx} and \eqref{eq:Ztilde-exy}. The remaining equations \eqref{eq:Wytilde-ex1}, \eqref{eq:Wytilde-ex2} and \eqref{eq:Wtilde-ex} are special cases of the above due to \eqref{eq:tildeWZ} and $\widetilde W_N(x) = \widetilde W_N(\ldots x_i\ldots;\ldots q x_i\ldots )$.
\end{proof}
In the remaining part of this subsection we compute the matrix elements of $\widetilde W_N(x;y)$. The formula which we obtain is a product consisting of several building blocks of fully factorized terms and a {\it domain-wall} partition function: 
\begin{align}
\label{eq:DW}
D_{M}(x;y):=W_{(1^M)}^{(1^M)}(x_1\ldots x_M;y_1\ldots y_M).    
\end{align}
The following Lemma is a well-known result about the six vertex domain-wall partition function \cite{Kor82,Iz87}.
\begin{lemma}\label{lem:IK}
$D_{M}(x;y)$ is the six vertex domain-wall partition function which can be computed using a determinant formula:
\begin{align}
    \label{eq:IK_det}
    D_{M}(x;y) = \frac{\prod_{i,j=1}^M(x_i-y_j)}{\prod_{1\leq i<j\leq M}(x_i-x_j)(y_j-y_i)}
    \det_{1\leq i,j \leq M} \frac{(1-t)x_i}{(x_i-y_j)(y_j-t x_i)}
\end{align}    
\end{lemma}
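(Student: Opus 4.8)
The plan is to prove \eqref{eq:IK_det} by the classical Korepin argument: I would isolate a short list of properties that determine $D_M(x;y)$ uniquely, check that the left-hand side enjoys them, verify that the determinant on the right-hand side enjoys them too, and conclude by induction on $M$. This is essentially the reasoning of \cite{Kor82,Iz87} adapted to the present weights. The first property is separate symmetry of $D_M=W_{(1^M)}^{(1^M)}$ in $(x_1\ldots x_M)$ and in $(y_1\ldots y_M)$. This is immediate from the exchange relations \eqref{eq:exWxy1-six} and \eqref{eq:exWxy2-six} of Lemma \ref{lem:ZW-exchange}: the fully packed boundary vector is fixed by the $\check R$-matrix, $\check R_i(u)\ket{1^M}=\ket{1^M}$ and $\bra{1^M}\check R_i(u)=\bra{1^M}$, since the weight of two crossing paths in \eqref{tikz:vertices} is $1$; hence the $\check R_i$ factors drop out of the matrix element and $D_M$ is invariant under adjacent transpositions of the $x$'s and, separately, of the $y$'s. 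The right-hand side is symmetric as well, because a permutation of the $x_i$ multiplies both the determinant and the prefactor $\prod_{i<j}(x_i-x_j)$ by the same sign, and likewise for the $y_j$.

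Next I would pin down the analytic structure in one variable, say $x_M$. From the explicit weights \eqref{tikz:vertices} every configuration contributes a rational function whose only denominators are factors $1-t x_i/y_j$, so $\prod_{i,j}(y_j-t x_i)\,D_M$ is polynomial; a count of vertices in the $M$-th row bounds its degree in $x_M$, and together with the symmetry this shows that $D_M$, as a function of $x_M$, has only simple poles at $x_M=y_j/t$ and prescribed growth at $0$ and $\infty$. The determinant side has exactly the same structure: the entries $\frac{(1-t)x_i}{(x_i-y_j)(y_j-t x_i)}$ produce poles at $x_i=y_j$ and at $x_i=y_j/t$, and the prefactor $\prod_{i,j}(x_i-y_j)$ removes the former, leaving the same simple poles at $x_M=y_j/t$.

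The decisive step is the recursion. Specialising $x_M=y_M$ makes the weights $\tfrac{t(1-x/y)}{1-tx/y}$ and $\tfrac{1-x/y}{1-tx/y}$ in \eqref{tikz:vertices} vanish, which freezes the last row and column of the lattice into a unique admissible path pattern; this decouples the $M$-th line and yields $D_M\big|_{x_M=y_M}=c_M\,D_{M-1}(x_1\ldots x_{M-1};y_1\ldots y_{M-1})$ for an explicit scalar $c_M$. One then checks that the right-hand side of \eqref{eq:IK_det} obeys the same recursion: at $x_M=y_M$ the factor $(x_M-y_M)$ in the prefactor cancels the pole of the $(M,M)$ entry, and a Laplace expansion along the last row and column collapses the $M\times M$ determinant to the $(M-1)\times(M-1)$ one times the matching scalar. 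The base case $M=1$ is a direct evaluation, $D_1(x_1;y_1)=\tfrac{(1-t)x_1/y_1}{1-t x_1/y_1}$, which equals both the second weight in \eqref{tikz:vertices} and the $M=1$ value of the determinant. Since a symmetric rational function with the poles and one-variable degree established above is determined by its values on the hyperplanes $x_M=y_j$, the recursion fixes both sides and induction on $M$ completes the proof.

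I expect the freezing step to be the main obstacle: one must argue carefully that the vanishing of the two weights above genuinely \emph{forces} a unique frozen configuration along the specialised line, rather than merely constraining it, and then match the resulting scalar $c_M$ with the factor produced by the determinant's Laplace expansion. The symmetry and pole counting are routine once the exchange relations of Lemma \ref{lem:ZW-exchange} are in hand.
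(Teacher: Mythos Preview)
The paper does not give a proof of this lemma at all: it simply states that ``The following Lemma is a well-known result about the six vertex domain-wall partition function \cite{Kor82,Iz87}'' and moves on. Your proposal is a correct outline of precisely that classical Korepin--Izergin argument (symmetry via the exchange relations, pole structure, the freezing recursion at $x_i=y_j$, and the $M=1$ base case), so you are supplying the proof that the paper chose to cite rather than reproduce. One small point: with the orientation conventions in \eqref{tikz:latticemodel}--\eqref{tikz:vertices} the natural corner to freeze is the top-right one, so the clean specialisation is $x_1=y_M$ rather than $x_M=y_M$; of course by the symmetry you already established this is immaterial, but it makes the ``unique frozen configuration'' step transparent (vertex~3 in \eqref{tikz:vertices} vanishes, forcing vertex~2 at the corner, and then the rest of the top row and rightmost column freeze to weight-$1$ vertices).
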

\begin{prop}
    \label{prop:Wy}
Fix $k \in \{0\ldots N\}$ and let $\alpha,\beta\in\{0,1\}^N$ be the labels of the matrix elements $\widetilde W_{\alpha}^{\beta}(x;y)$ of $\widetilde W_N(x;y)$ both having $k$ number of $1$'s. Let $P,S\subseteq [N]$ denote the positions of $1$'s in $\alpha$ and $\beta$ respectively, then:
\begin{align}
    \label{eq:Wy-mat}
\widetilde W_{\alpha}^{\beta}(x;y)= 
D_{k} (x_S;y_P) \prod_{i\in S}\prod_{j\in S^c}
\frac{x_i-t x_j}{x_i-x_j}
\prod_{i\in P}\prod_{j\in S^c}
\frac{y_i-x_j}{y_i-t x_j}
\end{align}
\end{prop}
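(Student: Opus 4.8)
The plan is to evaluate a single \emph{reference} matrix element in which the particles occupy canonical positions, and then to transport the answer to arbitrary $\alpha,\beta$ using the exchange relations of Proposition~\ref{prop:tilde-ex}. For the reference I take $\alpha=0^{N-k}1^k$ and $\beta=1^k0^{N-k}$, i.e.\ $P=\{N-k+1,\dots,N\}$ and $S=\{1,\dots,k\}$, so that $x_S=(x_1,\dots,x_k)$ and $y_P=(y_{N-k+1},\dots,y_N)$.

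First I would strip off the $F$-matrices using the eigenvector identities \eqref{eq:F-diag}. Writing $\widetilde W_{\alpha}^{\beta}(x;y)=\bra{\alpha}F_N(y)\,W_N(x;y)\,F_N^{-1}(x)\ket{\beta}$ from \eqref{eq:Wytilde}, the left factor $\bra{0^{N-k}1^k}F_N(y)=\bra{0^{N-k}1^k}$ acts trivially, while $F_N^{-1}(x)\ket{1^k0^{N-k}}$ contributes the scalar $\prod_{i=1}^{k}\prod_{j=k+1}^{N}\frac{x_i-tx_j}{x_i-x_j}=\prod_{i\in S}\prod_{j\in S^c}\frac{x_i-tx_j}{x_i-x_j}$. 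This already produces the first spectator product in \eqref{eq:Wy-mat} and reduces the reference element to the \emph{untransformed} partition function $W_{0^{N-k}1^k}^{1^k0^{N-k}}(x;y)=\bra{0^{N-k}1^k}W_N(x;y)\ket{1^k0^{N-k}}$.

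Next I would compute this partition function by freezing. Since the left and top boundaries are empty, every path enters from the right on rows $1,\dots,k$ and leaves through the bottom on columns $N-k+1,\dots,N$, and the admissible vertices \eqref{tikz:vertices} only let a path proceed leftwards or downwards; hence along any path the column coordinate is non-increasing, so a path exiting at a column $\ge N-k+1$ never occupies a column $\le N-k$. The left block (columns $1,\dots,N-k$) is therefore empty and contributes weight $1$. Propagating the frozen boundary inward from the bottom-right corner then forces every vertex of the bottom block (rows $k+1,\dots,N$, columns $N-k+1,\dots,N$) to be the straight vertical vertex of weight $\tfrac{1-x/y}{1-tx/y}$, giving $\prod_{i\in P}\prod_{j\in S^c}\frac{y_i-x_j}{y_i-tx_j}$; and the top-right $k\times k$ block inherits precisely domain-wall boundary conditions, so by \eqref{eq:DW} and Lemma~\ref{lem:IK} it equals $D_k(x_S;y_P)$. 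Multiplying the three contributions by the $F$-matrix scalar yields \eqref{eq:Wy-mat} for the reference labels.

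Finally I would remove the restriction to canonical positions. The relations $\widetilde W_N(x;y)P_i=\widetilde W_N(\dots x_{i+1},x_i\dots;y)$ and $P_i\widetilde W_N(x;y)=\widetilde W_N(x;\dots y_{i+1},y_i\dots)$ let me sort the $1$'s of $\beta$ to the front (permuting the $x$'s) and those of $\alpha$ to the back (permuting the $y$'s), so that $\widetilde W_{\alpha}^{\beta}(x;y)$ equals the reference element evaluated at permuted parameters whose first $k$ $x$'s form $x_S$ and whose last $k$ $y$'s form $y_P$. Because $D_k$ is symmetric in each of its two alphabets (immediate from \eqref{eq:IK_det}) and the two spectator products are indexed by the \emph{sets} $S$, $S^c$, $P$, the result is independent of the internal orderings and collapses to \eqref{eq:Wy-mat}. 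I expect the freezing step to be the main obstacle: rigorously confirming that the two spectator regions are completely frozen — and hence that the remaining core is a genuine $D_k$ — is where the vertex combinatorics must be controlled, whereas the $F$-matrix reduction and the permutation bookkeeping are routine given the stated results.
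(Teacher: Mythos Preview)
Your proposal is correct and follows essentially the same route as the paper: compute the reference element $\widetilde W_{0^{N-k}1^k}^{1^k0^{N-k}}(x;y)$ by stripping off the $F$-matrices via \eqref{eq:F-diag} and freezing the lattice into a domain-wall core plus two spectator blocks, then transport to general $\alpha,\beta$ using the permutation-exchange relations \eqref{eq:Wytilde-ex1}--\eqref{eq:Wytilde-ex2}. The paper presents the freezing step via the $N=4$, $k=2$ example rather than a general argument, so your path-monotonicity justification is at least as complete; your explicit remark that $D_k$ is symmetric in each alphabet (hence the formula is independent of the chosen permutations) is a detail the paper leaves implicit.
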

\begin{proof}
The proof is based on the properties \eqref{eq:Wytilde-ex1} and \eqref{eq:Wytilde-ex2} which imply that for a fixed $k$ it is sufficient to compute any one matrix element $\widetilde W_{\alpha}^{\beta}(x;y)$. There exists a choice of $\alpha$ and $\beta$ for which 
the matrix element $\widetilde W_{\alpha}^{\beta}(x;y)$ can be easily shown to be of the form \eqref{eq:Wy-mat}. Let us explain this in detail. 

Consider two permutations $\sigma,\tau \in \mathcal{S}_N$ such that $\alpha=\sigma^{-1}(0^{N-k}1^k)$ and $\beta =\tau^{-1}(0^{N-k}1^k)$. We can represent the vectors $\bra{\alpha}$ and $\ket{\beta}$ as follows:
$$
\bra{\alpha} = 
\bra{0^{N-k}1^k}P_{\sigma^{-1}},
\qquad
\ket{\beta}= 
P_{\tau}\ket{0^{N-k}1^k}
$$
therefore:
\begin{align*}
\widetilde W_{\alpha}^{\beta}(x;y)
= 
\bra{\alpha}\widetilde W_N(x;y) \ket{\beta}
&=
\bra{0^{N-k}1^k}P_{\sigma^{-1}}
\widetilde W_N(x;y)
P_{\tau}\ket{0^{N-k}1^k}\\
&=
\bra{0^{N-k}1^k}
\widetilde W_N(x_\tau;y_\sigma) \ket{0^{N-k}1^k}
=\widetilde W_{(0^{N-k}1^k)}^{(0^{N-k}1^k)}(x_\tau;y_\sigma)
\end{align*}
where in the second line we used the properties \eqref{eq:Wytilde-ex1} and \eqref{eq:Wytilde-ex2} of $\widetilde W_{\alpha}^{\beta}(x;y)$ (cf. \eqref{eq:Zt-sigma}). This shows that computing one matrix element of $\widetilde W_N(x;y)$  with a fixed number of $1$'s is sufficient to recover all of them. Consider next the matrix element $\widetilde W_{(0^{N-k} 1^k)}^{(1^k0^{N-k})}(x;y)$ and write it in terms of $W_{(0^{N-k} 1^k)}^{(1^k0^{N-k})}(x;y)$ using \eqref{eq:Wytilde}:
\begin{align*}
    \widetilde W_{(0^{N-k} 1^k)}^{(1^k0^{N-k})}(x;y)
    &=\bra{0^{N-k}1^k}F_{N}(y)W_{N}(x;y) F^{-1}_{N}(x) \ket{1^k0^{N-k}}\\
    =\prod_{i=1}^k\prod_{j=k+1}^N \frac{x_i-t x_j}{x_i-x_j}
    &\times
    \bra{0^{N-k}1^k}W_{N}(x;y)\ket{1^k0^{N-k}}
    =
    \prod_{i=1}^k\prod_{j=k+1}^N \frac{x_i-t x_j}{x_i-x_j}\times
    W_{(0^{N-k} 1^k)}^{(1^k0^{N-k})}(x;y)
\end{align*}
where in the second line we used the properties \eqref{eq:F-diag} of the $F$-matrix. The matrix element $W_{(0^{N-k} 1^k)}^{(1^k0^{N-k})}(x;y)$ can be evaluated as follows. For demonstration, we choose $N=4$ and $k=2$. The graphical depiction of the partition function $W_{(0011)}^{(1100)}(x;y)$ shows several regions which are frozen and one region with domain-wall configurations (enclosed in the dotted square below):
\begin{align}
\nonumber
W_{(0011)}^{(1100)}(x;y)
&=
\begin{tikzpicture}[scale=0.8, baseline=(current  bounding  box.center)]
\foreach\j/\lab in {1/0,2/0,3/0,4/0}
\draw[invarrow=0.95] (\j,0.5) -- node[pos=1,above] {$\scriptstyle \lab$} (\j,4.5);
\foreach\i/\lab in {4/1,3/1,2/0,1/0}
\draw[invarrow=0.95] (0.5,\i) -- node[pos=1,right] {$\scriptstyle \lab$} (4.5,\i);
\foreach\i/\lab in {4/x_1,3/x_2,2/x_3 ,1/x_4}
\node at (5.5,\i) {$\scriptstyle \lab$};
\foreach\i/\lab in {1/y_1,2/y_2,3/y_3,4/y_4}
\node at (\i,5.3) {$\scriptstyle \lab$};
\foreach\i/\lab in {4/0,3/0,2/0,1/0}
\node at (0.2,\i) {$\scriptstyle \lab$};
\foreach\i/\lab in {1/0,2/0,3/1,4/1}
\node at (\i,0.3) {$\scriptstyle \lab$};
\node at (3,1.5) {$\scriptstyle 1$};
\node at (4,1.5) {$\scriptstyle 1$};
\node at (3,2.5) {$\scriptstyle 1$};
\node at (4,2.5) {$\scriptstyle 1$};
\foreach\i in {1,2,3}
\node at (\i+0.5,1) {$\scriptstyle 0$};
\foreach\i in {1,2,3}
\node at (\i+0.5,2) {$\scriptstyle 0$};
\foreach\i in {1,2}
\node at (\i+0.5,3) {$\scriptstyle 0$};
\foreach\i in {1,2}
\node at (\i+0.5,4) {$\scriptstyle 0$};
\foreach\i in {1,2,3}
\node at (1,\i+0.5) {$\scriptstyle 0$};
\foreach\i in {1,2,3}
\node at (2,\i+0.5) {$\scriptstyle 0$};
\draw[dotted] (2+0.5-0.1,2+0.5-0.1) -- (2+0.5-0.1,4+1) -- (4+1,4+1) -- (4+1,2+0.5-0.1) -- (2+0.5-0.1,2+0.5-0.1);
\end{tikzpicture}\\
=
\prod_{i=\{3,4\}}\prod_{j=\{3,4\}}
\frac{y_i-x_j}{y_i-t x_j}
&\times 
\begin{tikzpicture}[scale=0.8, baseline=(current  bounding  box.center)]
\foreach\j/\lab in {1/0,2/0}
\draw[invarrow=0.9] (\j,0.5) -- node[pos=1,above] {$\scriptstyle \lab$} (\j,2.5);
\foreach\i/\lab in {2/1,1/1}
\draw[invarrow=0.9] (0.5,\i) -- node[pos=1,right] {$\scriptstyle \lab$} (2.5,\i);
\foreach\i/\lab in {2/x_1 ,1/x_2}
\node at (3.2,\i) {$\scriptstyle \lab$};
\foreach\i/\lab in {1/y_3,2/y_4}
\node at (\i,3.2) {$\scriptstyle \lab$};
\foreach\i/\lab in {2/0,1/0}
\node at (0.2,\i) {$\scriptstyle \lab$};
\foreach\i/\lab in {1/1,2/1}
\node at (\i,0.3) {$\scriptstyle \lab$};
\end{tikzpicture}
=
\prod_{i\in\{3,4\}}\prod_{j\in\{3,4\}}
\frac{y_i-x_j}{y_i-t x_j}
D_{2}(x_1,x_2;y_3,y_4)
\label{eq:W0011}
\end{align}
In the first line of the above equation we observed that the left half of the lattice must have only trivial local configurations corresponding to the last vertex in \eqref{tikz:vertices} whose weight is equal to $1$; the bottom right $2\times 2$ region contains four local configurations all given by the $4$-th vertex in \eqref{tikz:vertices} and therefore they contribute four factors appearing in the second line; finally the top right region, enclosed in the dotted square, equals to the partition function $D_{2}(x_1,x_2;y_3,y_4)$ \eqref{eq:DW}. The labels of $x$'s and $y$'s which appear in \eqref{eq:W0011} in $D_2(x_1,x_2;y_3,y_4)$ are determined by the locations of $1$'s on right and bottom boundaries respectively while the labels of $x$'s and $y$'s in the product appearing in \eqref{eq:W0011} are determined by the positions of $0$'s on the right boundary and positions of $1$'s on the bottom boundary respectively. Thanks to the property:
\begin{align*}
\widetilde W_{\sigma^{-1}(0^{N-k}1^k)}^{\tau^{-1}(0^{N-k}1^k)}(x;y)
=\widetilde W_{(0^{N-k}1^k)}^{(0^{N-k}1^k)}(x_\tau;y_\sigma)
\end{align*}
one needs to keep track of the positions of $1$'s to be able to assign the correct labels to $x$'s and $y$'s. This explains the labelling by the sets $S$ and $P$ given in the statement of the proposition. 
\end{proof}

\subsection{Trace of \texorpdfstring{$W_N(x)$}{} and the shuffle product}\label{sec:trace}
In this section, we define the partition function $T_N(x)$ as the trace of the matrix $W_N(x)$. Because of the cyclicity of the trace we will be able to replace $W_N(x)$ with $\widetilde W_N(x)$ since they are related by a conjugation \eqref{eq:Wtilde}. The trace represents a summation over indices of the matrix $\widetilde W_N(x)$, using the exchange relation \eqref{eq:Wtilde-ex} we can replace this summation by a symmetrization over the parameters $x_1\ldots x_N$ acting on specific diagonal elements of $\widetilde W_N(x)$. These diagonal elements can be computed via the connection with $\widetilde W_N(x;y)$. The result of this computation, summarized in Theorem \ref{thm:T-shuffle}, gives a formula for $T_N(x)$ in terms of the shuffle product of $\Ac$. 
\begin{dfn}\label{def:T}
Let $z_0,z_1$ be two indeterminates. Define the following function:
    \begin{align}\label{eq:T_W}
        T_N(x;z_0,z_1) := \sum_{\alpha\in\{0,1\}^N} z_{\alpha_1}\cdots z_{\alpha_N}
        \bra{\alpha} W_N(x)\ket{\alpha}
    \end{align}
\end{dfn}
Note that $W_N(x)$ is block diagonal with blocks labelled by the number of $1$'s in $\alpha$ and $\beta$ in the matrix elements $W_{\alpha}^\beta(x)$. This means that $T_N(x;z_0,z_1)$ is the trace of $W_N(x)$ where the monomials in $z_0,z_1$ parameterize different blocks of $W_N(x)$.
We can write $T_N(x;z_0,z_1)$ in terms of the matrix elements of $W_N(x)$:
\begin{align}
        T_N(x;z_0,z_1) = \sum_{\alpha\in\{0,1\}^N} z_{\alpha_1}\cdots z_{\alpha_N} W_{\alpha}^{\alpha}(x)
\end{align}
and using the graphics \eqref{eq:Z_def} we interpret $T_N(x;z_0,z_1)$ as the following lattice partition function:
\begin{align}\label{eq:T_def}
T_N(x;z_0,z_1)
=
\begin{tikzpicture}[scale=0.8, baseline=(current  bounding  box.center)]
\foreach\j/\lab in {1/0,2/0,3/,4/0}
\draw[invarrow=0.95] (\j,0.5) -- node[pos=1,above] {$\scriptstyle \lab$} (\j,4.5);
\foreach\i/\lab in {4/\delta_1,3/\delta_2,2/,1/\delta_N}
\draw[invarrow=0.95] (0.5,\i) -- (4.5,\i);
\foreach\i/\lab in {4/x_1,3/x_2,2/ ,1/x_N}
\node at (4.5,0.2+\i) {$\scriptstyle \lab$};
\foreach\i/\lab in {1/q x_1,2/q x_2,3/ ,4/q x_N}
\node at (\i,5.2) {$\scriptstyle \lab$};
\foreach\i/\lab in {4/0,3/0,2/ ,1/0}
\node at (0.2,\i) {$\scriptstyle \lab$};
\foreach\i/\j in {4/1,3/2,2/3 ,1/4}
\draw [rounded corners=5pt]  (\i,0.5) -- (\i,0.5-0.25*\j) -- (4+0.5+0.25*\j,0.5-0.25*\j) -- (4+0.5+0.25*\j,\j) -- (4.5,\j);
\end{tikzpicture}
\end{align}
If we rotate this picture by 135 degrees counterclockwise we can match it with the lattice drawn on the cone in \eqref{eq:cone}. Therefore $T_N(x;z_0,z_1)$ coincides with the partition function $Z_N$ from the introduction \eqref{eq:Z} where we need to choose $n=1$ and $m=0$.
\begin{lemma}
The function $T_N(x;z_0,z_1)$ is symmetric in $(x_1\ldots x_N)$.
\label{lem:Tsymmetric}
\end{lemma}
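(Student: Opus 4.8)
The plan is to realize $T_N$ as a weighted trace and then exploit cyclicity of the trace together with the exchange relation \eqref{eq:exW-six}. Introduce the diagonal operator $D_z\in\text{End}\left(\otimes_{i=1}^N V_{x_i}\right)$ defined by $D_z\ket{\alpha}=z_{\alpha_1}\cdots z_{\alpha_N}\ket{\alpha}$, so that by Definition \ref{def:T} we have $T_N(x;z_0,z_1)=\operatorname{tr}\left(D_z W_N(x)\right)$. Since each factor $z_{\alpha_j}$ equals $z_0$ or $z_1$ according to $\alpha_j$, the weight $z_{\alpha_1}\cdots z_{\alpha_N}=z_0^{N-|\alpha|}z_1^{|\alpha|}$ depends only on the number $|\alpha|$ of $1$'s in $\alpha$; thus $D_z$ acts as the scalar $z_0^{N-k}z_1^k$ on the subspace spanned by the basis vectors with exactly $k$ ones.

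It suffices to prove invariance of $T_N$ under each adjacent transposition $x_i\leftrightarrow x_{i+1}$, since the simple transpositions generate $\mathcal{S}_N$. Rewriting \eqref{eq:exW-six} and using unitarity \eqref{eq:unitarity} to invert $\check R_i$ (so that $\check R_i(x_{i+1}/x_i)^{-1}=\check R_i(x_i/x_{i+1})$), I obtain
$$
W_N(\ldots x_{i+1},x_i\ldots)=\check R_i(x_{i+1}/x_i)\,W_N(x)\,\check R_i(x_i/x_{i+1}).
$$

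The key observation is that $D_z$ commutes with $\check R_i$. Indeed, the six vertex $\check R$-matrix \eqref{eq:Rc} preserves the number of paths (its only off-diagonal entries mix the states $\ket{01}$ and $\ket{10}$), so $\check R_i(u)$ maps the block of fixed particle number $k$ to itself; on that block $D_z$ is the scalar $z_0^{N-k}z_1^k$, whence $[D_z,\check R_i(u)]=0$ for every argument $u$. Combining this with cyclicity of the trace gives
$$
T_N(\ldots x_{i+1},x_i\ldots)=\operatorname{tr}\big(D_z\,\check R_i(x_{i+1}/x_i)\,W_N(x)\,\check R_i(x_i/x_{i+1})\big)=\operatorname{tr}\big(\check R_i(x_i/x_{i+1})\check R_i(x_{i+1}/x_i)\,D_z\,W_N(x)\big),
$$
where I cycled the rightmost $\check R_i$ factor to the front and then used $[D_z,\check R_i]=0$ to bring the two $\check R_i$ factors together. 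By unitarity \eqref{eq:unitarity} they collapse to the identity, leaving $\operatorname{tr}\left(D_z W_N(x)\right)=T_N(x)$, which is the desired invariance.

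The only point requiring care — and the one place where the concrete form of the weights enters — is the commutation $[D_z,\check R_i]=0$; everything else is a formal manipulation of the trace. I expect no genuine obstacle, since this rests solely on the block structure of \eqref{eq:Rc}, which is immediate. Equivalently, one could conjugate by the $F$-matrix and argue with $\widetilde W_N(x)$ through \eqref{eq:Wtilde-ex}, as anticipated in the text, because $F_N(x)$ also preserves particle number and hence the block-wise trace; but the direct argument above avoids introducing the $F$-matrix altogether.
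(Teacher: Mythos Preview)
Your proof is correct and follows essentially the same approach as the paper: both use the exchange relation \eqref{eq:exW-six} rewritten via unitarity, cyclicity of the trace, and then unitarity again to collapse the two $\check R_i$ factors. Your explicit identification of the commutation $[D_z,\check R_i]=0$ makes precise the step the paper leaves implicit when invoking ``cyclicity of the trace'' on the weighted sum.
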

\begin{proof}
We can rewrite \eqref{eq:exW-six} in the form:
$$
W_N(\ldots x_{i+1},x_i \ldots)=
\check R_i(x_{i+1}/x_i) W_N(x)\check R_i(x_{i}/x_{i+1})
$$
Using this equation we compute:
\begin{align*}
        T_N(\ldots x_{i+1},x_i \ldots;z_0,z_1)&= \sum_{\alpha\in\{0,1\}^N} z_{\alpha_1}\cdots z_{\alpha_N}
        \bra{\alpha} W_N(\ldots x_{i+1},x_i \ldots)\ket{\alpha}    
        \\
        &= \sum_{\alpha\in\{0,1\}^N} z_{\alpha_1}\cdots z_{\alpha_N}
        \bra{\alpha} 
        \check R_i(x_{i+1}/x_i) W_N(x)\check R_i(x_{i}/x_{i+1})
        \ket{\alpha}   \\
        &= \sum_{\alpha\in\{0,1\}^N} z_{\alpha_1}\cdots z_{\alpha_N}
        \bra{\alpha} 
        \check R_i(x_{i}/x_{i+1}) \check R_i(x_{i+1}/x_i) W_N(x)
        \ket{\alpha} \\
        &=
        \sum_{\alpha\in\{0,1\}^N} z_{\alpha_1}\cdots z_{\alpha_N}
        \bra{\alpha}  W_N(x)\ket{\alpha}     = T_N(x;z_0,z_1)
\end{align*}
where we also used the cyclicity of the trace and the unitarity \eqref{eq:unitarity}.
\end{proof}
\begin{thm}
The function $T_N(x;z_0,z_1)$ can be written as a shuffle product:
\label{thm:T-shuffle}
\begin{align}
    \label{eq:T-sh}
            T_N(x;z_0,z_1)=
         \sum_{k=0}^N z_0^{N-k}z_1^k \frac{(1-t^{-1})^k}{(1-q t^{-1})^k}
         H_{k}(t^{-1}) * E_{N-k}(t q^{-1})
\end{align}
\end{thm}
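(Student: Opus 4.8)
The plan is to exploit the cyclicity of the trace together with the $F$-basis transformation \eqref{eq:Wtilde} and the exchange relation \eqref{eq:Wtilde-ex}, reducing $T_N$ to a symmetrization of a single diagonal matrix element which is then evaluated via Proposition \ref{prop:Wy}. First I observe that the monomial weight in Definition \ref{def:T} depends only on the number $k$ of $1$'s in $\alpha$, since $z_{\alpha_1}\cdots z_{\alpha_N}=z_0^{N-k}z_1^k$. As $W_N(x)$ is block diagonal in $k$, this gives
\[
T_N(x;z_0,z_1)=\sum_{k=0}^N z_0^{N-k} z_1^k \sum_{\alpha:\,|\alpha|=k}\bra{\alpha}W_N(x)\ket{\alpha}.
\]
Since $\widetilde W_N(x)=F_N(x)W_N(x)F_N^{-1}(x)$ by \eqref{eq:Wtilde} and the trace over each block is invariant under conjugation, I may replace $W_N(x)$ by $\widetilde W_N(x)$ inside every block trace, so it remains to evaluate $\sum_{\alpha:\,|\alpha|=k}\widetilde W_\alpha^\alpha(x)$.

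Next I would use the exchange relation \eqref{eq:Wtilde-ex}, exactly as in the proof of Proposition \ref{prop:Wy} but on the diagonal, to write $\widetilde W_{\alpha}^{\alpha}(x)=\widetilde W_{(0^{N-k}1^k)}^{(0^{N-k}1^k)}(x_\sigma)$ whenever $\alpha=\sigma^{-1}(0^{N-k}1^k)$, using that $y_i=qx_i$ forces $y_\sigma=qx_\sigma$. As $\sigma$ ranges over $\mathcal S_N$ each such $\alpha$ is produced exactly $k!(N-k)!$ times, whence
\[
\sum_{\alpha:\,|\alpha|=k}\widetilde W_\alpha^\alpha(x)=\frac{1}{k!(N-k)!}\,{\normalfont\text{Sym}}\,\widetilde W_{(0^{N-k}1^k)}^{(0^{N-k}1^k)}(x),
\]
with Sym as in \eqref{eq:Sym}. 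I then evaluate the canonical element from Proposition \ref{prop:Wy} with $S=P=\{N-k+1,\dots,N\}$, $S^c=\{1,\dots,N-k\}$ and $y_i=qx_i$, obtaining
\[
\widetilde W_{(0^{N-k}1^k)}^{(0^{N-k}1^k)}(x)=D_k(x_S;qx_S)\prod_{i\in S}\prod_{j\in S^c}\frac{(x_i-tx_j)(qx_i-x_j)}{(x_i-x_j)(qx_i-tx_j)}.
\]
A direct manipulation shows the cross-term factor equals $\zeta(x_i/x_j)$ from \eqref{eq:zeta} (the two stray factors of $t$ cancel between numerator and denominator). Because this expression is separately symmetric in $x_S$ and in $x_{S^c}$, its stabilizer under Sym is $\mathcal S_k\times\mathcal S_{N-k}$, so the symmetrization collapses to the subset sum of \eqref{eq:sh_sub}, giving $\sum_{|S|=k} D_k(x_S;qx_S)\prod_{i\in S,\,j\in S^c}\zeta(x_i/x_j)$.

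The remaining and principal step is the determinant identification
\[
D_k(x_S;qx_S)=\frac{(1-t^{-1})^k}{(1-qt^{-1})^k}\,H_k(t^{-1};x_S),
\]
which I would obtain by setting $y_j=qx_j$ in the Izergin--Korepin determinant \eqref{eq:IK_det} and matching the resulting prefactor and determinant against Definition \ref{def:H-el} with $q_a=t^{-1}$ and $(q_b,q_c)=(q,tq^{-1})$; I would fix the overall constant by, e.g., the $k=1$ check $D_1(x_1;qx_1)=\tfrac{1-t}{q-t}=\tfrac{1-t^{-1}}{1-qt^{-1}}$ and $H_1(t^{-1})=1$. Combined with $E_{N-k}(tq^{-1})=1$ (since $tq^{-1}=t/q$ and $E_{N-k}(t/q)=1$ by \eqref{eq:E_def}), the subset sum is exactly $\tfrac{(1-t^{-1})^k}{(1-qt^{-1})^k}H_k(t^{-1})*E_{N-k}(tq^{-1})$ as written in \eqref{eq:sh_sub}, and summing over $k$ weighted by $z_0^{N-k}z_1^k$ yields \eqref{eq:T-sh}. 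I expect the main obstacle to be this identification of the $y=qx$ specialized domain-wall determinant with $H_k(t^{-1})$ together with its precise scalar; the reduction of the block trace to a symmetrization and the collapse to the shuffle product via the $\zeta$ cross-terms are comparatively routine.
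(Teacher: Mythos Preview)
Your proposal is correct and follows essentially the same approach as the paper: the cyclicity of the trace combined with the $F$-basis conjugation \eqref{eq:Wtilde} reduces the block trace to a symmetrization of the canonical diagonal element, which Proposition~\ref{prop:Wy} at $y_i=qx_i$ computes as $D_k(x_S)\prod\zeta(x_i/x_j)$, and the identification $D_k=\tfrac{(1-t^{-1})^k}{(1-qt^{-1})^k}H_k(t^{-1})$ via \eqref{eq:IK_det} and \eqref{eq:H_def} then yields the shuffle product \eqref{eq:T-sh}. The paper's argument is the same in structure and in all essential details.
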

\begin{proof}
Let us use the relation \eqref{eq:Wtilde} between $W_N(x)$ and $\widetilde W_N(x)$ in Definition \ref{def:T} of $T_N(x;z_0,z_1)$:
\begin{align*}
            T_N(x;z_0,z_1)= \sum_{\alpha\in\{0,1\}^N} z_{\alpha_1}\cdots z_{\alpha_N}
        \bra{\alpha} F^{-1}_{N}(x)\widetilde W_{N}(x) F_{N}(x)\ket{\alpha}
\end{align*}
Using the cyclicity of the trace we get:
\begin{align}
    \label{eq:T-Wt}
            T_N(x;z_0,z_1)=
         \sum_{\alpha\in\{0,1\}^N} z_{\alpha_1}\cdots z_{\alpha_N}
        \widetilde W_\alpha^\alpha(x)
\end{align}
We can rewrite the summation over $\alpha\in\{0,1\}^N$ as a sum over  $k=0\ldots N$ and a sum over permutations of $(0^{N-k} 1^k)$:
\begin{align}
    \label{eq:Tsigma-Wt}
            T_N(x;z_0,z_1)=
         \sum_{k=0}^N \frac{1}{k! (N-k)!}  
         z_0^{N-k}z_1^k
         \sum_{\sigma \in \mathcal S_N}
         \widetilde W_{\sigma(0^{N-k}1^k)}^{\sigma(0^{N-k}1^k)}(x)
\end{align}
The division by the factorials compensates the over-counting when $\sigma$ permutes $0$'s or $1$'s. In the above formula the trace from \eqref{eq:T-Wt} is written as a sum over permutations $\sigma$ acting on the indices of the matrix elements of $W_N(x)$, below we will show that this action can be transferred to the action of $\sigma$ which instead permutes the spectral parameters. 
The diagonal elements $\widetilde W_{\alpha}^{\alpha}(x)$ in \eqref{eq:Tsigma-Wt} can be computed with the help of Proposition \ref{prop:Wy} in which we set $y_i=q x_i$:
\begin{align}
    \label{eq:W-mat}
\widetilde W_{\alpha}^{\alpha}(x)= 
D_{k}(x_S) 
\prod_{i\in S}\prod_{j\in S^c}
\frac{x_i-t x_j}{x_i-x_j}
\frac{q x_i-x_j}{q x_i-t x_j}=
D_k (x_S) 
\prod_{i\in S}\prod_{j\in S^c}\zeta(x_i/x_j)
\end{align}
where $D_{k}(x):= D_{k}(x_1\ldots x_k;q x_1\ldots q x_k)$, the subset $S\subseteq [N]$ records the positions of $1$'s in $\alpha$, $S^c\subseteq [N]$ is the complement subset to $S$ and thus it records the positions of $0$'s in $\alpha$. The domain-wall  partition function $D_{k}(x)$ can be computed as a determinant using Lemma \ref{lem:IK} in the special case $y_i = q x_i$. The determinant formula for $D_{k}(x)$ matches with the definition of $H_k(t^{-1})$ in \eqref{eq:H_def} up to a factor:
\begin{align}\label{eq:H-Wx}
        D_k (x)= 
        \frac{(1-t^{-1})^k}{(1-q t^{-1})^k}
        H_k(t^{-1}) 
\end{align}
Inserting \eqref{eq:H-Wx} into \eqref{eq:W-mat} and then inserting the result into \eqref{eq:Tsigma-Wt} produces:
\begin{multline}
    \label{eq:T-EH}
            T_N(x;z_0,z_1)=
         \sum_{k=0}^Nz_0^{N-k}z_1^k 
         \frac{(1-t^{-1})^k}{(1-q t^{-1})^k}\\
          \frac{1}{k! (N-k)!} 
         \sum_{\sigma \in \mathcal S_N}
         H_k(t^{-1}; x_{\sigma(N-k+1)}\ldots x_{\sigma(N)}) 
\prod_{j=1}^{N-k}
\prod_{i=N-k+1}^N
\zeta(x_{\sigma(i)}/x_{\sigma(j)})
\end{multline}
After reordering the summation over $\sigma$ we can match the expression in the second line with the shuffle product \eqref{eq:sh} of $H_k(t^{-1})$ and $E_{N-k}(t q^{-1})$ which verifies \eqref{eq:T-sh}.
\end{proof}
\begin{cor}
Consider the generating function of $T_N(x;z_0,z_1)$:
\begin{align}
    T(v;z_0,z_1) = \sum_{N=0}^{\infty} v^N T_N(x;z_0,z_1)
\end{align}
We have:
\begin{align}\label{eq:Tv}
    T(v;z_0,z_1) =
    \exp_*\left(\sum_{r>0}\frac{1}{r}\frac{1-t^r}{1-q^r}\left( -z_1^r-\frac{q^r-t^r}{1-t^r}z_0^r \right) v^{r}
           S_r\right)
\end{align}
\end{cor}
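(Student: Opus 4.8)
The plan is to start from the closed form for $T_N(x;z_0,z_1)$ established in Theorem \ref{thm:T-shuffle} and sum the generating series termwise. Substituting \eqref{eq:T-sh} into the definition of $T(v;z_0,z_1)$ and reindexing by $N=k+j$, the double sum over $0\le k\le N$ decouples, and using bilinearity of the shuffle product together with the fact that the scalar coefficient $\tfrac{(1-t^{-1})^k}{(1-qt^{-1})^k}z_1^k$ can be absorbed into the generating parameter of the $H$-series, I expect the generating function to factorize into a single shuffle product:
\begin{align*}
T(v;z_0,z_1) = \left(\sum_{k\ge 0} \left(\frac{v z_1(1-t^{-1})}{1-q t^{-1}}\right)^k H_k(t^{-1})\right) * \left(\sum_{j\ge 0} (v z_0)^j E_j(t q^{-1})\right).
\end{align*}

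Recognizing the two factors as the generating functions $H(\,\cdot\,;t^{-1})$ and $E(\,\cdot\,;tq^{-1})$ from \eqref{eq:genfs}, I would then invoke Lemma \ref{lem:EH_gen} to rewrite each factor as a shuffle-exponential in the sense of \eqref{eq:sh-exp}. Since the elements $H_k(t^{-1})$ and $E_j(tq^{-1})$ all lie in the \emph{commutative} subalgebra $\Ac$, the two resulting exponents are commuting elements under $*$, so that $\exp_*(A)*\exp_*(B)=\exp_*(A+B)$; the product therefore collapses to a single shuffle-exponential whose exponent is the sum of the two exponents produced by \eqref{eq:H-gen} and \eqref{eq:E-gen}.

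The remaining work is to simplify the two exponents and verify that they add up to the exponent appearing in \eqref{eq:Tv}. For the $H$-factor, with $q_a=t^{-1}$ and generating parameter $\tfrac{v z_1(1-t^{-1})}{1-q t^{-1}}$, the factorization $t-q=t(1-q t^{-1})$ cancels the denominator $(1-q t^{-1})^r$ and yields $-\sum_{r>0}\frac{1}{r}\frac{1-t^r}{1-q^r} z_1^r v^r S_r$; for the $E$-factor, with $q_a=tq^{-1}$ and parameter $v z_0$, the alternate factorization $t-q=-q(1-tq^{-1})$ together with $q^r\bigl(1-(tq^{-1})^r\bigr)=q^r-t^r$ and the sign identity $(-1)^{r+1}(-1)^r=-1$ yields $-\sum_{r>0}\frac{1}{r}\frac{q^r-t^r}{1-q^r} z_0^r v^r S_r$. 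Adding these and pulling out the common factor $\frac{1-t^r}{1-q^r}$ reproduces \eqref{eq:Tv} exactly. I expect the only real obstacle to be bookkeeping: keeping the two $q_a$-dependent prefactors in \eqref{eq:E-gen}–\eqref{eq:H-gen} straight and applying the two distinct factorizations of $t-q$ to the correct factor. There is no conceptual difficulty, since the commutativity of $\Ac$ does all the structural work and reduces the statement to this exponent arithmetic.
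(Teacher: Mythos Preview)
Your proposal is correct and follows essentially the same approach as the paper's own proof: sum \eqref{eq:T-sh} over $N$, decouple the double sum into a shuffle product of the generating functions $H(\cdot;t^{-1})$ and $E(\cdot;tq^{-1})$, apply Lemma~\ref{lem:EH_gen}, and combine the two shuffle-exponentials using commutativity of $\Ac$. You actually spell out the exponent arithmetic more carefully than the paper does.
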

\begin{proof}
We compute $T(v)$ using \eqref{eq:T-sh}:
\begin{align*}
            \sum_{N=0}^{\infty}v^N T_N(x;z_0,z_1)&=
         \sum_{N=0}^{\infty}v^N \sum_{k=0}^N z_0^{N-k}z_1^k \frac{(1-t^{-1})^k}{(1-q t^{-1})^k}
         H_{k}(t^{-1}) * E_{N-k}(t q^{-1})\\
         &=
         \sum_{k=0}^{\infty} (v z_1)^k \frac{(1-t^{-1})^k}{(1-q t^{-1})^k}
         H_{k}(t^{-1}) * \sum_{l=0}^{\infty}  (v z_0)^{l} E_{l}(t q^{-1})\\
         &=\exp_*
    \left(-\sum_{r>0}\frac{1}{r}\frac{1-t^r}{1-q^r}v^{r} z_1^r  S_r
    \right)
    \exp_*
    \left(-\sum_{r>0}\frac{1}{r}
    \frac{q^r-t^r }{1-q^r} v^{r}z_0^r S_r\right)\\
        &=
          \exp_*\left(\sum_{r>0}\frac{1}{r}\frac{1-t^r}{1-q^r}\left( -z_1^r-\frac{q^r-t^r}{1-t^r}z_0^r \right) v^{r}
           S_r\right)
\end{align*}
where in the second line we reordered the two summations and in the third line we used the exponential generating functions \eqref{eq:E-gen} and \eqref{eq:H-gen} for $H_k(t^{-1})$ and $E_k(t q^{-1})$ and finally we combined the two shuffle-exponentials into a single shuffle-exponential. The last step is possible because of the shuffle-commutativity of $H_k(t^{-1})$ and $E_k(t q^{-1})$. 
\end{proof}

\section{The \texorpdfstring{$sl_{n+1|m}$}{TEXT} vertex model and the shuffle algebra \texorpdfstring{$\mathcal{A}^\circ$}{TEXT}}\label{sec:supersymmetric}
The aim of this section is to prove Theorem \ref{thm:Z_intro} in the full generality. We will follow the same logic as in Section \ref{sec:six_vertex}. The main input in  Section \ref{sec:six_vertex} is the six vertex $R$-matrix. In this section, we will replace it with the $R$-matrix of $U_{t}(\widehat{sl}_{n+1|m}$, a supersymmetric algebra whose representations are given by $\mathbb Z_2$ graded vector spaces. The grading splits vectors into two categories which graphically correspond to having bosonic and fermionic lattice paths. In this sense a red path in the six vertex model case is a bosonic path and these paths contribute the $H_k(t^{-1})$ part in the formula \eqref{eq:T-sh}. In this section we will obtain a generalization of the formula \eqref{eq:T-sh} which, in addition to the bosonic contributions $H_k(t^{-1})$, will have fermionic contributions of  $E_k(t^{-1})$.


\subsection{Coloured lattice models} 
We use the same diagrammatic formalism as in Section \ref{sec:six_vertex}. Consider a square lattice as in \eqref{tikz:latticemodel} where every edge of a vertex is labelled by $0\ldots n+m$ where $0$ denotes the absence of a path, $i>0$ stands for a coloured path of colour ``$i$''. If $i\leq n$ then the path is called bosonic and if $i> n$ the path is called fermionic. Specifying the boundary conditions in \eqref{tikz:latticemodel} means assigning labels to the $4N$ external edges which take values in $\{0\ldots n+m\}$. For colours $i<j$ and ``0'' being the greatest colour, we list all the possible types of vertices with their corresponding Boltzmann weights:
\begin{equation}\label{tikz:coloredvertices}
\begin{tabular}{c@{\hskip 0.32cm}c@{\hskip 0.32cm}c@{\hskip 0.32cm}c@{\hskip 0.32cm}c@{\hskip 0.32cm}c@{\hskip 0.32cm}c}
\begin{tikzpicture}[scale=0.6,baseline=-2pt]
\draw[invarrow=0.75] (-1,0) --(1,0);
\draw[invarrow=0.75] (0,-1) --(0,1);
\node[right] at (1,0) {$\scriptstyle x$};
\node[above] at (0,1) {$\scriptstyle y$};
\end{tikzpicture}:
&
\begin{tikzpicture}[scale=0.6,baseline=-2pt]
\draw[red, ultra thick] (-1,0) node[left,black]{$\ss i$}  -- (0,0)--(0,1) node[above,black]{$\ss i$};
\draw (0,-1) node[below,black]{$\ss 0$} --(0,0)-- (1,0)node[right,black]{$\ss 0$};
\end{tikzpicture}
&
\begin{tikzpicture}[scale=0.6,baseline=-2pt]
\draw (-1,0) node[left,black]{$\ss 0$}  -- (0,0)--(0,1) node[above,black]{$\ss 0$};
\draw[red, ultra thick] (0,-1) node[below,black] {$\ss i$} --(0,0)-- (1,0) node[right,black]{$\ss i$};
\end{tikzpicture}
&
\begin{tikzpicture}[scale=0.6,baseline=-2pt]
\draw[red, ultra thick] (-1,0)  node[left,black]{$\ss i$}-- (1,0) node[right,black]{$\ss i$};
\draw (0,-1)  node[below,black]{$\ss 0$}--(0,1) node[above,black]{$\ss 0$};
\end{tikzpicture}
&
\begin{tikzpicture}[scale=0.6,baseline=-2pt]
\draw (-1,0) node[left,black]{$\ss 0$}  -- (1,0) node[right,black]{$\ss 0$};
\draw[red, ultra thick] (0,-1)node[below,black]{$\ss i$}  --(0,1) node[above,black]{$\ss i$};
\end{tikzpicture}
&
\begin{tikzpicture}[scale=0.6,baseline=-2pt]
\draw (-1,0) node[left,black]{$\ss 0$}  -- (1,0) node[right,black]{$\ss 0$};
\draw (0,-1)node[below,black]{$\ss 0$}  --(0,1)node[above,black]{$\ss 0$};
\end{tikzpicture}
\\[3em]
&
\begin{tikzpicture}[scale=0.6,baseline=-2pt]
\draw[red, ultra thick] (-1,0) node[left,black]{$\ss i$}  -- (0,0)--(0,1) node[above,black]{$\ss i$};
\draw[gcol, ultra thick] (0,-1) node[below,black]{$\ss j$} --(0,0)-- (1,0)node[right,black]{$\ss j$};
\end{tikzpicture}
&
\begin{tikzpicture}[scale=0.6,baseline=-2pt]
\draw[gcol, ultra thick] (-1,0) node[left,black]{$\ss j$}  -- (0,0)--(0,1) node[above,black]{$\ss j$};
\draw[red, ultra thick] (0,-1) node[below,black] {$\ss i$} --(0,0)-- (1,0) node[right,black]{$\ss i$};
\end{tikzpicture}
&
\begin{tikzpicture}[scale=0.6,baseline=-2pt]
\draw[red, ultra thick] (-1,0)  node[left,black]{$\ss i$}-- (1,0) node[right,black]{$\ss i$};
\draw[gcol, ultra thick] (0,-1)  node[below,black]{$\ss j$}--(0,1) node[above,black]{$\ss j$};
\end{tikzpicture}
&
\begin{tikzpicture}[scale=0.6,baseline=-2pt]
\draw[gcol, ultra thick] (-1,0) node[left,black]{$\ss j$}  -- (1,0) node[right,black]{$\ss j$};
\draw[red, ultra thick] (0,-1)node[below,black]{$\ss i$}  --(0,1) node[above,black]{$\ss i$};
\end{tikzpicture}
&
\begin{tikzpicture}[scale=0.6,baseline=-2pt]
\draw[ red,ultra thick] (-1,0) node[left,black]{$\ss i$}  -- (1,0) node[right,black]{$\ss i$};
\draw[red, ultra thick] (0,-1)node[below,black]{$\ss i$}  --(0,1)node[above,black]{$\ss i$};
\end{tikzpicture}
\\[3em]
& $\dfrac{1-t}{1-tx/y}$ & $\dfrac{(1-t)x/y}{1-t x/y}$& $\dfrac{t(1-x/y)}{1-t x/y}$&$\dfrac{1-x/y}{1-t x/y}$&$
 \begin{cases}
  \qquad 1  & i \leq n \\
  \dfrac{x/y-t}{1-tx/y}  & i> n
\end{cases}$
\end{tabular}
\end{equation}
The weights of all the other vertices are set to $0$. Note that the difference between fermionic and bosonic paths shows in the last vertex in \eqref{tikz:coloredvertices}, i.e. when two paths of the same colour touch each other.

Consider an example of \eqref{tikz:latticemodel} with $N=2$ with one bosonic (red) and one fermionic (green) path and a choice of boundary conditions:
\begin{align}\label{eq:example_config_coloured}
\begin{tikzpicture}[scale=0.6,baseline=-2pt]
\node[right] at (3+0.5,0+0.5) {$\scriptstyle x_1$};
\node[right] at (3+0.5,-1+0.5) {$\scriptstyle x_2$};
\node[above] at (1,1+1) {$\scriptstyle y_1$};
\node[above] at (2,1+1) {$\scriptstyle y_2$};
\node[right] at (3,0+0.5) {$\ss 1$};
\node[right] at (3,-1+0.5) {$\ss 2$};
\node[left] at (0,0+0.5) {$\ss 0$};
\node[left] at (0,-1+0.5) {$\ss 2$};
\node[above] at (1,1+0.5) {$\ss 0$};
\node[above] at (2,1+0.5) {$\ss 2$};
\node[below] at (1,-2+0.5) {$\ss 1$};
\node[below] at (2,-2+0.5) {$\ss 2$};
\draw[invarrow=0.875] (0,0+0.5) --(3,0+0.5);
\draw[invarrow=0.875] (0,-1+0.5) --(3,-1+0.5);
\draw[arrow=0.125] (1,1+0.5) --(1,-2+0.5);
\draw[arrow=0.125] (2,1+0.5) --(2,-2+0.5);
\end{tikzpicture}
=
\begin{tikzpicture}[scale=0.6,baseline=-2pt]
\node[right] at (3,0+0.5) {$\ss 1$};
\node[right] at (3,-1+0.5) {$\ss 2$};
\node[left] at (0,0+0.5) {$\ss 0$};
\node[left] at (0,-1+0.5) {$\ss 2$};
\node[above] at (1,1+0.5) {$\ss 0$};
\node[above] at (2,1+0.5) {$\ss 2$};
\node[below] at (1,-2+0.5) {$\ss 1$};
\node[below] at (2,-2+0.5) {$\ss 2$};
\draw (0,0+0.5) --(3,0+0.5);
\draw (0,-1+0.5) --(3,-1+0.5);
\draw (1,1+0.5) --(1,-2+0.5);
\draw (2,1+0.5) --(2,-2+0.5);
\draw[gcol, ultra thick] (0,-1+0.5) --(3,-1+0.5);
\draw[red, ultra thick] (1,-2+0.5) --(1,0.5) --(3,0.5);
\draw[gcol, ultra thick] (2,-2+0.5) -- (2,1.5);
\end{tikzpicture} 
&+
\begin{tikzpicture}[scale=0.6,baseline=-2pt]
\node[right] at (3,0+0.5) {$\ss 1$};
\node[right] at (3,-1+0.5) {$\ss 2$};
\node[left] at (0,0+0.5) {$\ss 0$};
\node[left] at (0,-1+0.5) {$\ss 2$};
\node[above] at (1,1+0.5) {$\ss 0$};
\node[above] at (2,1+0.5) {$\ss 2$};
\node[below] at (1,-2+0.5) {$\ss 1$};
\node[below] at (2,-2+0.5) {$\ss 2$};
\draw (0,0+0.5) --(3,0+0.5);
\draw (0,-1+0.5) --(3,-1+0.5);
\draw (1,1+0.5) --(1,-2+0.5);
\draw (2,1+0.5) --(2,-2+0.5);
\draw[gcol, ultra thick] (0,-1+0.5) -- (1,-1+0.5) --(1,0.5) --(2,0.5) --(2,1.5);
\draw[red, ultra thick] (1,-2+0.5) --(1, -1+ 0.5) --(2, -1+ 0.5) --(2,  0.5) --(3,  0.5) ;
\draw[gcol, ultra thick] (2,-2+0.5) -- (2,-1+0.5) -- (3,-1+0.5);
\end{tikzpicture}
  \\
= \frac{(1-t) x_1/y_1 \left(1-x_2/y_1\right) t\left(1-x_1/y_2\right) \left(x_2/y_2-t\right)}{\left(1-t x_1/y_1 \right) \left(1-t x_2/y_1\right) \left(1-t x_1/y_2 \right) \left(1-t
   x_2/y_2\right)}
&+
   \frac{(1-t)^4 x_1/y_1\, x_2/y_1\, x_1/y_2}{\left(1-t x_1/y_1 \right) \left(1-t x_2/y_1\right) \left(1-t x_1/y_2 \right) \left(1-t
   x_2/y_2\right)}
\nonumber
\end{align}
Summing over all possible labels of the internal edges on the l.h.s. of \eqref{eq:example_config_coloured} gives two possible configurations with non-zero  Boltzmann weights which are computed with \eqref{tikz:coloredvertices} and presented in the second line.
\begin{dfn}\label{def:Z_coloured}
Let $\alpha,\beta,\gamma,\delta\in\{0,1\ldots n+m\}^N$ (cf. Definition \ref{def:Z}). The supersymmetric coloured partition function $Z_{\alpha,\gamma}^{\beta,\delta}(x_1\dots x_N;y_1\dots y_N)$ is defined as the rational function in the spectral parameters equal to the weighted sum over all possible configurations computed with \eqref{tikz:coloredvertices} and the boundary conditions specified below: 
\begin{align}\label{eq:Z_def_coloured}
Z_{\alpha,\gamma}^{\beta,\delta}(x;y)
=
\begin{tikzpicture}[scale=0.8, baseline=(current  bounding  box.center)]
\foreach\j/\lab in {1/\beta_1,2/\beta_2,3/,4/\beta_N}
\draw[invarrow=0.95] (\j,0.5) -- node[pos=1,above] {$\scriptstyle \lab$} (\j,4.5);
\foreach\i/\lab in {4/\delta_1,3/\delta_2,2/,1/\delta_N}
\draw[invarrow=0.95] (0.5,\i) -- node[pos=1,right] {$\scriptstyle \lab$} (4.5,\i);
\foreach\i/\lab in {4/x_1,3/x_2,2/ ,1/x_N}
\node at (5.5,\i) {$\scriptstyle \lab$};
\foreach\i/\lab in {1/y_1,2/y_2,3/ ,4/y_N}
\node at (\i,5.3) {$\scriptstyle \lab$};
\foreach\i/\lab in {4/\alpha_1,3/\alpha_2,2/ ,1/\alpha_N}
\node at (0.2,\i) {$\scriptstyle \lab$};
\foreach\i/\lab in {1/\gamma_1,2/\gamma_2,3/ ,4/\gamma_N}
\node at (\i,0.3) {$\scriptstyle \lab$};
\end{tikzpicture}
\end{align}
\end{dfn}
The supersymmetric $\check R$-matrix acts in $V_y\otimes V_x$ with $V_y,V_x \simeq \mathbb C^{n+m+1}$. Let $\ket{0}=(1,0\ldots 0)^T$ and $\ket{i}=(0\ldots 0,1,0\ldots 0)^T$, with $1$ being on the $i$-th position, denote the standard basis in $\mathbb C^{n+m+1}$ and $\ket{i_1\dots i_N}$, with $i_1,\dots, i_N\in \{0,1\ldots n+m+1\}$, its generalization to the $N$-fold tensor product of $\mathbb C^{n+m+1}$. Define similarly the dual basis, then we have:
\begin{align}
    \label{eq:R-Rg_sup}
    \check R(x/y) = \sum_{a,b,c,d=0,1\ldots n+m+1}
\left[
\begin{tikzpicture}[scale=0.6,baseline=-2pt]
\draw[invarrow=0.75] (-1,0) --(1,0);
\draw[invarrow=0.75] (0,-1) --(0,1);
\node[right] at (1+0.5,0) {$\scriptstyle x$};
\node[above] at (0,1+0.5) {$\scriptstyle y$};
\node[below] at (0,-1) {$\scriptstyle c$};
\node[left] at (-1,0) {$\scriptstyle a$};
\node[right] at (1,0) {$\scriptstyle d$};
\node[above] at (0,1) {$\scriptstyle b$};
\end{tikzpicture}
\right]
 \ket{a,c}\bra{b,d}
\end{align}
where the explicit form of the matrix elements is given by the Boltzmann weights \eqref{tikz:coloredvertices}. The permutation matrix $P$ acts in $\mathbb C^{n+m+1}\otimes \mathbb C^{n+m+1}$. The supersymmetric $R$-matrix is defined as before:
\begin{align}\label{eq:R_sup}
R(x) = P\check R(x)    
\end{align}
Let $\id$ be the identity matrix in $\mathbb C^{n+m+1}$ then $\check R$, $R$ and $P$ act in $\otimes_{i=1}^N V_{x_i}$ by $\check R_i(x_{i+1}/x_i)$, $R_i(x_{i+1}/x_i)$ and $P_i$ respectively (see e.g. \eqref{eq:R_i}). For these matrices we have the Yang--Baxter equation \eqref{eq:YB} and the same unitarity relation as before \eqref{eq:unitarity}. We will also use the same graphical representation for $\check R_i(x_{i+1}/x_i)$ as before. 

The tensors $Z_N(x;y),W_N(x;y)$ and $W_N(x)$ in the $sl_{n+1|m}$ case are defined algebraically by \eqref{eq:defZ}, \eqref{eq:defWxy} and \eqref{eq:defW} respectively and their matrix elements are
$$
Z_{\alpha,\gamma}^{\beta,\delta}(x;y),
\quad 
W_{\gamma}^{\delta}(x;y)
\quad \text{and}~~~
W_{\gamma}^{\delta}(x)
$$
with the indices $\alpha,\beta,\gamma,\delta\in\{0,1\ldots n+m\}^N$. The graphical interpretation of these matrix elements is given by Definition \ref{def:Z_coloured} and appropriate specialization of the indices and parameters in the case of $W_{\gamma}^{\delta}(x;y)$ and $W_{\gamma}^{\delta}(x)$. We have the analogue of Lemma \ref{lem:ZW-exchange}. 
\begin{lemma}\label{lem:ZW-exchange_coloured}
For $i=1 \ldots N-1$,
the tensor $Z_N(x;y)$ satisfies the following exchange relations:
\begin{align}
    \label{eq:exx}
&\check R_i(x_{i+1}/x_i) Z_N(x;y)=
    Z_N(\ldots x_{i+1},x_i \ldots;y) \check R_{N+i}(x_{i+1}/x_i)
\\
    \label{eq:exy}
&Z_N(x;y)\check R_i(y_i/y_{i+1})
=
\check R_{N+i}(y_i/y_{i+1}) Z_N(x;\ldots y_{i+1},y_{i} \ldots)
\end{align}
The tensor $W_N(x;y)$ satisfies:
\begin{align}
    \label{eq:exWxy1}
W_N(x;y)
&=    W_N(\ldots x_{i+1},x_i \ldots;y) \check R_{i}(x_{i+1}/x_i)\\
    \label{eq:exWxy2}
 W_N(x;y)
&=  \check R_{i}(y_i/y_{i+1})  W_N(x;\ldots y_{i+1},y_i\ldots)     
\end{align}
and for $W_N(x)$ we have:
\begin{align}
\label{eq:exW}
\check R_{i}(x_{i+1}/x_i) W_N(x)
=    W_N(\ldots x_{i+1},x_i\ldots) \check R_{i}(x_{i+1}/x_i)
\end{align}
\end{lemma}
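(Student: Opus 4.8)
The plan is to observe that the proof of Lemma~\ref{lem:ZW-exchange} in the six-vertex case used only three structural inputs about the $\check R$-matrix: the Yang--Baxter equation \eqref{eq:YB}, the unitarity relation \eqref{eq:unitarity}, and the vacuum-absorption identities $\bra{0^N}\check R_i(x)=\bra{0^N}$, $\check R_i(x)\ket{0^N}=\ket{0^N}$. The excerpt already records that the supersymmetric $\check R$-matrix \eqref{eq:R-Rg_sup} satisfies the same Yang--Baxter and unitarity relations, so the entire argument transfers essentially verbatim; the one point that must be re-checked in the graded setting is vacuum absorption.

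First I would establish \eqref{eq:exx} and \eqref{eq:exy} for $Z_N(x;y)$. Since $Z_N(x;y)$ is the ordered product of $\check R$-matrices \eqref{eq:defZ}, multiplying on the left by $\check R_i(x_{i+1}/x_i)$ and repeatedly applying \eqref{eq:YB} lets one braid this factor through the product; it swaps $x_i\leftrightarrow x_{i+1}$ in the arguments and emerges on the far side as $\check R_{N+i}(x_{i+1}/x_i)$, giving \eqref{eq:exx}. The companion relation \eqref{eq:exy} for the $y$-parameters is obtained the same way, braiding in from the right. These are the standard exchange computations, identical in form to those behind \eqref{eq:exx-six} and \eqref{eq:exy-six}. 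Next I would project: applying $\bra{0^N}\otimes\,\cdot$ and $\ket{0^N}\otimes\,\cdot$ to \eqref{eq:exx} and \eqref{eq:exy} as in \eqref{eq:defWxy}, and using vacuum absorption to discard the $\check R_i$-factors that act on the projected-out vacuum space, yields \eqref{eq:exWxy1} and \eqref{eq:exWxy2} after the relabelling $N+i\to i$. Finally \eqref{eq:exW} follows by equating the two right-hand sides of the $W$-relations with $x_i\leftrightarrow x_{i+1}$ interchanged, multiplying through by the appropriate $\check R$-factors, collapsing them with the unitarity relation \eqref{eq:unitarity}, and then specialising $y_i=q x_i$, exactly as in the derivation of \eqref{eq:exW-six}.

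The only genuinely new point---and hence the step to verify rather than the main obstacle---is the vacuum-absorption property $\check R_i\ket{0^N}=\ket{0^N}$ and its dual. This follows from the conservation (ice-rule) structure of the weights \eqref{tikz:coloredvertices}: the empty-edge vertex (all four edges labelled $0$) carries Boltzmann weight $1$, and each colour is separately conserved across a vertex, so two outgoing $0$-edges force two incoming $0$-edges. Consequently $\ket{0^N}$ is fixed by every $\check R_i$ with eigenvalue $1$. The fermionic sign appearing in the last vertex of \eqref{tikz:coloredvertices} plays no role here, since it concerns coincident \emph{coloured} edges rather than empty ones. With this verification in place, the six-vertex proof of Lemma~\ref{lem:ZW-exchange} applies without modification.
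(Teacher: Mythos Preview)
Your proposal is correct and takes essentially the same approach as the paper, which simply states that the lemma ``is based on the same algebraic identities as Lemma~\ref{lem:ZW-exchange}.'' Your version is in fact more explicit than the paper's: you spell out which identities (Yang--Baxter, unitarity, vacuum absorption) are used and verify the vacuum-absorption step in the graded setting, whereas the paper leaves all of this implicit.
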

This Lemma is based on the same algebraic identities as Lemma \ref{lem:ZW-exchange}.

\subsection{The \texorpdfstring{$sl_{n+1|m}$}{TEXT} \texorpdfstring{$F$}{TEXT}-matrix and  transformed tensors}
In this section we introduce the supersymmetric  $sl_{n+1|m}$ $F$-matrix \cite{YZZ}. This matrix satisfies the same properties as the six vertex $F$-matrix therefore we will immediately get the transformed tensors $\widetilde Z_N(x;y),\widetilde W_N(x;y)$ and $\widetilde W_N(x)$ with the desired properties establishes in Section \ref{sec:F}.
\begin{dfn}
For $k,l\in \{0\ldots n+m\}$ and $r=1\dots N$, let $E^{(k l)}_{r}\in \normalfont{ \text{End}}\left(\otimes_{i=1}^N V_{x_i}\right)$ be the matrix acting non-trivially on the $r$-th tensor space $V_{x_r}$ by the $(n+m+1)\times(n+m+1)$ matrix unit  $E^{(k l)}$. The $N$-site supersymmetric $sl_{n+1|m}$ $F$-matrix reads:
\begin{align}\label{eq:F-matrix-super}
    F_{N}(x_1\dots x_N) = 
    \sum_{\sigma \in \mathcal{S}_N} \sum_{(k_1\dots k_N)\in \mathcal{J}_\sigma} 
    \prod_{\substack{1\leq i< j\leq N\\ k_i=k_j>n}} \frac{x_{\sigma(i)}+x_{\sigma(j)}}{x_{\sigma(i)}-t x_{\sigma(j)}}\cdot
    \prod_{i=1}^N E^{(k_i k_i)}_{\sigma(i)} R_\sigma
\end{align}
where $R_\sigma$ is defined as before \eqref{eq:R-sigma} but based on the $sl_{n+1|m}$ $R$-matrix and the set $\mathcal{J}_\sigma$ is defined by:
\begin{align}\label{eq:J-set-super}
    \mathcal{J}_\sigma = 
    \Big{\{}
    0\leq k_1\leq \dots \leq k_N\leq n+m:
    k_i<k_{i+1} ~~~\text{if}~~~ \sigma(i) >\sigma(i+1)
    \Big{\}}
\end{align}
The $F$-matrix \eqref{eq:F-matrix-super} satisfies the property:
\begin{align}\label{eq:F-prop-super}
F_N(x_1\dots x_N) = P_\sigma F_N(x_{\sigma(1)}\ldots x_{\sigma(N)})\check R_\sigma
\end{align}
\end{dfn}
We note that the $F$-matrix in the supersymmetric case is modified compared to \eqref{eq:F-matrix} by the additional rational function in \eqref{eq:F-matrix-super}. Each factor of this rational function is associated to a pair of fermionic labels, therefore if $m=0$, this factor disappears and one recovers the standard $sl_{n+1}$ $F$-matrix (see Appendix \ref{app:F} for further discussion of this $F$-matrix).
\begin{lemma}\label{lem:F-prop-super}
Fix a composition of non-negative integers $(l_0,l_1\ldots l_{n+m})$ such that $N=l_0+\cdots +l_{n+m}$. For every such composition we define two ordered compositions $\lambda^-$ and $\lambda^+$:
\begin{align}\label{eq:lambda_pm}
\lambda^- :=     
(0^{l_0}1^{l_1}\ldots (n+m)^{l_{n+m}}), 
\qquad
\lambda^+ := 
((n+m)^{l_{n+m}}\ldots 1^{l_1} 0^{l_0})
\end{align}
The supersymmetric $F$-matrix has the properties:
\begin{align}
    \label{eq:F-diag-super}
     \bra{\lambda^-}F_N(x) = 
     \bra{\lambda^-} 
     \prod_{\substack{1\leq i< j\leq N\\ 
     \lambda^-_i= \lambda^-_j>n}}
     \frac{x_i+x_j}{x_i-t x_j},
     \qquad
F_N(x) 
\ket{\lambda^+}
=
 \prod_{\substack{1\leq i<j \leq N\\ \lambda^+_i \neq  \lambda^+_j}} 
       t^{\delta_{\lambda_j>0}}\dfrac{x_i-x_j}{x_i -t x_j}\cdot \prod_{\substack{1\leq i< j\leq N\\ 
     \lambda^+_i= \lambda^+_j>n}}
     \frac{x_i+x_j}{x_i-t x_j}
\ket{\lambda^+}     
\end{align}
\end{lemma}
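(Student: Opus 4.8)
The plan is to prove both identities in \eqref{eq:F-diag-super} by inserting the explicit expression \eqref{eq:F-matrix-super} and determining, for each of the two boundary vectors, which pairs $(\sigma,(k_1\ldots k_N))$ actually contribute. This is the supersymmetric refinement of the computation behind the six vertex identities \eqref{eq:F-diag}, the only genuinely new features being the fermionic prefactor $\prod_{k_i=k_j>n}(x_{\sigma(i)}+x_{\sigma(j)})/(x_{\sigma(i)}-tx_{\sigma(j)})$ and the graded transmission weights (last column of \eqref{tikz:coloredvertices}). I would separate the two statements, since the left one is a short finite computation while the right one carries the real difficulty.

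For the first (left) identity I would act with $\bra{\lambda^-}$. Because $\prod_i E^{(k_ik_i)}_{\sigma(i)}$ sits to the left of $R_\sigma$, it acts as a projector on the covector, returning $\bra{\lambda^-}$ times the indicator that $k_i=\lambda^-_{\sigma(i)}$ for all $i$. Thus for each $\sigma$ at most one $k$-term survives, and it survives precisely when the word $(\lambda^-_{\sigma(1)}\ldots\lambda^-_{\sigma(N)})$ lies in $\mathcal J_\sigma$. The decisive combinatorial step is then that, since $\lambda^-$ is weakly increasing while membership in $\mathcal J_\sigma$ forces $k_i<k_{i+1}$ at every descent of $\sigma$, any $\sigma\neq\id$ must either violate weak monotonicity of $(\lambda^-_{\sigma(i)})_i$ across value-blocks or demand a strict inequality between two equal labels inside a block; both are impossible. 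Hence only $\sigma=\id$ contributes, $R_{\id}=1$, and the surviving prefactor is exactly $\prod_{i<j,\,\lambda^-_i=\lambda^-_j>n}(x_i+x_j)/(x_i-tx_j)$, the claimed eigenvalue.

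For the second (right) identity I would act with $\ket{\lambda^+}$ on the right; here the projector $\prod_i E^{(k_ik_i)}_{\sigma(i)}$ acts \emph{after} $R_\sigma$, so I first need the full expansion of $R_\sigma\ket{\lambda^+}$ in the standard basis. Since the $R$-matrix conserves the multiset of colours on each line, $R_\sigma\ket{\lambda^+}$ lies in the span of rearrangements of $\lambda^+$, and the projector keeps the component $\ket\nu$ with $\nu_{\sigma(i)}=k_i$. Running the same monotonicity analysis as above, now for the weakly \emph{decreasing} word $\lambda^+$, shows that the diagonal component $\ket{\lambda^+}$ is selected by a single permutation $w$ (the block-respecting permutation reversing the value-blocks while keeping within-block order increasing). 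I would then compute the eigenvalue as $\mathrm{pf}_w$ times the coefficient of $\ket{\lambda^+}$ in $R_w\ket{\lambda^+}$, propagating $\ket{\lambda^+}$ through a reduced word of $w$ via $R_{s_i}=P_i\check R_i$ and retaining at each crossing only the diagonal ($\ket{\lambda^+}$-preserving) entry of $\check R$: a crossing of two distinct colours contributes the transmission weight $(x_i-x_j)/(x_i-tx_j)$ together with the factor $t^{\delta_{\lambda_j>0}}$ distinguishing the two diagonal weights of \eqref{tikz:coloredvertices}, while equal fermionic colours are absorbed into the prefactor $(x_i+x_j)/(x_i-tx_j)$. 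Collecting these reproduces the right-hand side of \eqref{eq:F-diag-super}.

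The delicate point, which I expect to be the main obstacle, is proving that $F_N(x)\ket{\lambda^+}$ has \emph{no} off-diagonal components, i.e. that the contributions to each $\ket\nu$ with $\nu\neq\lambda^+$ cancel after summing over all $(\sigma,k)$; unlike the left identity, several $\sigma$ can feed a given $\ket\nu$, so this is not a one-term statement. I would handle it either by the triangularity of the $F$-matrix with respect to a Bruhat-type order on colour words — tracking that every off-diagonal $\check R$-entry generated along $R_w$ is re-annihilated by the defining constraints of $\mathcal J_\sigma$, exactly as in the six vertex argument — or, more economically, by invoking the structural results on the graded $F$-matrix in \cite{YZZ}, which reduce to \cite{MdS,McW} when $m=0$. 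As a consistency check, setting $m=0$ removes the fermionic prefactor and collapses $t^{\delta_{\lambda_j>0}}$ to the six vertex transmission weights, recovering \eqref{eq:F-diag}.
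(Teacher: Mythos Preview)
Your proposal is correct and follows essentially the same route as the paper's proof in Appendix~\ref{app:proof_F-lemma}. The paper is marginally more economical in one respect: it opens by stating that $F_N$ is block-diagonal (by colour content) and lower-triangular, citing this as a known structural fact from~\cite{YZZ}; this immediately reduces both identities to computing the diagonal entries $\bra{\lambda^\pm}F_N(x)\ket{\lambda^\pm}$, so the ``delicate point'' you flag (vanishing of off-diagonal components in $F_N\ket{\lambda^+}$) is disposed of at the outset rather than at the end. The paper then packages your $\mathcal J_\sigma$ analysis as the single observation \eqref{eq:F-elements}: for any $\lambda$, only the $\sigma$ sorting $\lambda$ into weakly increasing order survives, yielding $\bra{\lambda}F_N\ket{\mu}=\text{(fermionic prefactor)}\cdot\bra{\lambda}R_\sigma\ket{\mu}$. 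For $\lambda^-$ one takes $\sigma=\id$ and reads off the eigenvalue; for $\lambda^+$ one takes your block-reversing permutation $w$ (chosen so that no equal colours cross), rewrites $\bra{\lambda^+}R_w\ket{\lambda^+}=\bra{\lambda^-}\check R_w\ket{\lambda^+}$, and evaluates this graphically exactly as you describe---every crossing is between distinct colours and is frozen to the transmission vertex, giving the product $\prod t^{\delta_{\lambda^+_j>0}}(x_i-x_j)/(x_i-tx_j)$. Your direct argument for the left identity (showing only $\sigma=\id$ contributes) is in fact slightly more self-contained than the paper's, which simply asserts triangularity; conversely, invoking triangularity upfront saves you the off-diagonal cancellation argument you were worried about for $\ket{\lambda^+}$.
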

The proof of this lemma is presented in Appendix \ref{app:proof_F-lemma}
\begin{dfn}
Let $F_N$ be the supersymmetric $F$-matrix \eqref{eq:F-matrix-super}. Introduce transformed tensors $\widetilde Z_N(x;y)$, $\widetilde W_N(x;y)$ and $\widetilde W_N(x)$:
\begin{align}
    \label{eq:Ztilde-super}
\widetilde{Z}_{N}(x;y) &:= 
\left(F_{N}(x)\otimes
F_{N}(y)\right)
Z_{N}(x;y) 
\left(F^{-1}_{N}(y)\otimes 
F^{-1}_{N}(x)\right)\\
    \label{eq:Wytilde-super}
\widetilde{W}_{N}(x;y) &:= 
F_{N}(y)
W_{N}(x;y) 
F^{-1}_{N}(x)\\
    \label{eq:Wtilde-super}
\widetilde{W}_{N}(x) &:= 
F_{N}(x)
W_{N}(x) 
F^{-1}_{N}(x)
\end{align}    
\end{dfn}
\begin{prop}
For $i=1 \ldots N-1$,
the tensor $\widetilde Z_N(x;y)$ satisfies the following exchange relations:
\begin{align}\label{eq:Ztilde-exx-super}
P_i \widetilde  Z(x;y)&=
    \widetilde  Z(\ldots x_{i+1},x_i \ldots;y) P_{N+i}
\\
    \label{eq:Ztilde-exy-super}
\widetilde  Z(x;y) P_i
&=
P_{N+i}\widetilde  Z(x;\ldots y_{i+1},y_{i} \ldots) 
\end{align}
the tensor $\widetilde W_N(x;y)$ satisfies:
\begin{align}
    \label{eq:Wytilde-ex1-super}
\widetilde  W_N(x;y)P_{i}
&=   \widetilde  W_N(\ldots x_{i+1},x_i \ldots;y)\\
    \label{eq:Wytilde-ex2-super}
P_i\widetilde  W_N(x;y)
&=   \widetilde  W_N(x;\ldots y_{i+1},y_i\ldots)
\end{align}
and $\widetilde  W_N(x)$ satisfies:
\begin{align}
\label{eq:Wtilde-ex-super}
P_i \widetilde W_N(x)
&=\widetilde  W_N(\ldots x_{i+1},x_i \ldots)P_i
\end{align}
\end{prop}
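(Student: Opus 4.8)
The plan is to reproduce, essentially line for line, the argument of Proposition~\ref{prop:tilde-ex}: every ingredient used in the six-vertex case has an exact supersymmetric counterpart established earlier in this section, so the proof becomes a formal manipulation that is insensitive to the replacement of the $sl_2$ data by the $sl_{n+1|m}$ data. First I would record the iterated exchange relation for the untransformed tensor. For $\sigma,\tau\in\mathcal{S}_N$, writing $x_\sigma=(x_{\sigma(1)}\ldots x_{\sigma(N)})$ and $y_\tau=(y_{\tau(1)}\ldots y_{\tau(N)})$, one has
\begin{align*}
\left(\check R_\sigma(x)\otimes \check R_\tau(y)\right)Z_N(x;y)=Z_N(x_\sigma;y_\tau)\left(\check R_\tau(y)\otimes \check R_\sigma(x)\right),
\end{align*}
which follows from the recursive definition \eqref{eq:Rch-sigma} of $\check R_\sigma$ and repeated use of the basic exchange relations \eqref{eq:exx} and \eqref{eq:exy} of Lemma~\ref{lem:ZW-exchange_coloured}; these in turn rest only on the Yang--Baxter and unitarity relations, which hold verbatim for the $sl_{n+1|m}$ $R$-matrix.

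The second step is to conjugate this identity by $F$-matrices, multiplying on the left by $F_N(x_\sigma)\otimes F_N(y_\tau)$ and on the right by $F_N^{-1}(y)\otimes F_N^{-1}(x)$. Here the only input is the defining property \eqref{eq:F-prop-super}, used both in the form $F_N(x_\sigma)\check R_\sigma(x)=P_{\sigma^{-1}}F_N(x)$ and in the rearranged form $\check R_\sigma(x)F_N^{-1}(x)=F_N^{-1}(x_\sigma)P_{\sigma^{-1}}$. Combining these with the definition \eqref{eq:Ztilde-super} of $\widetilde Z_N$ collapses the conjugated identity to
\begin{align*}
\left(P_{\sigma^{-1}}\otimes P_{\tau^{-1}}\right)\widetilde Z_N(x;y)=\widetilde Z_N(x_\sigma;y_\tau)\left(P_{\tau^{-1}}\otimes P_{\sigma^{-1}}\right).
\end{align*}
Specialising to $\sigma=s_i,\tau=\mathrm{id}$ produces \eqref{eq:Ztilde-exx-super}, and to $\sigma=\mathrm{id},\tau=s_i$ produces \eqref{eq:Ztilde-exy-super}.

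The remaining relations for the $W$-tensors are then special cases. For \eqref{eq:Wytilde-ex1-super} and \eqref{eq:Wytilde-ex2-super} I would apply the vacuum projection, i.e. the supersymmetric analogue of \eqref{eq:tildeWZ}. Its validity requires $\bra{0^N}F_N(x)=\bra{0^N}$ and $F_N^{-1}(x)\ket{0^N}=\ket{0^N}$, and both follow from Lemma~\ref{lem:F-prop-super} at $l_0=N$: there $\lambda^-=\lambda^+=0^N$, so the fermionic product $\prod_{\lambda_i=\lambda_j>n}(\cdots)$ is empty and the prefactor $\prod_{\lambda^+_i\neq \lambda^+_j}(\cdots)$ is vacuous, leaving the trivial action. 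Finally \eqref{eq:Wtilde-ex-super} is obtained from \eqref{eq:Wytilde-ex1-super} and \eqref{eq:Wytilde-ex2-super} by the same elimination used to derive \eqref{eq:Wtilde-ex} in the six-vertex case, followed by the specialisation $y_i=q x_i$.

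I expect no genuinely difficult step: the substance of the statement is entirely contained in the property \eqref{eq:F-prop-super}, which is itself guaranteed by the construction of the supersymmetric $F$-matrix \eqref{eq:F-matrix-super} (and proved via Lemma~\ref{lem:F-prop-super} in the appendix). The one place demanding attention is bookkeeping---verifying that the extra fermionic prefactor in \eqref{eq:F-matrix-super} is already absorbed into \eqref{eq:F-prop-super}, so that the intertwining identity $F_N(x_\sigma)\check R_\sigma(x)=P_{\sigma^{-1}}F_N(x)$ holds without modification---but this needs no separate computation once \eqref{eq:F-prop-super} is in hand.
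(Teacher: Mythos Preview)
Your proposal is correct and matches the paper's approach exactly: the paper simply states that the proof is the same as that of Proposition~\ref{prop:tilde-ex}, since every algebraic ingredient (Yang--Baxter, unitarity, and the $F$-matrix property \eqref{eq:F-prop-super}) holds identically in the supersymmetric case. Your write-up actually goes further than the paper by spelling out the vacuum-projection step via Lemma~\ref{lem:F-prop-super} at $l_0=N$, which the paper leaves implicit.
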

The proof of this proposition is the same as the proof of Proposition \ref{prop:tilde-ex} since the algebraic properties of the more general objects are the same as in the six vertex case. In the remaining part of this subsection we compute the matrix elements of $\widetilde W_N(x;y)$ generalizing the result of Proposition \ref{prop:Wy}. The formula is again a product consisting of fully factorized terms and domain-wall partition functions:
\begin{align}
\label{eq:DW-super}
D_{M}^{(k)}(x;y):=W_{(k^M)}^{(k^M)}(x_1\ldots x_M;y_1\ldots y_M).    
\end{align}
When the label $k$ is fermionic (i.e. $k>n$) the domain-wall partition function becomes the fermionic version of the six vertex domain-wall partition function and is known to have a factorized formula \cite{FW}. We summarize the formulas for $D_{M}^{(k)}(x;y)$ for different $k$ in the following Lemma.
\begin{lemma}\label{lem:IK-super}
$D_{M}^{(k)}(x;y)$ is the domain-wall partition function associated to the label $k$:
\begin{align}
    \label{eq:IK_det-super}
    D_M^{(k)} (x;y)= 
    \begin{cases}
        1 & k=0 \\
        D_M(x;y) & 0< k \leq n \\
        (-1+t)^M\dfrac{\prod_{i=1}^M x_i \prod_{j=i+1}^M(x_j-t x_i)(y_i-t y_j)}{\prod_{i,j=1}^M(t x_i-y_j)} & n< k \leq n+m
    \end{cases}
\end{align}    
where $D_M(x;y)$ is defined in \eqref{eq:DW}.
\end{lemma}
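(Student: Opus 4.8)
The plan is to treat the three cases of \eqref{eq:IK_det-super} separately, exploiting that by \eqref{eq:DW-super}, \eqref{eq:defWxy} and Definition \ref{def:Z_coloured} the object $D_M^{(k)}(x;y)=W_{(k^M)}^{(k^M)}(x;y)$ is the partition function on the $M\times M$ grid whose left and top boundary edges all carry the label $0$, while all right and bottom boundary edges carry the single label $k$. Since colour is conserved at every vertex of \eqref{tikz:coloredvertices}, only the two labels $0$ and $k$ can occur on internal edges, so in each case the computation reduces to a two-label domain-wall partition function. The case $k=0$ is immediate: all four boundaries carry $0$, conservation forces the unique all-$0$ configuration, every vertex is the empty vertex of weight $1$, and hence $D_M^{(0)}(x;y)=1$.

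For $0<k\le n$ I would observe that the restriction of the weights \eqref{tikz:coloredvertices} to the alphabet $\{0,k\}$ coincides term by term with the six-vertex weights \eqref{tikz:vertices} under the identification $k\mapsto 1$: the four non-frozen weights agree, the empty vertex has weight $1$, and --- crucially because $k\le n$ --- the vertex in which two $k$-paths touch also has weight $1$, exactly as the all-$1$ vertex in \eqref{tikz:vertices}. Consequently $D_M^{(k)}(x;y)=D_M(x;y)$ as rational functions, and the Izergin--Korepin determinant of Lemma \ref{lem:IK} (equation \eqref{eq:IK_det}) furnishes the middle line of \eqref{eq:IK_det-super}.

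The substantive case is $n<k\le n+m$. The reduction to the alphabet $\{0,k\}$ is the same, but now the touching vertex carries the weight $\tfrac{x/y-t}{1-tx/y}$ rather than $1$. I would either invoke \cite{FW} directly, or reprove the factorized formula by a Korepin-type uniqueness argument: establish that, up to the explicit prefactor, $D_M^{(k)}(x;y)$ is the unique rational function satisfying (i) the pole structure dictated by the denominator $\prod_{i,j}(t x_i-y_j)$; (ii) the exchange symmetry inherited from the Yang--Baxter and unitarity relations \eqref{eq:YB}--\eqref{eq:unitarity}; (iii) a recursion reducing $M$ to $M-1$, obtained by specializing one spectral parameter to a value at which a boundary corner freezes an entire row and column, as in Korepin's derivation of \eqref{eq:IK_det}; and (iv) the base case $D_1^{(k)}(x_1;y_1)=\frac{(1-t)x_1/y_1}{1-t x_1/y_1}$, read off from the single turning vertex of column two. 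I would then check that the proposed right-hand side of \eqref{eq:IK_det-super} satisfies the same conditions, matching the combinatorial factors $(t-1)^M$, $\prod_i x_i$ and $\prod_{i<j}(x_j-t x_i)(y_i - t y_j)$ against the freezing specialization.

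The main obstacle is precisely this fermionic case: the nonzero weight of the touching vertex takes the model out of the strict six-vertex class, so the determinant route of Lemma \ref{lem:IK} no longer applies. The conceptual point is that these weights sit at a free-fermion point, which is what collapses the determinant to a completely factorized answer; isolating and using this free-fermion condition (and verifying that the freezing recursion matches the explicit product) is the only genuinely computational part of the argument.
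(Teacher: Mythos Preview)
Your proposal is correct and in fact more detailed than the paper's treatment: the paper does not prove this lemma at all, but simply states it and cites \cite{FW} for the fermionic factorized formula, while your sketch (trivial $k=0$; reduction to \eqref{tikz:vertices} for $0<k\le n$; and either citing \cite{FW} or a Korepin-type recursion for $n<k\le n+m$) is exactly the standard route. Your verification of the $M=1$ base case and the identification of the free-fermion structure are both on target.
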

\begin{prop}
    \label{prop:Wy-super}
Fix a composition of non-negative integers $(l_0,l_1\ldots l_{n+m})$ such that $N=l_0+\cdots +l_{n+m}$. Let $\alpha,\beta$ be two permutations of $(0^{l_0}1^{l_1}\ldots (n+m)^{l_{n+m}})$ and  $P(k),S(k)\subseteq [N]$ be pairs of compositions which record the positions of $k$'s in $\alpha$ and $\beta$ respectively, then:
\begin{align}
\label{eq:Wy-mat-super}
\widetilde W_{\alpha}^{\beta}(x;y)= 
    \prod_{k=0}^{n+m}
    D_{l_k}^{(k)}(x_{S(k)},y_{P(k)})
    &\times 
    \prod_{0\leq k_1<k_2 \leq n+m}
    \prod_{j\in S(k_1)}
    \prod_{i\in S(k_2)}
    \frac{x_i-t x_j}{x_i-x_j}
     \cdot
    \prod_{j\in S(k_1)}
    \prod_{i\in P(k_2)}     
    \frac{y_{i}-x_{j}}{y_{i}-t x_{j}}
\\
    &\times
    \prod_{k=n+1}^{n+m}
    \prod_{\substack{i,j\in P(k)\\ i<j}}
     \frac{y_i+y_j}{y_i-t y_j}\cdot
    \prod_{\substack{i,j\in S(k)\\ i<j}} 
    \frac{x_i-t x_j}{x_i+x_j}
\nonumber    
\end{align}
\end{prop}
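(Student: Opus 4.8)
The plan is to follow the proof of Proposition~\ref{prop:Wy} verbatim in structure, since every ingredient used there has a supersymmetric counterpart: the exchange relations \eqref{eq:Wytilde-ex1-super} and \eqref{eq:Wytilde-ex2-super} replace their six vertex analogues, the conjugation \eqref{eq:Wytilde-super} replaces the $F$-matrix conjugation, and Lemma~\ref{lem:F-prop-super} replaces the diagonal $F$-matrix properties \eqref{eq:F-diag}. The argument splits into a reduction to a single canonical matrix element, the removal of the $F$-matrices from that element, and a direct lattice evaluation of what remains.

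First I would use the exchange relations to show that for a fixed colour content $(l_0,\ldots,l_{n+m})$ it suffices to compute one matrix element. Writing the external vectors as $\bra{\alpha}=\bra{\lambda^-}P_{\sigma^{-1}}$ and $\ket{\beta}=P_{\tau}\ket{\lambda^+}$ for suitable $\sigma,\tau\in\mathcal{S}_N$ and the sorted words $\lambda^\pm$ of \eqref{eq:lambda_pm}, repeated application of \eqref{eq:Wytilde-ex1-super} and \eqref{eq:Wytilde-ex2-super} transfers the permutations $P_{\sigma^{-1}},P_{\tau}$ onto the spectral parameters, so that $\widetilde W_{\alpha}^{\beta}(x;y)$ equals $\widetilde W_{\lambda^-}^{\lambda^+}(x_\tau;y_\sigma)$ up to relabelling. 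Consequently, computing the single element $\widetilde W_{\lambda^-}^{\lambda^+}(x;y)$ with its full dependence on $x$ and $y$ determines every element of the same content; the position sets $S(k),P(k)$ in \eqref{eq:Wy-mat-super} are precisely the data that record the permutations needed to pass back from $\lambda^\pm$ to general $\alpha,\beta$.

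Second, I would evaluate the canonical element. Using \eqref{eq:Wytilde-super} write $\widetilde W_{\lambda^-}^{\lambda^+}(x;y)=\bra{\lambda^-}F_N(y)\,W_N(x;y)\,F_N^{-1}(x)\ket{\lambda^+}$ and apply Lemma~\ref{lem:F-prop-super}. The factor $\bra{\lambda^-}F_N(y)$ produces the fermionic product $\prod_{\lambda^-_i=\lambda^-_j>n}\frac{y_i+y_j}{y_i-t y_j}$, while $F_N^{-1}(x)\ket{\lambda^+}$ produces the reciprocal of the eigenvalue in \eqref{eq:F-diag-super}, contributing both the $\prod_{\lambda^+_i\neq\lambda^+_j}\frac{x_i-tx_j}{x_i-x_j}$ and the fermionic $\prod_{\lambda^+_i=\lambda^+_j>n}\frac{x_i-tx_j}{x_i+x_j}$ factors; these match the distinct-colour and fermionic products in \eqref{eq:Wy-mat-super}. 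What is left is the bare lattice element $W_{\lambda^-}^{\lambda^+}(x;y)$, which I would compute graphically as in \eqref{eq:W0011}. Because the left boundary is sorted increasingly and the bottom boundary decreasingly in the colour label, the admissible configurations freeze into rectangular blocks, one pair of boundary colours at a time: the off-diagonal frozen blocks supply the remaining $\prod_{k_1<k_2}\prod_{j\in S(k_1)}\prod_{i\in P(k_2)}\frac{y_i-x_j}{y_i-t x_j}$ factors, and the diagonal block of each colour $k$ reduces to the domain-wall partition function $D_{l_k}^{(k)}(x_{S(k)};y_{P(k)})$ evaluated in Lemma~\ref{lem:IK-super}. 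Assembling the $F$-matrix factors with the lattice output yields \eqref{eq:Wy-mat-super}.

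The main obstacle will be the lattice step: establishing rigorously that the multicolour boundary data $\lambda^-/\lambda^+$ force the exact block decomposition, and that the frozen weights combine correctly with the fermionic domain-wall formula and the $F$-matrix factors. In the six vertex case this was a single domain-wall region flanked by two frozen zones; here $n+m$ colours must be routed through the lattice, and one must verify that their domain-wall regions nest consistently and that the residual powers of $t$ from the inverse eigenvalue in \eqref{eq:F-diag-super} cancel. Particular care is needed for the fermionic colours $k>n$, whose domain-wall blocks carry the signed, factorized form of Lemma~\ref{lem:IK-super} and whose $\frac{x_i+x_j}{x_i-t x_j}$ and $\frac{x_i-t x_j}{x_i+x_j}$ factors must be tracked through both the $F$-matrix and the lattice so as to reproduce the last line of \eqref{eq:Wy-mat-super}.
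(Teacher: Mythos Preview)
Your proposal is correct and mirrors the paper's proof almost exactly: reduce to the canonical element $\widetilde W_{\lambda^-}^{\lambda^+}(x;y)$ via the exchange relations \eqref{eq:Wytilde-ex1-super}--\eqref{eq:Wytilde-ex2-super}, strip off the $F$-matrices using the eigenvalue formulas of Lemma~\ref{lem:F-prop-super}, and evaluate the remaining bare element $W_{\lambda^-}^{\lambda^+}(x;y)$ by the block-freezing argument that produces domain-wall regions along the anti-diagonal and factorized frozen blocks below. The obstacle you flag --- verifying the block decomposition and the cancellation of the extra powers of $t$ from the inverse eigenvalue against the frozen-block weights --- is precisely the bookkeeping the paper carries out via equations \eqref{eq:Zij}--\eqref{eq:Wy-super}.
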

\begin{proof}
The proof is analogous to the proof of Proposition \ref{prop:Wy}. Define two compositions $\lambda^\pm$:
\begin{align*}
\lambda^- :=     
(0^{l_0}1^{l_1}\ldots (n+m)^{l_{n+m}}), 
\qquad
\lambda^+ := 
((n+m)^{l_{n+m}}\ldots 1^{l_1} 0^{l_0})
\end{align*}
For two permutations $\sigma,\tau \in \mathcal{S}_N$ let $\alpha=\sigma^{-1}(\lambda^-)$ and $\beta=\tau^{-1}(\lambda^-)$. Represent the vectors $\bra{\alpha}$ and $\ket{\beta}$ as follows:
$$
\bra{\alpha} = 
\bra{\lambda^-}P_{\sigma^{-1}},
\qquad
\ket{\beta}= 
P_{\tau}\ket{\lambda^-}
$$
Different matrix elements $\widetilde W_{\alpha}^{\beta}(x;y)$ are related to each other by appropriately permuting the spectral parameters:
\begin{align}\label{eq:Wt-Wt-lambda}
\widetilde W_{\alpha}^{\beta}(x;y)
= 
\bra{\alpha}\widetilde W_N(x;y) \ket{\beta}
=
\bra{\lambda^-}P_{\sigma^{-1}}
\widetilde W_N(x;y)
P_{\tau}\ket{\lambda^-}
=
\bra{\lambda^-}
\widetilde W_N(x_\tau;y_\sigma) \ket{\lambda^-}
=\widetilde W_{\lambda^-}^{\lambda^-}(x_\tau;y_\sigma)
\end{align}
where we used the properties \eqref{eq:Wytilde-ex1-super} and \eqref{eq:Wytilde-ex2-super} of $\widetilde W_{\alpha}^{\beta}(x;y)$. Consider next the matrix element $\widetilde W_{\lambda^-}^{\lambda^+}(x;y)$ and write it in terms of $W_{\lambda^-}^{\lambda^+}(x;y)$ using \eqref{eq:Wytilde-super}:
\begin{align}\nonumber
    \widetilde W_{\lambda^-}^{\lambda^+}(x;y)
    &=\bra{\lambda^-}F_{N}(y)W_{N}(x;y) F^{-1}_{N}(x) \ket{\lambda^+}
    \\
    &=
    \prod_{\substack{1\leq i<j \leq N\\ \lambda^+_i \neq  \lambda^+_j}} 
       t^{-\delta_{\lambda^+_j>0}}\dfrac{x_i-t x_j}{x_i - x_j}
       \prod_{\substack{1\leq i< j\leq N\\ 
     \lambda^+_i= \lambda^+_j>n}}
     \frac{x_i-t x_j}{x_i+x_j}
     \prod_{\substack{1\leq i< j\leq N\\ 
     \lambda^-_i= \lambda^-_j>n}}
     \frac{y_i+y_j}{y_i-t y_j}
     \times
    W_{\lambda^-}^{\lambda^+}(x;y)
    \label{eq:Wt-lamba}
\end{align}
where in the second line we used the properties \eqref{eq:F-diag-super} of the supersymmetric $F$-matrix. The matrix element $W_{\lambda^-}^{\lambda^+}(x;y)$ can be evaluated in a similar way as the matrix element $W_{(0^{N-k} 1^k)}^{(1^k0^{N-k})}(x;y)$ in the proof of Proposition \ref{prop:Wy}. For demonstration we choose $N=6$, $m=n=1$ and $l_0=l_1=l_2=2$. The graphical depiction of the partition function $W_{(001122)}^{(221100)}(x;y)$ shows several regions which are frozen and two regions which contain domain-wall type configurations which we enclosed in the dotted squares for the reference:
\begin{align}
\nonumber
&W_{(001122)}^{(221100)}(x;y)
=
\begin{tikzpicture}[scale=0.8, baseline=(current  bounding  box.center)]
\foreach\j/\lab in {1/0,2/0,3/0,4/0,5/0,6/0}
\draw[invarrow=0.975] (\j,0.5) -- node[pos=1,above] {$\scriptstyle \lab$} (\j,6.5);
\foreach\i/\lab in {6/2,5/2,4/1,3/1,2/0,1/0}
\draw[invarrow=0.975] (0.5,\i) -- node[pos=1,right] {$\scriptstyle \lab$} (6.5,\i);
\foreach\i/\lab in {6/x_1,5/x_2,4/x_3 ,3/x_4,2/x_5,1/x_6}
\node at (7.5,\i) {$\scriptstyle \lab$};
\foreach\i/\lab in {1/y_1,2/y_2,3/y_3,4/y_4,5/y_5,6/y_6}
\node at (\i,7.3) {$\scriptstyle \lab$};
\foreach\i/\lab in {6/0,5/0,4/0,3/0,2/0,1/0}
\node at (0.2,\i) {$\scriptstyle \lab$};
\foreach\i/\lab in {1/0,2/0,3/1,4/1,5/2,6/2}
\node at (\i,0.3) {$\scriptstyle \lab$};
\foreach\i in {1,2,3,4,5}
\node at (1,\i+0.5) {$\scriptstyle 0$};
\foreach\i in {1,2,3,4,5}
\node at (2,\i+0.5) {$\scriptstyle 0$};
\foreach\i/\lab in {1/1,2/1,3/,4/0,5/0}
\node at (3,\i+0.5) {$\scriptstyle \lab$};
\foreach\i/\lab in {1/1,2/1,3/,4/0,5/0}
\node at (4,\i+0.5) {$\scriptstyle \lab$};
\foreach\i/\lab in {1/2,2/2,3/2,4/2,5/}
\node at (5,\i+0.5) {$\scriptstyle \lab$};
\foreach\i/\lab in {1/2,2/2,3/2,4/2,5/}
\node at (6,\i+0.5) {$\scriptstyle \lab$};
\foreach\i in {1,2,3,4,5}
\node at (\i+0.5,1) {$\scriptstyle 0$};
\foreach\i in {1,2,3,4,5}
\node at (\i+0.5,2) {$\scriptstyle 0$};
\foreach\i/\lab in {1/0,2/0,3/,4/1,5/1}
\node at (\i+0.5,3) {$\scriptstyle \lab$};
\foreach\i/\lab in {1/0,2/0,3/,4/1,5/1}
\node at (\i+0.5,4) {$\scriptstyle \lab$};
\foreach\i in {1,2,3,4}
\node at (\i+0.5,5) {$\scriptstyle 0$};
\foreach\i in {1,2,3,4}
\node at (\i+0.5,6) {$\scriptstyle 0$};
\draw[dotted] (2+0.5-0.1,2+0.5-0.1) -- (2+0.5-0.1,4+0.5+0.1) -- (4+0.5+0.1,4+0.5+0.1) -- (4+0.5+0.1,2+0.5-0.1) -- (2+0.5-0.1,2+0.5-0.1);
\draw[dotted] (4+0.5-0.1,4+0.5-0.1) -- (4+0.5-0.1,6+1) -- (6+1,6+1) -- (6+1,4+0.5-0.1) -- (4+0.5-0.1,4+0.5-0.1);
\end{tikzpicture}\\
&=
\prod_{i\in\{3,4,5,6\}}\prod_{j\in\{5,6\}}
\frac{y_i-x_j}{y_i-t x_j}
\prod_{i\in\{5,6\}}\prod_{j\in\{3,4\}}
t \frac{y_i-x_j}{y_i-t x_j}
\times 
\begin{tikzpicture}[scale=0.8, baseline=(current  bounding  box.center)]
\foreach\j/\lab in {1/0,2/0}
\draw[invarrow=0.9] (\j,0.5) -- node[pos=1,above] {$\scriptstyle \lab$} (\j,2.5);
\foreach\i/\lab in {2/1,1/1}
\draw[invarrow=0.9] (0.5,\i) -- node[pos=1,right] {$\scriptstyle \lab$} (2.5,\i);
\foreach\i/\lab in {2/x_3 ,1/x_4}
\node at (3.2,\i) {$\scriptstyle \lab$};
\foreach\i/\lab in {1/y_3,2/y_4}
\node at (\i,3.2) {$\scriptstyle \lab$};
\foreach\i/\lab in {2/0,1/0}
\node at (0.2,\i) {$\scriptstyle \lab$};
\foreach\i/\lab in {1/1,2/1}
\node at (\i,0.3) {$\scriptstyle \lab$};
\end{tikzpicture}
\times 
\begin{tikzpicture}[scale=0.8, baseline=(current  bounding  box.center)]
\foreach\j/\lab in {1/0,2/0}
\draw[invarrow=0.9] (\j,0.5) -- node[pos=1,above] {$\scriptstyle \lab$} (\j,2.5);
\foreach\i/\lab in {2/2,1/2}
\draw[invarrow=0.9] (0.5,\i) -- node[pos=1,right] {$\scriptstyle \lab$} (2.5,\i);
\foreach\i/\lab in {2/x_1 ,1/x_2}
\node at (3.2,\i) {$\scriptstyle \lab$};
\foreach\i/\lab in {1/y_5,2/y_6}
\node at (\i,3.2) {$\scriptstyle \lab$};
\foreach\i/\lab in {2/0,1/0}
\node at (0.2,\i) {$\scriptstyle \lab$};
\foreach\i/\lab in {1/2,2/2}
\node at (\i,0.3) {$\scriptstyle \lab$};
\end{tikzpicture}
\nonumber\\
&=
\prod_{i\in\{3,4,5,6\}}\prod_{j\in\{5,6\}}
\frac{y_i-x_j}{y_i-t x_j}
\prod_{i\in\{5,6\}}\prod_{j\in\{3,4\}}
t
\frac{y_i-x_j}{y_i-t x_j}\times
D_2^{(1)}(x_3,x_4;y_3,y_4)
D_2^{(2)}(x_1,x_2;y_5,y_6)
\label{eq:W001122}
\end{align}
The partial freezing of the last two columns in the graphical representation given in the first line in \eqref{eq:W001122} is due to the fact that the $2$'s on the right boundary must take either left or down steps and drop to the bottom boundary in the last two columns. This leaves us with domain-wall configurations in the intersections of the first two rows and last two columns. The remaining part of the lattice is of the form \eqref{eq:W0011}. As a result we need to evaluate two domain-wall partition functions $D_{2}^{(1)}$ and $D_{2}^{(2)}$ as indicated in the second line in \eqref{eq:W001122} and take into account various factors coming from the vertices:
\begin{align*}
\begin{tikzpicture}[scale=0.6,baseline=-2pt]
\node[left] at (0,0+0.5) {$\ss 0$};
\node[right] at (2,0+0.5) {$\ss 0$};
\node[above] at (1,1+0.5) {$\ss 1$};
\node[below] at (1,-1+0.5) {$\ss 1$};
\draw (0,0+0.5) --(2,0+0.5);
\draw (1,1+0.5) --(1,-1+0.5);
\end{tikzpicture}    
\qquad
\begin{tikzpicture}[scale=0.6,baseline=-2pt]
\node[left] at (0,0+0.5) {$\ss 0$};
\node[right] at (2,0+0.5) {$\ss 0$};
\node[above] at (1,1+0.5) {$\ss 2$};
\node[below] at (1,-1+0.5) {$\ss 2$};
\draw (0,0+0.5) --(2,0+0.5);
\draw (1,1+0.5) --(1,-1+0.5);
\end{tikzpicture}    
\qquad
\begin{tikzpicture}[scale=0.6,baseline=-2pt]
\node[left] at (0,0+0.5) {$\ss 1$};
\node[right] at (2,0+0.5) {$\ss 1$};
\node[above] at (1,1+0.5) {$\ss 2$};
\node[below] at (1,-1+0.5) {$\ss 2$};
\draw (0,0+0.5) --(2,0+0.5);
\draw (1,1+0.5) --(1,-1+0.5);
\end{tikzpicture}    
\end{align*}
where the first two are of the fourth type in \eqref{tikz:coloredvertices} and the third one is of the third type in \eqref{tikz:coloredvertices} (recall that $0$ is considered the highest colour label). 

Let us view the lattice configurations in the first line in \eqref{eq:W001122} as a block matrix. 
The example of \eqref{eq:W001122} shows that the anti-diagonal blocks must have domain-wall configurations, above the anti-diagonal the configurations are trivial with all edges labelled $0$ and below the anti-diagonal we have blocks with vertices of type four and type three in \eqref{tikz:coloredvertices}. The latter blocks can be expressed in terms of $Z_{\alpha,\beta}^{\beta,\alpha}$ with $\alpha=(i^M)$ and $\beta=(j^K)$ for some $i< j$ and evaluated:
\begin{align}
    \label{eq:Zij}
    Z_{(i^M),(j^K)}^{(j^K),(i^M)}(x;y)
    =
    t^{\delta_{i>0}M K}
    \prod_{a=1}^M \prod_{b=1}^K
\frac{y_b-x_a}{y_b-t x_a}
\end{align}
We can write $W^{\lambda^+}_{\lambda^-}(x;y)$ in terms of such block matrix:
\begin{align}\label{eq:W-prod}
    W^{\lambda^+}_{\lambda^-}(x;y)
    =
    \prod_{a,b} \left(
\begin{array}{ccccc}
 1 & \ldots & 1 & \ldots & W_{m+n} \\
 \vdots & \vdots & \vdots & \iddots & \vdots \\
 1 & \ldots & W_k & \ldots & W _{k,m+n} \\
 \vdots & \iddots & \vdots & \vdots & \vdots \\
 W_0 & \ldots & W _{0,k} & \ldots & W _{0,m+n} \\
\end{array}
\right)_{a,b}
\end{align}
where we shortened the notations:
\begin{align}
    \label{eq:short_W-chi}
    W_k = D_{l_k}^{(k)}(x^{(k)},y^{(k)}),
\qquad W_{i,j} = Z_{(i^{l_i}),(j^{l_j})}^{(j^{l_j}),(i^{l_i})}(x^{(i)},y^{(j)})
\end{align}
and $(x^{(k)})$ and $(y^{(k)})$ denote $l_k$ spectral parameters $x_a$ and $y_b$ with $a$ and $b$ such that $\lambda^+_a=k$ and $\lambda^-_b=k$ for all $a,b$. By evaluating the product in \eqref{eq:W-prod} we get:
\begin{align}\label{eq:Wy-super}
    W_{\lambda^-}^{\lambda^+}(x;y)
    =
    \prod_{k=0}^{n+m}
    D_{l_k}^{(k)}(x^{(k)},y^{(k)})
    \prod_{0\leq i<j\leq n+m}
    Z_{(i^{l_i}),(j^{l_j})}^{(j^{l_j}),(i^{l_i})}(x^{(i)},y^{(j)})
\end{align}
This expression allows us to compute $\widetilde W_{\lambda^-}^{\lambda^+}(x;y)$ using \eqref{eq:Wt-lamba}. The powers of $t$ appearing in \eqref{eq:Wt-lamba} combine into $t^{-l_0(N-l_0)}$ and, after the substitution of \eqref{eq:Wy-super} into \eqref{eq:Wt-lamba}, are canceled against the powers of $t$ coming from \eqref{eq:Zij}. After inserting $\widetilde W_{\lambda^-}^{\lambda^+}(x;y)$ into \eqref{eq:Wt-Wt-lambda} we get the final expression \eqref{eq:Wy-mat-super} for all matrix elements $\widetilde W_{\alpha}^{\beta}(x;y)$.
\end{proof}

\subsection{Trace of \texorpdfstring{$W_N(x)$}{} and the shuffle product}\label{sec:trace-super}
In this section we define the supersymmetric partition function $T_N(x)$ as the trace of the matrix $W_N(x)$. We show that the trace can be rewritten as a symmetrization over the parameters $x_1\ldots x_N$ acting on specific diagonal elements of $\widetilde W_N(x)$. These diagonal elements can be computed using Proposition \ref{prop:Wy-super}. The result of this computation, summarized in Theorem \ref{thm:T-shuffle-super}, gives a formula for $T_N(x)$ in terms of the shuffle product of $\Ac$. 
\begin{dfn}\label{def:T-super}
Let $z_0,z_1\ldots z_n$ and $w_1\ldots w_m$ be two sets of indeterminates. Define the following function:
    \begin{align}\label{eq:T_W-super}
        T_N(x;z,w) := \sum_{\alpha\in\{0,1\ldots n+m\}^N} \prod_{i=0}^n z_{\alpha_i}
        \prod_{i=1}^{m} (-w_{\alpha_{n+i}})
        \bra{\alpha} W_N(x)\ket{\alpha}
    \end{align}
\end{dfn}
The graphical counterpart of \eqref{eq:T_W-super} is given by \eqref{eq:T_def} with  vertices interpreted as in \eqref{tikz:coloredvertices}. 
Using this we can see that the function $T_N(x;z,w)$ is equal to the conic partition function $Z_N(x;z,w)$ which was defined in the introduction \eqref{eq:Z} using lattice paths. The matrix $W_N(x)$ is block diagonal with blocks labelled by a collection of non-negative integers $(l_0,l_1\ldots l_{n+m})$ where $l_k$ is the number of $k$'s in $\alpha$ and $\beta$ in the matrix elements $W_{\alpha}^\beta(x)$. This means that $T_N(x;z,w)$ is the (graded) trace of $W_N(x)$ where the monomials in $z$ and $w$ parameterize different blocks of $W_N(x)$.
\begin{lemma}
The function $T_N(x;z,w)$ is symmetric in $(x_1\ldots x_N)$.
\label{lem:Tsymmetric-super}
\end{lemma}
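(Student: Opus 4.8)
The plan is to mirror the proof of Lemma~\ref{lem:Tsymmetric} from the six vertex case, since the only ingredients used there — an exchange relation for $W_N(x)$ and the unitarity of $\check R$ — are available verbatim in the supersymmetric setting via \eqref{eq:exW} and \eqref{eq:unitarity}. First I would rewrite \eqref{eq:exW} with the help of unitarity in the conjugated form
\[
W_N(\ldots x_{i+1},x_i \ldots)=
\check R_i(x_{i+1}/x_i)\, W_N(x)\, \check R_i(x_{i}/x_{i+1}),
\qquad i=1\ldots N-1,
\]
which expresses the effect of an adjacent transposition of spectral parameters as conjugation by a single $\check R$-matrix.

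Next I would substitute this into the definition \eqref{eq:T_W-super} of $T_N(\ldots x_{i+1},x_i\ldots;z,w)$ and read the weighted sum over $\alpha$ as the trace $\mathrm{Tr}\bigl(D\,\check R_i(x_{i+1}/x_i)\, W_N(x)\,\check R_i(x_i/x_{i+1})\bigr)$, where $D$ is the diagonal operator that assigns to $\ket\alpha$ the monomial weight of \eqref{eq:T_W-super}, namely a factor $z_k$ for each site of $\alpha$ carrying a bosonic colour $k\le n$ and a factor $-w_{k-n}$ for each site carrying a fermionic colour $k>n$. Using cyclicity of the trace to move the rightmost factor $\check R_i(x_i/x_{i+1})$ to the left, commuting it past $D$, and then collapsing $\check R_i(x_i/x_{i+1})\check R_i(x_{i+1}/x_i)=\mathrm{id}$ by unitarity \eqref{eq:unitarity}, the expression returns to $\mathrm{Tr}(D\,W_N(x))=T_N(x;z,w)$. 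Since the adjacent transpositions $s_i$ generate $\mathcal S_N$, this establishes invariance under the full symmetric group.

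The one step requiring justification — and the place where the fermionic signs could a priori cause trouble — is the commutation of the weight operator $D$ past $\check R_i(x_i/x_{i+1})$ inside the trace. This holds because the supersymmetric $R$-matrix conserves colour: every admissible vertex in \eqref{tikz:coloredvertices} carries the same multiset of colours on its two outgoing edges as on its two incoming edges, so $\bra{\gamma}\check R_i\ket{\alpha}\neq 0$ forces $\alpha$ and $\gamma$ to share the same colour content on sites $i,i+1$ and to agree elsewhere. Consequently the monomial weight, which depends only on the colour content of $\alpha$, is unchanged on the support of $\check R_i$, signs included; equivalently $[D,\check R_i]=0$. This is exactly the content-preservation used implicitly in Lemma~\ref{lem:Tsymmetric}, and no genuine super-sign ever intervenes because all the manipulations merely rearrange scalar matrix entries rather than reorder graded operators.
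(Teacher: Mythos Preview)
Your proof is correct and follows essentially the same approach as the paper, which simply states that the argument ``follows the same lines as the proof of Lemma~\ref{lem:Tsymmetric}'' and relies on the identical ingredients: the conjugated exchange relation, cyclicity of the trace, and unitarity. Your explicit remark that the diagonal weight operator $D$ commutes with $\check R_i$ by colour conservation is a welcome clarification of a step the paper leaves implicit.
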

The proof follows the same lines as the proof of Lemma \ref{lem:Tsymmetric}. We arrive at the main theorem of this section.
\begin{thm}
The function $T_N(x;z,w)$ can be written as a shuffle product:
\label{thm:T-shuffle-super}
\begin{align}
    \nonumber
            T_N(x;z,w)=
         \sum_{\substack{l_0\ldots l_{n+m}\geq 0 \\ l_0+\cdots +l_{n+m}=N}} 
         &\prod_{i=0}^n z_i^{l_i}
         \prod_{i=1}^{m} (-w_i)^{l_{n+i}}
         \frac{(1-t^{-1})^{N-l_0}}{(1-q t^{-1})^{N-l_0}}\\
         &E_{l_{n+m}}(t^{-1}) * \dots * E_{l_{n+1}}(t^{-1})
         * H_{l_n}(t^{-1}) * \cdots * H_{l_1}(t^{-1}) 
         * E_{l_0}(t q^{-1})
\label{eq:T-sh-super}         
\end{align}
\end{thm}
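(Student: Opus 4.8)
The plan is to transcribe, step for step, the proof of the six-vertex case Theorem~\ref{thm:T-shuffle}, replacing its two structural inputs by their supersymmetric counterparts Proposition~\ref{prop:Wy-super} and Lemma~\ref{lem:IK-super}. First I would insert the conjugation \eqref{eq:Wtilde-super} into Definition~\ref{def:T-super} and use cyclicity of the trace to pass from $W_N(x)$ to its transform, obtaining $T_N(x;z,w)=\sum_{\alpha}(\text{weight})\,\widetilde W_\alpha^\alpha(x)$ where the weight of $\alpha$ is the product over positions of $z_{\alpha_j}$ (bosonic labels) and $-w_{\alpha_j-n}$ (fermionic labels). I would then group the sum over $\alpha$ by its colour content $(l_0\ldots l_{n+m})$, $\sum_k l_k=N$: each such $\alpha$ is a permutation $\sigma(\lambda^-)$ of $\lambda^-=(0^{l_0}\ldots(n+m)^{l_{n+m}})$, and summing over $\sigma\in\mathcal S_N$ with prefactor $1/(l_0!\cdots l_{n+m}!)$ corrects for the stabilizer permuting equal colours. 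This is the exact analogue of \eqref{eq:Tsigma-Wt}.

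The heart of the argument is the evaluation of the diagonal element $\widetilde W_\alpha^\alpha(x)$ from Proposition~\ref{prop:Wy-super} at the conic specialization $y_i=qx_i$ (with $P(k)=S(k)$ since $\alpha=\beta$). I would check the three simplifications that occur there: the inter-block factors collapse to $\zeta$'s via the identity $\frac{x_i-tx_j}{x_i-x_j}\frac{qx_i-x_j}{qx_i-tx_j}=\zeta(x_i/x_j)$ already used in \eqref{eq:W-mat}; the fermionic self-interaction factors cancel, $\frac{y_i+y_j}{y_i-ty_j}\frac{x_i-tx_j}{x_i+x_j}\big|_{y=qx}=1$; and the diagonal blocks reduce through Lemma~\ref{lem:IK-super}. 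For $0<k\le n$ the bosonic block gives $\frac{(1-t^{-1})^{l_k}}{(1-qt^{-1})^{l_k}}H_{l_k}(t^{-1})$ by \eqref{eq:H-Wx}, while $k=0$ yields the constant $1=E_{l_0}(tq^{-1})$ (recall $E_j(t/q)=1$).

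The one genuinely new computation, which I expect to be the main obstacle, is the fermionic domain wall at $y_i=qx_i$. Starting from the factorized formula \eqref{eq:IK_det-super}, I would split the denominator $\prod_{i,j}(tx_i-qx_j)$ into its diagonal part $(t-q)^{M}\prod_i x_i$ and its off-diagonal part, extract the powers of $q$ from $(qx_i-tqx_j)=q(x_i-tx_j)$ and of $t$ from the numerator, and compare the resulting completely factored product with Definition~\ref{def:E-el}. The $q$- and $t$-powers cancel so that
\begin{align*}
D^{(k)}_{M}(x;qx)=\frac{(1-t^{-1})^{M}}{(1-qt^{-1})^{M}}\,E_{M}(t^{-1}),\qquad n<k\le n+m,
\end{align*}
which mirrors the bosonic relation \eqref{eq:H-Wx}. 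Taking the product of the per-block prefactors over all non-empty colours then produces the uniform coefficient $\frac{(1-t^{-1})^{N-l_0}}{(1-qt^{-1})^{N-l_0}}$ of \eqref{eq:T-sh-super}.

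Finally I would recognise the surviving expression — the per-colour functions $E_{l_k}(t^{-1})$, $H_{l_k}(t^{-1})$, $E_{l_0}(tq^{-1})$ carried on disjoint variable blocks and dressed by $\prod_{k_1<k_2}\prod_{j\in S(k_1),\,i\in S(k_2)}\zeta(x_i/x_j)$ — as the symmetrization form \eqref{eq:sh} of an $(n+m+1)$-fold shuffle product, just as the two-block case closed Theorem~\ref{thm:T-shuffle}. The orientation of the $\zeta$ arguments (numerator index in the block of larger colour) fixes the shuffle order to run from the largest colour down to the smallest, which is precisely the order $E_{l_{n+m}}(t^{-1})*\cdots*E_{l_{n+1}}(t^{-1})*H_{l_n}(t^{-1})*\cdots*H_{l_1}(t^{-1})*E_{l_0}(tq^{-1})$ of \eqref{eq:T-sh-super}; since all factors lie in the commutative algebra $\Ac$ the ordering is in fact immaterial to the value. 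Once the fermionic domain-wall identity above is established, the remaining reorganization into the multi-fold shuffle is routine and identical in spirit to the six-vertex computation.
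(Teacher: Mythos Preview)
Your proposal is correct and follows essentially the same route as the paper's proof: pass to $\widetilde W_N(x)$ via cyclicity, stratify the trace by colour content and rewrite as a symmetrization over $\mathcal S_N$, specialize Proposition~\ref{prop:Wy-super} to $y_i=qx_i$ so that the off-diagonal blocks become $\zeta$'s and the fermionic $F$-eigenvalue factors cancel, identify the three types of diagonal domain-wall blocks with $E_{l_0}(tq^{-1})$, $H_{l_k}(t^{-1})$, $E_{l_k}(t^{-1})$ (the last via the factorized formula of Lemma~\ref{lem:IK-super}), and recognise the resulting symmetrization as an iterated shuffle product with the ordering dictated by the $\zeta$ arguments. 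Your observation that the fermionic self-interaction factors cancel at $y=qx$ is exactly what makes \eqref{eq:Wy-mat-super} collapse to the clean form the paper records as \eqref{eq:W-mat-super}.
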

\begin{proof}
By the same arguments leading to \eqref{eq:Tsigma-Wt} we convert the trace in \eqref{eq:T_W-super} into a summation over the symmetric group:
\begin{align}
    \label{eq:Tsigma-Wt-super}
            T_N(x;z,w)=
         \sum_{\substack{l_0\ldots l_{n+m}\geq 0 \\ l_0+\cdots +l_{n+m}=N}} 
         \frac{1}{l_0!\cdots l_{n+m}!}  
         \prod_{i=0}^n z_i^{l_i}
         \prod_{i=1}^{m} (-w_i)^{l_{n+i}}
         \sum_{\sigma \in \mathcal S_N}
         \widetilde W_{\sigma(\lambda^-)}^{\sigma(\lambda^-)}(x)
\end{align}
where $\lambda^-=(0^{l_0}1^{l_1}\ldots (n+m)^{l_{n+m}})$. 
Let us show that the summation over $\sigma$ can be written as a shuffle product. The diagonal elements $\widetilde W_{\alpha}^{\alpha}(x)$ in \eqref{eq:Tsigma-Wt-super} can be computed with the help of Proposition \ref{prop:Wy-super} in which we set $y_i=q x_i$\footnote{We note that in the expression \eqref{eq:W-mat-super} the eigenvalues of the $F$-matrix from \eqref{eq:F-diag-super} appear in the form of a ratio. This ratio is the only explicit information about the $F$-matrix which is needed to compute $T_N$.}:
\begin{align}
    \label{eq:W-mat-super}
\widetilde W_{\alpha}^{\alpha}(x)= 
    \prod_{k=0}^{n+m}
    D_{l_k}^{(k)}(x_{S(k)})
    \times 
    \prod_{k_1=0}^{n+m}\prod_{k_2=k_1+1}^{n+m}
    \prod_{j\in S(k_1)}
    \prod_{i\in S(k_2)}    
    \frac{x_i-t x_j}{x_i-x_j}
    \frac{q x_{i}-x_{j}}{q x_{i}-t x_{j}}    
\end{align}
where $D_M^{(k)}(x):= D_M^{(k)}(x_1\ldots x_M;q x_1\ldots q x_M)$ and $S(k)\in [N]$ is the subset that records the positions of $k$'s in $\alpha$. The factors $D_{l_k}^{(k)}(x)$ are the domain-wall partition functions from Lemma \ref{lem:IK-super} in the special case $y_i = q x_i$. We can match these partition functions with the shuffle algebra functions:
\begin{align}\label{eq:EEH-Wx}
        D_{M}^{(k)}(x)= 
        \begin{cases}
        E_M(t q^{-1}) & k=0 \\
        \dfrac{(1-t^{-1})^M}{(1-q t^{-1})^M}
        H_M(t^{-1})  & 0< k \leq n \\
        \dfrac{(1-t^{-1})^M}{(1-q t^{-1})^M}E_M(t^{-1})
        & n< k \leq n+m
    \end{cases}
\end{align}
where we remind that $E_M(t q^{-1})=1$ but considered to be a function in $(x_1\ldots x_M)$. By inserting \eqref{eq:EEH-Wx} into \eqref{eq:W-mat-super} and recalling the definition of $\zeta$ in \eqref{eq:zeta} we get:
\begin{align}
    \label{eq:WX-super}
\widetilde W_{\alpha}^{\alpha}(x)= 
    \frac{(1-t^{-1})^{N-l_0}}{(1-q t^{-1})^{N-l_0}}
    E_{l_0}(t q^{-1};x_{S(0)})
    &\prod_{k=1}^{n} H_{l_k}(t^{-1};x_{S(k)})
    \prod_{k=n+1}^{n+m} E_{l_k}(t^{-1};x_{S(k)})\\
    &\prod_{k_1=0}^{n+m}\prod_{k_2=k_1+1}^{n+m}
    \prod_{i\in S(k_1)}
    \prod_{j\in S(k_2)}    
    \zeta(x_j/x_i)\nonumber
\end{align}
Inserting this formula for $\widetilde W_{\alpha}^{\alpha}(x)$ into \eqref{eq:Tsigma-Wt-super} gives an expression which can be rewritten using the shuffle product \eqref{eq:sh}. Note that the positions of the variables $x_j$ and $x_i$ in  $\zeta(x_j/x_i)$ decide the order in which the shuffle product is taken: the functions ($H$ or $E$) which depend on $S(k)$ with higher values of $k$ should be placed on the left. This explains the reason for the ordering of the factors in the shuffle product in \eqref{eq:T-sh-super}.
\end{proof}
\begin{cor}
Consider the generating function of $T_N(x;z,w)$:
\begin{align}\label{eq:Tv_def-super}
    T(v;z,w) = \sum_{N=0}^{\infty} v^N T_N(x;z,w)
\end{align}
We have:
\begin{align}\label{eq:Tv-super}
    T(v;z,w) =
    \exp_*\left(\sum_{k>0}\frac{1}{k}\frac{1-t^k}{1-q^k}\left(\sum_{i=1}^{m} w_i^k
    -\sum_{i=1}^n z_i^k
    -\frac{q^k-t^k}{1-t^k}z_0^k  \right) v^{k}S_k\right)
\end{align}
\end{cor}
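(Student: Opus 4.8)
The plan is to mirror, \emph{mutatis mutandis}, the computation that produced the six-vertex generating function \eqref{eq:Tv} from Theorem~\ref{thm:T-shuffle}, now starting from the more general shuffle-product formula \eqref{eq:T-sh-super} of Theorem~\ref{thm:T-shuffle-super}. First I would multiply \eqref{eq:T-sh-super} by $v^N$ and sum over $N\ge0$. Because the labels $(l_0,\dots,l_{n+m})$ enter only through the constraint $l_0+\cdots+l_{n+m}=N$, summing over $N$ together with all admissible compositions is the same as summing over all $(l_0,\dots,l_{n+m})\in\mathbb Z_{\ge0}^{n+m+1}$ \emph{independently}, while $v^N=\prod_{k=0}^{n+m}v^{l_k}$ and $\tfrac{(1-t^{-1})^{N-l_0}}{(1-qt^{-1})^{N-l_0}}=\prod_{k=1}^{n+m}\tfrac{(1-t^{-1})^{l_k}}{(1-qt^{-1})^{l_k}}$ distribute over the factors.

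The structural step is to note that every factor $E_{l_k}(t^{-1})$, $H_{l_k}(t^{-1})$ and $E_{l_0}(tq^{-1})$ in \eqref{eq:T-sh-super} lies in the commutative subalgebra $\Ac$, so the shuffle product is commutative and the prescribed ordering of factors is irrelevant. This commutativity lets the unconstrained sum factorize into a shuffle product of $n+m+1$ independent one-variable generating functions of the type \eqref{eq:genfs}:
\begin{multline*}
T(v;z,w)=E\!\left(-\tfrac{1-t^{-1}}{1-qt^{-1}}\,v w_m;t^{-1}\right)*\cdots*E\!\left(-\tfrac{1-t^{-1}}{1-qt^{-1}}\,v w_1;t^{-1}\right)\\
{}*H\!\left(\tfrac{1-t^{-1}}{1-qt^{-1}}\,v z_n;t^{-1}\right)*\cdots*H\!\left(\tfrac{1-t^{-1}}{1-qt^{-1}}\,v z_1;t^{-1}\right)*E(v z_0;tq^{-1}).
\end{multline*}
I would then invoke Lemma~\ref{lem:EH_gen} to rewrite each factor via \eqref{eq:E-gen} and \eqref{eq:H-gen} as a shuffle-exponential and, once more using commutativity in $\Ac$, merge all of them into a single $\exp_*$.

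It remains to simplify the resulting exponent term by term and check it against \eqref{eq:Tv-super}. The two identities that drive all the cancellations are $\tfrac{(t-q)^r}{(1-qt^{-1})^r}=t^r$ and $\tfrac{(t-q)^r}{(1-tq^{-1})^r}=(-q)^r$. For the bosonic factors $H(\cdot;t^{-1})$ with $q_a=t^{-1}$ these give a contribution $-\tfrac1r\tfrac{1-t^r}{1-q^r}z_k^r v^r S_r$; for $E(vz_0;tq^{-1})$ with $q_a=tq^{-1}$ one gets $-\tfrac1r\tfrac{q^r-t^r}{1-q^r}z_0^r v^r S_r=-\tfrac1r\tfrac{1-t^r}{1-q^r}\tfrac{q^r-t^r}{1-t^r}z_0^r v^r S_r$, exactly as in the six-vertex case. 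The only genuinely new point, and the one requiring care, is the fermionic $E(\cdot;t^{-1})$ contribution: here the sign $(-1)^{r+1}$ coming from the $E$-exponent in \eqref{eq:E-gen} combines with the $(-1)^r$ produced by the negative argument $-w_i$, flipping the overall sign relative to the bosonic case and yielding $+\tfrac1r\tfrac{1-t^r}{1-q^r}w_i^r v^r S_r$. Summing the $m$ fermionic, $n$ bosonic and single empty contributions reproduces precisely the exponent in \eqref{eq:Tv-super}. The main obstacle is thus not conceptual but bookkeeping: keeping the two sources of signs in the fermionic term straight and confirming that the $q_a$-dependent prefactors collapse as claimed.
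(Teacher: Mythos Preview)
Your proposal is correct and follows essentially the same route as the paper: sum \eqref{eq:T-sh-super} over $N$, use commutativity in $\Ac$ to factor into $n+m+1$ independent generating series $E(\cdot;t^{-1})$, $H(\cdot;t^{-1})$, $E(\cdot;tq^{-1})$, apply Lemma~\ref{lem:EH_gen}, and merge the shuffle-exponentials. Your extra bookkeeping on the fermionic sign (the $(-1)^{r+1}$ from \eqref{eq:E-gen} against the $(-1)^r$ from the argument $-w_i$) is correct and in fact more explicit than what the paper spells out.
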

\begin{proof}
We first note that the elements $E_k(t^{-1})$, $H_k(t^{-1})$ and $E_k(t q^{-1})$ belong to the commutative shuffle algebra $\Ac$. The proof of \eqref{eq:Tv-super} follows by writing \eqref{eq:Tv_def-super} with $T_N$ given by \eqref{eq:T-sh-super}. Then summing over $N$ produces a shuffle product of $m$ generating functions of $E_k(t^{-1})$, $n$ generating functions of $H_k(t^{-1})$ and a single generating function of $E_k( tq^{-1})$ where all generating functions have different generating parameters. Then we can use the exponential generating functions \eqref{eq:E-gen} and \eqref{eq:H-gen} which allows us to rewrite $T(v;z,w)$ in the form given in \eqref{eq:Tv-super}. The last step is possible because we are dealing with commutative elements.
\end{proof}
The partition function $T_N(v;z,w)$ is equal to $Z_N$ from the introduction \eqref{eq:Z}. In Theorem \ref{thm:Z_intro} we wrote $Z_N$ in terms of yet another conic partition function $L_N$. This result follows from \eqref{eq:L-S} below.
\begin{dfn}
Consider the matrix $W_N(x)$ associated to $U_t(\widehat{sl}_{2|1})$ vertex model, i.e. in the matrix elements $W_{\gamma}^\delta(x)$, with $\gamma,\delta\in \{0,1,2\}^N$, the label $1$ denotes the bosonic colour and $2$ denotes the fermionic colour.
We define $L_N(x)$ to be the following partition function:  \begin{align}\label{eq:L-trace}
        L_N(x) := \sum_{\alpha\in\{1,2\}^N} 
        (-1)^{m_2(\alpha)}
        m_2(\alpha)
        \bra{\alpha} W_N(x)\ket{\alpha}
\end{align}
where $m_2(\alpha)$ is the multiplicity of the fermionic index $2$ in $\alpha$. 
\end{dfn}
Using the same methods as in the proof of Theorem \ref{thm:T-shuffle-super} we can write $L_N(x)$ in terms of a shuffle product of $H_M(t^{-1})$ and $E_M(t^{-1})$ which by \eqref{eq:EEH-Wx} are equal to $D^{(1)}_M$ and $D^{(2)}_M$ up to a factor. Therefore the partition function $L_N(x)$ can be expressed as a combination of the fermionic and bosonic six vertex domain-wall partition functions:
\begin{align}
    \label{eq:L-DW}
    L_N(x) = \sum_{j=1}^N (-1)^j j D_{N-j}^{(1)}*D_j^{(2)}
\end{align}
The following proposition provides a lattice path realization of the shuffle algebra elements $S_N\in \Ac_N$.
\begin{prop}
The partition functions $L_N(x)$ and the elements $S_N\in\Ac_N$ are equal up to a factor:
\begin{align}
    \label{eq:L-S}
    L_N(x) = \frac{1-t^N}{1-q^N}S_N(x)
\end{align}
\end{prop}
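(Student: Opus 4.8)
The plan is to pass to generating functions and reduce everything to the shuffle-exponential identities of Lemma~\ref{lem:EH_gen}. First I would introduce the generating series
\begin{align*}
\mathcal{D}^{(1)}(v) := \sum_{M\geq 0} v^M D_M^{(1)}, \qquad \mathcal{D}^{(2)}(v) := \sum_{M\geq 0} v^M D_M^{(2)},
\end{align*}
where $D_M^{(k)}:=D_M^{(k)}(x_1\ldots x_M; qx_1\ldots qx_M)$. By \eqref{eq:EEH-Wx} we have $D_M^{(1)} = \tfrac{(1-t^{-1})^M}{(1-qt^{-1})^M} H_M(t^{-1})$ and $D_M^{(2)} = \tfrac{(1-t^{-1})^M}{(1-qt^{-1})^M} E_M(t^{-1})$, so that $\mathcal{D}^{(1)}(v) = H(u;t^{-1})$ and $\mathcal{D}^{(2)}(v) = E(u;t^{-1})$ after the rescaling $u = v\,\tfrac{1-t^{-1}}{1-qt^{-1}}$. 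Feeding this rescaling into \eqref{eq:H-gen} and \eqref{eq:E-gen} with $q_a = t^{-1}$ and simplifying with $1-qt^{-1}=(t-q)/t$ (so that $\tfrac{(t-q)^r}{(1-qt^{-1})^r}=t^r$ and the coefficient of $\tfrac1r v^r S_r$ in $H(u;t^{-1})$ becomes $\tfrac{1-t^{-r}}{1-q^r}\,t^r=-\tfrac{1-t^r}{1-q^r}$), I expect to obtain the clean expressions
\begin{align*}
\mathcal{D}^{(1)}(v) = \exp_*(-C), \qquad \mathcal{D}^{(2)}(-v) = \exp_*(C), \qquad C := \sum_{r>0}\frac{1}{r}\frac{1-t^r}{1-q^r}\, v^r S_r,
\end{align*}
the sign in the second identity arising because $(-1)^r(-v)^r = v^r$ turns the alternating exponent of $E$ into $C$.

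Next I would assemble the generating function $\mathcal{L}(v) := \sum_{N\geq 0} v^N L_N$. Since the shuffle product respects the grading by number of arguments, the convolution \eqref{eq:L-DW} factorizes as
\begin{align*}
\mathcal{L}(v) = \mathcal{D}^{(1)}(v) * \Big(\sum_{j\geq 0}(-1)^j\, j\, v^j D_j^{(2)}\Big) = \mathcal{D}^{(1)}(v) * \Big(v\frac{d}{dv}\mathcal{D}^{(2)}(-v)\Big),
\end{align*}
where the combinatorial factor $j$ is produced by the Euler operator $v\,d/dv$ applied to $\mathcal{D}^{(2)}(-v)$. Because $\Ac$ is commutative, $C=C(v)$ is a power series with coefficients in $\Ac$ and the usual chain rule gives $v\frac{d}{dv}\exp_*(C) = \big(\sum_{r>0}\frac{1-t^r}{1-q^r}v^r S_r\big)*\exp_*(C)$. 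Substituting the two shuffle-exponentials and using $\exp_*(-C)*\exp_*(C)=1$ (again valid because the arguments commute) collapses the product to
\begin{align*}
\mathcal{L}(v) = \sum_{r>0}\frac{1-t^r}{1-q^r}\, v^r S_r.
\end{align*}
Reading off the coefficient of $v^N$ yields $L_N = \tfrac{1-t^N}{1-q^N}S_N$, which is the claim.

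The only genuinely delicate points are bookkeeping ones: first, verifying that the rescaling together with $1-qt^{-1}=(t-q)/t$ really reduces the two exponents of Lemma~\ref{lem:EH_gen} to exactly $\mp C$ with no leftover $t$- or $q$-factors and with the correct relative sign; and second, justifying the interchange $v\frac{d}{dv}\exp_*(C) = (vC')*\exp_*(C)$ and the cancellation $\exp_*(-C)*\exp_*(C)=1$, both of which rest essentially on the commutativity of the shuffle subalgebra $\Ac$ established earlier. Once commutativity is invoked no Yang--Baxter or $F$-matrix input is needed: the statement becomes a purely formal identity among the distinguished commuting elements $S_r$, $H_M(t^{-1})$ and $E_M(t^{-1})$ of $\Ac$.
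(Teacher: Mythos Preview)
Your argument is correct. You start from the convolution \eqref{eq:L-DW}, identify the two factors with $\exp_*(\mp C)$ via Lemma~\ref{lem:EH_gen} after the rescaling $u=v\frac{1-t^{-1}}{1-qt^{-1}}$, and then collapse $\exp_*(-C)*(vC')*\exp_*(C)$ to $vC'$ using commutativity of $\Ac$. The bookkeeping you flag as delicate indeed works out: $\frac{(t-q)^r}{(1-qt^{-1})^r}=t^r$, so the exponent in $H(u;t^{-1})$ becomes $\sum_r \frac{1}{r}\frac{t^r-1}{1-q^r}v^rS_r=-C$, and the sign flip $v\mapsto -v$ in $E$ absorbs the alternating factor to give $+C$.

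The paper's proof is essentially the same computation packaged differently. Rather than working from \eqref{eq:L-DW} and applying the Euler operator $v\,\partial_v$, the paper observes that $L_N=\lim_{w\to 1}\partial_w\, T_N(x;0,1,w)$ and differentiates the already-established exponential formula \eqref{eq:Tv-super} for $T(v;0,1,w)=\exp_*\big(\sum_r\frac{1}{r}\frac{1-t^r}{1-q^r}(w^r-1)v^rS_r\big)$ in the fugacity $w$; setting $w=1$ kills the exponential and leaves $\sum_r\frac{1-t^r}{1-q^r}v^rS_r$. So the two proofs differ only in the choice of differentiation variable ($v$ versus $w$) and in whether one routes through $T_N$ or goes straight from \eqref{eq:L-DW}; the underlying mechanism---differentiating a commutative shuffle-exponential and exploiting $\exp_*(-C)*\exp_*(C)=1$---is identical.
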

\begin{proof}
Consider $T_N(x;z_0,z_1,w_1)$, i.e. the case $n=m=1$ of $T_N$.
Set $z_0=0,z_1=1$ and $w_1=w$. We compute $T_N(x;0,1,w)$ using the definition \eqref{eq:T_W-super} of $T_N$:
\begin{align}\label{eq:T_W11}
        T_N(x;0,1,w) = \sum_{\alpha\in\{1,2\}^N} 
        (-w)^{m_2(\alpha)}
        \bra{\alpha} W_N(x)\ket{\alpha}
\end{align}
By comparing \eqref{eq:L-trace} with \eqref{eq:T_W11} we see that the functions $L_N(x)$ and $T_N(x;0,1,w)$ are related by:
\begin{align}
    \label{eq:L-T}
    L_N(x)=\lim_{w\rightarrow 1} \frac{\partial}{\partial w} T_N(x;0,1,w)
\end{align}
Using \eqref{eq:Tv-super} we compute the exponential generating function of $T_N(x;0,1,w)$:
\begin{align}\label{eq:Tv11}
    \sum_{N=0}^\infty v^N T_N(x;0,1,w) =
    \exp_*\left(\sum_{k>0}\frac{1}{k}\frac{1-t^k}{1-q^k}\left(w^k-1\right) v^{k}S_k\right)
\end{align}
Next we take the sum over $N$ on both sides in \eqref{eq:L-T} with $v^N$ and compute the derivative and the limit on the r.h.s. using \eqref{eq:Tv11}. As a result we get \eqref{eq:L-S}.
\end{proof}

\section{Skew Macdonald functions and lattice paths}\label{sec:skew}
In this section we derive the connection between the trace elements $T_N$ and the skew Macdonald functions $P_{\mu/\nu}$. This connection is based on the mixed Cauchy kernel $K(x;y)$, which is defined by applying the isomorphism $\iota_z$ to the Cauchy kernel $\Pi(z,y)$ (see \cite{FHSSY}). Under the evaluation map $\ev_{\mu/\nu}$ the mixed Cauchy kernel $K(x;y)$ reproduces the skew Macdonald functions $P_{\mu/\nu}(y)$. Finally we relate the mixed Cauchy kernel with the generating function $T(v;z,w)$. 

\subsection{The mixed Cauchy kernel}
Let us recall the commuting shuffle algebra elements $F_\lambda\in \Ac$ and the evaluation map \eqref{eq:ev_skew} from Section \ref{sec:shuffle}. The functions $F_\lambda=F_\lambda(x_1\ldots x_k)$, with $\lambda\vdash k$, have an important property:
$$
\ev_\mu(F_\lambda)= \delta_{\lambda,\mu} \frac{1}{d_\lambda}
$$
Let us compute the evaluations $\ev_{\mu/\nu}$ of $F_\lambda$.
\begin{lemma}
\label{lem:shuffle-LR}
We have the product rule:
\begin{align}\label{eq:shuffle-LR-ceof}
 F_{\lambda}*F_{\nu}=\sum_{\mu}
 q^{n(\mu')-n(\lambda')-n(\nu')}\frac{c_\lambda c_\nu}{c_\mu}
 f^{\mu}_{\lambda,\nu}F_{\mu}
\end{align}
and evaluations:
\begin{align}
    \label{eq:ev_LR}
    \normalfont{\ev}_{\mu/\nu}\left(F_\lambda\right) = 
    q^{n(\mu')-n(\lambda')-n(\nu')}\frac{c_\lambda c_\nu}{c_\mu d_{\mu/\nu}}
 f^{\mu}_{\lambda,\nu}
\end{align}
\end{lemma}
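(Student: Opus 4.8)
The plan is to derive both identities from the algebra isomorphism $\iota\colon\Ac\to\Lambda$ of Lemma \ref{lem:iso}, which sends $F_\lambda\mapsto \frac{c_\lambda}{q^{n(\lambda')}(1-t)^{|\lambda|}}P_\lambda$, together with the Littlewood--Richardson expansion \eqref{eq:LR-ceof} and the representation \eqref{eq:sh_rep}. First I would establish the product rule \eqref{eq:shuffle-LR-ceof}. Since the $F_\mu$ form a basis of $\Ac$, write $F_\lambda*F_\nu=\sum_\mu g^\mu_{\lambda,\nu}F_\mu$ for unknown coefficients $g^\mu_{\lambda,\nu}$. Applying the isomorphism $\iota$ to both sides and using that it is multiplicative gives, on one hand, $\iota(F_\lambda*F_\nu)=\iota(F_\lambda)\,\iota(F_\nu)=\frac{c_\lambda c_\nu}{q^{n(\lambda')+n(\nu')}(1-t)^{|\lambda|+|\nu|}}P_\lambda P_\nu$, and on the other hand $\sum_\mu g^\mu_{\lambda,\nu}\frac{c_\mu}{q^{n(\mu')}(1-t)^{|\mu|}}P_\mu$.

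Expanding $P_\lambda P_\nu=\sum_\mu f^\mu_{\lambda,\nu}P_\mu$ via \eqref{eq:LR-ceof} and comparing coefficients of $P_\mu$ then determines $g^\mu_{\lambda,\nu}$. The key bookkeeping point is that $f^\mu_{\lambda,\nu}=0$ unless $|\mu|=|\lambda|+|\nu|$, so on the nonzero terms the powers of $(1-t)$ cancel exactly, leaving
\begin{align*}
g^\mu_{\lambda,\nu}=\frac{c_\lambda c_\nu}{c_\mu}\,q^{n(\mu')-n(\lambda')-n(\nu')}\,f^\mu_{\lambda,\nu},
\end{align*}
which is precisely \eqref{eq:shuffle-LR-ceof}. (As a consistency check, this coefficient is symmetric in $\lambda,\nu$, matching the commutativity of $\Ac$ and $f^\mu_{\lambda,\nu}=f^\mu_{\nu,\lambda}$.)

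For the evaluation formula \eqref{eq:ev_LR} I would pass to the module action on $\ket{\varnothing}$. Using $F_\nu\ket{\varnothing}=\ket{\nu}$ from \eqref{eq:F} and the homomorphism property \eqref{eq:sh_hom} (which says the matrix of $F_\lambda*F_\nu$ is the product of the matrices of $F_\lambda$ and $F_\nu$), one has $(F_\lambda*F_\nu)\ket{\varnothing}=F_\lambda\ket{\nu}$. I would now expand this identity in the basis $\{\ket{\mu}\}$ in two ways. The left-hand side, by the product rule just proved together with \eqref{eq:F}, equals $\sum_\mu g^\mu_{\lambda,\nu}\ket{\mu}$. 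The right-hand side, by the representation formula \eqref{eq:sh_rep}, equals $\sum_\mu d_{\mu/\nu}\,\ev_{\mu/\nu}(F_\lambda)\ket{\mu}$. Matching the coefficient of each $\ket{\mu}$ gives $d_{\mu/\nu}\,\ev_{\mu/\nu}(F_\lambda)=g^\mu_{\lambda,\nu}$, and dividing by $d_{\mu/\nu}$ yields \eqref{eq:ev_LR}.

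I do not expect a genuine obstacle here: both steps are direct consequences of results already established in the excerpt. The only points requiring care are verifying the operator-ordering convention so that $(F_\lambda*F_\nu)\ket{\varnothing}=F_\lambda\ket{\nu}$ (rather than $F_\nu\ket{\lambda}$), and tracking the grading so that the $(1-t)^{|\cdot|}$ factors from $\iota$ cancel on the support of the Littlewood--Richardson coefficients. Both are routine once the degree condition $|\mu|=|\lambda|+|\nu|$ is invoked.
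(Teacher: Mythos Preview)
Your proposal is correct and follows essentially the same approach as the paper's proof: both derive \eqref{eq:shuffle-LR-ceof} by transporting the Littlewood--Richardson expansion \eqref{eq:LR-ceof} through the isomorphism of Lemma~\ref{lem:iso} (you apply $\iota$, the paper applies $\iota^{-1}$, which is cosmetic), and both then obtain \eqref{eq:ev_LR} by acting on $\ket{\varnothing}$, using $F_\nu\ket{\varnothing}=\ket{\nu}$, and reading off the matrix element via \eqref{eq:sh_rep}. Your attention to the operator ordering and the cancellation of the $(1-t)^{|\cdot|}$ factors is exactly the bookkeeping the paper implicitly uses.
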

\begin{proof}
We recall the product rule for Macdonald functions \eqref{eq:LR-ceof}:
$$
 P_{\lambda}P_{\nu}=\sum_{\mu}f^{\mu}_{\lambda,\nu}P_{\mu}
$$
and apply the isomorphism $\iota^{-1}$ to the l.h.s. and to the r.h.s. of this equation separately:
\begin{align*}
 \iota^{-1}\left(P_{\lambda}P_{\nu}\right)
 &=
 \frac{q^{n(\lambda')+n(\nu')}(1-t)^{|\lambda|+|\nu|}}{c_\lambda c_\nu} 
 F_\lambda *F_\nu \\
 \sum_{\mu}f^{\mu}_{\lambda,\nu}
 \iota^{-1}\left(P_{\mu}\right)
 &=\sum_{\mu}
 \frac{q^{n(\mu')}(1-t)^{|\mu|}}{c_\mu}
 f^{\mu}_{\lambda,\nu}F_{\mu}
\end{align*}
By comparing the two equations above and noting that $|\lambda|+|\mu|=|\nu|$ by the degree argument we verify \eqref{eq:shuffle-LR-ceof}. In order to prove \eqref{eq:ev_LR} we act with \eqref{eq:shuffle-LR-ceof} on $\ket{\varnothing}$ and recall \eqref{eq:F} that $F_\nu\ket{\varnothing}=\ket{\nu}$:
\begin{align*}
 F_{\lambda}\ket{\nu}=\sum_{\mu}
 q^{n(\mu')-n(\lambda')-n(\nu')}\frac{c_\lambda c_\nu}{c_\mu}
 f^{\mu}_{\lambda,\nu}\ket{\mu}
\end{align*}
This equation allows us to compute the matrix elements $\bra{\mu}F_{\lambda}\ket{\nu}$:
\begin{align*}
 \bra{\mu}F_{\lambda}\ket{\nu}=
 q^{n(\mu')-n(\lambda')-n(\nu')}\frac{c_\lambda c_\nu}{c_\mu}
 f^{\mu}_{\lambda,\nu}
\end{align*}
and combining this with \eqref{eq:sh_rep} leads to 
\eqref{eq:ev_LR}.
\end{proof}

Next we define the mixed Cauchy kernel $K(x;z)$ which we will expand in $F_\lambda(x)$ and apply to it the evaluation map ev$_{\mu/\nu}$ using the result \eqref{eq:ev_LR} of the above Lemma.
\begin{dfn}
Let $(z)$ and $(y)$ be two alphabets and $\Pi(z,y)$ the Cauchy kernel \eqref{eq:Cauchy-kernel}.
    Define the mixed Cauchy kernel:
    \begin{align}
        \label{eq:K-mixed}
        K(x;y) : = \iota_z^{-1} \left(\Pi(z,y)\right)
    \end{align}
\end{dfn}
\begin{remark}
For $K(x;y)$ we have the analogues of the exponential formula \eqref{eq:Cauchy-kernel} as well as bases expansions \eqref{eq:g-m-P} for $\Pi(z,y)$: 
\begin{align}\label{eq:K-exp}
    K(x;y)&=
        \exp_*\left(\sum_{r>0}\frac{1}{r}\frac{1-t^r}{1-q^r}
        \frac{(t-q)^r}{(1-q)^r}p_r(y)S_r \right)\\
        \label{eq:K-mon}
    K(x;y)&= \sum_{\lambda}\frac{(1-t)^{|\lambda|}}{(1-q)^{|\lambda|}} m_\lambda(y) E_\lambda(t^{-1}) \\
    \label{eq:K-Mac}
    K(x;y)&=  \sum_{\lambda} \frac{q^{n(\lambda')}(1-t)^{|\lambda|}}{c'_\lambda} P_\lambda(y) F_\lambda  
\end{align}
\end{remark}
\begin{proof}
    The exponential formula \eqref{eq:K-exp} is obtained by inserting the exponential formula \eqref{eq:Cauchy-kernel} for the Cauchy kernel in \eqref{eq:K-mixed} and applying $\iota_z$ to the power sums \eqref{eq:iota_S}. The monomial expansion \eqref{eq:K-mon} is a consequence of \eqref{eq:g-m-P} and the isomorphism formula \eqref{eq:iota_E}\footnote{The monomial expansion of $K(x;z)$ was studied in \cite{FHSSY} in relation with the tableaux formula for the Macdonald functions.}.  The Macdonald expansion \eqref{eq:K-Mac} follows from \eqref{eq:g-m-P}, \eqref{eq:bc-coef} and the isomorphism formula \eqref{eq:iso}.
\end{proof}
\begin{lemma}\label{lem:K-skew-Mac}
    Let $\mu/\nu$ be a skew partition and $k=|\mu|-|\nu|$. The mixed Cauchy kernel $K(x;y)$ reproduces the skew Macdonald function $P_{\mu/\nu}(y)$ under $\normalfont{\ev}_{\mu/\nu}$:
\begin{align}
    \label{eq:K-skew-Mac}
    P_{\mu/\nu}(y) 
    =\tilde a_{\mu,\nu}
    \normalfont{\ev}_{\mu/\nu} \left(K(x;y)\right),
    \qquad
     \tilde a_{\mu,\nu} : = \frac{q^{n(\nu')-n(\mu')}c_\mu' d_{\mu/\nu}}{(1-t)^{k} c_\nu'}
\end{align}    
\end{lemma}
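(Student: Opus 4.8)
The plan is to use the Macdonald expansion \eqref{eq:K-Mac} of the mixed Cauchy kernel and apply the evaluation map $\ev_{\mu/\nu}$ term by term. The crucial point is that $\ev_{\mu/\nu}$ acts only on the $x$-alphabet, i.e.\ on the shuffle elements $F_\lambda=F_\lambda(x_1\ldots x_{|\lambda|})$, while the coefficients $P_\lambda(y)$ depend on the separate alphabet $(y)$ and are therefore inert. Hence I would write
\begin{align*}
\ev_{\mu/\nu}\left(K(x;y)\right)
= \sum_\lambda \frac{q^{n(\lambda')}(1-t)^{|\lambda|}}{c'_\lambda}\, P_\lambda(y)\, \ev_{\mu/\nu}(F_\lambda),
\end{align*}
and then substitute the evaluation formula \eqref{eq:ev_LR} of Lemma \ref{lem:shuffle-LR} for $\ev_{\mu/\nu}(F_\lambda)$.

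Carrying out this substitution, the factor $q^{n(\lambda')}$ from \eqref{eq:K-Mac} cancels the $q^{-n(\lambda')}$ in \eqref{eq:ev_LR}, and the Littlewood--Richardson coefficient $f^{\mu}_{\lambda,\nu}$ vanishes unless $|\lambda|=|\mu|-|\nu|=k$, which collapses the sum to $\lambda\vdash k$. This leaves
\begin{align*}
\ev_{\mu/\nu}\left(K(x;y)\right)
= q^{\,n(\mu')-n(\nu')}\,\frac{(1-t)^{k} c_\nu}{c_\mu\, d_{\mu/\nu}}\sum_{\lambda\vdash k}\frac{c_\lambda}{c'_\lambda}\, f^{\mu}_{\lambda,\nu}\, P_\lambda(y).
\end{align*}
Recalling $b_\lambda=c_\lambda/c'_\lambda$ from \eqref{eq:bc-coef} and reading the definition \eqref{eq:skew_M} of the skew Macdonald function in the rearranged form $\sum_{\lambda} b_\lambda f^{\mu}_{\lambda,\nu} P_\lambda = (b_\mu/b_\nu)\,P_{\mu/\nu}$, the inner sum becomes $(b_\mu/b_\nu)\,P_{\mu/\nu}(y)$.

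Finally I would collect the coefficients. Writing $b_\mu/b_\nu = c_\mu c'_\nu/(c'_\mu c_\nu)$ and combining with the prefactor, the factors $c_\mu,c_\nu$ cancel and one is left with
\begin{align*}
\ev_{\mu/\nu}\left(K(x;y)\right) = q^{\,n(\mu')-n(\nu')}\,\frac{(1-t)^{k} c'_\nu}{c'_\mu\, d_{\mu/\nu}}\, P_{\mu/\nu}(y),
\end{align*}
which inverts to exactly the asserted identity with $\tilde a_{\mu,\nu}=q^{n(\nu')-n(\mu')}c'_\mu d_{\mu/\nu}/((1-t)^{k}c'_\nu)$. The argument is purely formal, so there is no genuine conceptual obstacle; the only delicate points are the careful bookkeeping of the three $q$-powers $n(\lambda'),n(\mu'),n(\nu')$ and of the coefficient families $c_\lambda,c'_\lambda,b_\lambda$, and the justification that $\ev_{\mu/\nu}$ leaves the $y$-dependence untouched so that the termwise evaluation of the (convergent, degree-graded) expansion is legitimate.
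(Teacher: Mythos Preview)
Your proof is correct and follows essentially the same route as the paper: expand $K(x;y)$ via \eqref{eq:K-Mac}, apply $\ev_{\mu/\nu}$ termwise using \eqref{eq:ev_LR}, and then recognise the resulting sum $\sum_\lambda b_\lambda f^{\mu}_{\lambda,\nu}P_\lambda(y)$ as $(b_\mu/b_\nu)P_{\mu/\nu}(y)$ from \eqref{eq:skew_M}. The only cosmetic difference is that the paper justifies the restriction to $\lambda\vdash k$ by noting that $\ev_{\mu/\nu}$ annihilates $\mathcal{A}^\circ_j$ for $j\neq k$, whereas you invoke the vanishing of $f^{\mu}_{\lambda,\nu}$ for $|\lambda|\neq k$; both observations are valid and lead to the same truncation.
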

\begin{proof}
We first note that the evaluation map $\ev_{\mu/\nu}$ acts on $F\in \Ac_j$ and gives zero unless $j=k$, the number of boxes in $\mu/\nu$. We apply $\ev_{\mu/\nu}$ to $K(x;y)$ written in the form \eqref{eq:K-Mac}:
\begin{align*}
    \ev_{\mu/\nu} \left(K(x;y)\right)= 
    &(1-t)^{k}\sum_{\lambda\vdash k} \frac{q^{n(\lambda')}}{c'_\lambda} P_\lambda(y) \ev_{\mu/\nu}  \left(F_\lambda  \right) \\
    = 
    &(1-t)^{k}
    \frac{q^{n(\mu')-n(\nu')} c_\nu}{c_\mu d_{\mu/\nu}}
    \sum_{\lambda\vdash k} b_\lambda
    f^{\mu}_{\lambda,\nu}P_\lambda(y)\\
    &=
    \frac{(1-t)^{k}q^{n(\mu')-n(\nu')} c_\nu'}{c_\mu' d_{\mu/\nu}} P_{\mu/\nu}(y)
\end{align*}
where in the second line we computed $\ev_{\mu/\nu}(F_\lambda)$ using \eqref{eq:ev_LR} and then used the definition of $b_\lambda$ \eqref{eq:bc-coef}. To get the result in the third line we recalled the formula for the skew Macdonald functions \eqref{eq:skew_M}:
$$
P_{\mu/\nu} = \sum_{\lambda} \frac{b_\nu b_\lambda}{b_\mu}  f^\mu_{\lambda,\nu}  P_\lambda
$$
This computation proves \eqref{eq:K-skew-Mac}. 
\end{proof}

\subsection{The trace elements and the skew Macdonald functions}\label{sec:Trace_Mac}
The elements of the shuffle algebra $S_r\in \Ac_r$ are mapped to the power sums \eqref{eq:iota_S} under $\iota$, therefore the exponential generating function $T(v;z,w)$ given in \eqref{eq:Tv-super} is related to the Cauchy kernel \eqref{eq:Cauchy-kernel} under this isomorphism.

The generating function $T(v;z,w)$ \eqref{eq:Tv-super} is a symmetric function in two finite alphabets $(w)=(w_1\ldots w_m)$ and $(z)=(z_1\ldots z_n)$ and can be written in terms of the power sums:
\begin{align}\label{eq:Tv-power_sums}
    T(v;z,w) =
    \exp_*\left(\sum_{r>0}\frac{1}{r}\frac{1-t^r}{1-q^r}\left(p_r(w)- p_r(z)-\frac{q^r-t^r}{1-t^r}z_0^r  \right) v^{r}S_r\right)
\end{align}
Therefore it is given by the mixed Cauchy kernel $K(x;y)$ to which one needs to apply the following transformation:
\begin{align}
    \label{eq:pi}
\pi_{w,z}:\quad p_r(y) \rightarrow p_r(w_1\ldots w_m) -p_r(z_1\ldots z_n) - \frac{q^r-t^r}{1-t^r}z_0^r 
\end{align}
The map $\pi_{w,z}$ is a  {\it plethystic substitution} and for a symmetric function $f(y)$ we write:
\begin{align}
    \label{eq:pleth}
\pi_{w,z}\left( f(y)  \right) = f\left[w -z -\frac{q-t}{1-t}z_0\right]
\end{align}
By comparing \eqref{eq:Tv-power_sums} with \eqref{eq:K-exp} we find:
\begin{align}
    \label{eq:T-K}
    T(c v;z,w) = \pi_{w,z}\left(K(x;y)\right),
    \qquad c = \frac{t-q}{1-q}
\end{align}
This identity together with Lemma \ref{lem:K-skew-Mac} leads to the following result:
\begin{thm}\label{thm:PT}
    Let $\mu/\nu$ be a skew partition and $N=|\mu|-|\nu|$, we have the identity:
\begin{align}
    \label{eq:T-skew-Mac}
    P_{\mu/\nu}\left[w -z -\frac{q-t}{1-t}z_0\right]
    =a_{\mu,\nu}
    \normalfont{\ev}_{\mu/\nu} \left(T_N(x;z,w)\right),
\end{align}    
where 
\begin{align}
    \label{eq:a}
    a_{\mu,\nu} : = \frac{(t-q)^N}{(1-q)^N (1-t)^{N}}\frac{q^{n(\nu')-n(\mu')}c_\mu' d_{\mu/\nu}}{ c_\nu'}
\end{align}
\end{thm}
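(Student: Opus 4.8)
The plan is to read off the theorem by combining the identity \eqref{eq:T-K}, which expresses the generating function $T(\,\cdot\,;z,w)$ as the plethystically substituted mixed Cauchy kernel, with the evaluation formula of Lemma \ref{lem:K-skew-Mac}, while carefully tracking the degree grading and exploiting that $\ev_{\mu/\nu}$ and $\pi_{w,z}$ act on disjoint alphabets.

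First I would extract the homogeneous component of degree $N$ from \eqref{eq:T-K}. Treating $v$ as the grading variable that counts the number of shuffle arguments (equivalently attaching $v^{r}$ to each $S_{r}$, so that $K=\sum_{N\ge0}K^{(N)}$ with $K^{(N)}\in\Ac_{N}$), the relation $T(cv;z,w)=\pi_{w,z}(K(x;y))$ compared coefficientwise in $v$ gives
\begin{align*}
\pi_{w,z}\bigl(K^{(N)}(x;y)\bigr)=c^{N}\,T_{N}(x;z,w),
\qquad c=\frac{t-q}{1-q}.
\end{align*}

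Next I would apply $\ev_{\mu/\nu}$ to this identity. The key structural observation is the bi-alphabet expansion \eqref{eq:K-Mac}: in $K(x;y)=\sum_{\lambda}\tfrac{q^{n(\lambda')}(1-t)^{|\lambda|}}{c'_{\lambda}}P_{\lambda}(y)F_{\lambda}$ the $x$-dependence lives entirely in $F_{\lambda}(x)$ and the $y$-dependence entirely in $P_{\lambda}(y)$, so $\ev_{\mu/\nu}$ (which substitutes box contents for the $x$'s) and $\pi_{w,z}$ (a plethystic substitution in the $y$'s) commute term by term. Since moreover $\ev_{\mu/\nu}$ annihilates every homogeneous component of degree $\neq N=|\mu|-|\nu|$, applying it yields
\begin{align*}
c^{N}\,\ev_{\mu/\nu}\bigl(T_{N}(x;z,w)\bigr)
=\ev_{\mu/\nu}\bigl(\pi_{w,z}(K^{(N)})\bigr)
=\pi_{w,z}\bigl(\ev_{\mu/\nu}(K(x;y))\bigr).
\end{align*}

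Finally I would invoke Lemma \ref{lem:K-skew-Mac} in the form $\ev_{\mu/\nu}(K(x;y))=\tilde a_{\mu,\nu}^{-1}P_{\mu/\nu}(y)$ and push $\pi_{w,z}$ through, which by the plethystic notation \eqref{eq:pleth} produces $\tilde a_{\mu,\nu}^{-1}P_{\mu/\nu}\!\left[w-z-\tfrac{q-t}{1-t}z_{0}\right]$. Solving for the skew Macdonald function then gives
\begin{align*}
P_{\mu/\nu}\!\left[w-z-\frac{q-t}{1-t}z_{0}\right]
=c^{N}\,\tilde a_{\mu,\nu}\,\ev_{\mu/\nu}\bigl(T_{N}(x;z,w)\bigr),
\end{align*}
so it only remains to verify $a_{\mu,\nu}=c^{N}\tilde a_{\mu,\nu}$; substituting $c=\frac{t-q}{1-q}$ and the value of $\tilde a_{\mu,\nu}$ from \eqref{eq:K-skew-Mac} reproduces \eqref{eq:a}. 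I expect the main obstacle to be purely a matter of bookkeeping: correctly matching the grading variable $v$ against the scalar $c^{N}$, and justifying the commutation of $\ev_{\mu/\nu}$ with $\pi_{w,z}$ — which the separated expansion \eqref{eq:K-Mac} makes transparent.
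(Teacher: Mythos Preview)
Your proposal is correct and follows essentially the same approach as the paper: the paper simply states that \eqref{eq:T-K} together with Lemma~\ref{lem:K-skew-Mac} leads to the result, and your argument fills in precisely those details, including the degree-$N$ extraction, the commutation of $\ev_{\mu/\nu}$ with $\pi_{w,z}$, and the verification that $a_{\mu,\nu}=c^{N}\tilde a_{\mu,\nu}$.
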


\subsection{Example}\label{sec:skew_example}
In this section we show how to compute $P_{(2,1)/(1)}(w_1,w_2;z_{0})$ using Theorem \ref{thm:PT}. In this case we have $N=2$, $n=0$ and $m=2$, so according to Theorem \ref{thm:PT} we need to calculate:
$$
a_{(2,1),(1)}^{-1}P_{(2,1)/(1)}(w_1,w_2;z_{0})
=\ev_{(2,1)/(1)} \left(T_2(x_1,x_2;z_0,w_1,w_2)\right)= 
T_2(q,t^{-1};z_0,w_1,w_2)
$$
where $T_2(q,t^{-1};z_0,w_1,w_2)$ is the partition function of lattice path configurations with two fermionic colours and boundary conditions as in \eqref{eq:T_def}. Let the two colours be red and green where red corresponds to the smaller label and the local configurations are given by the Boltzmann weights of the $R$-matrix of $U_{t}(\widehat{ sl}_{1|2})$, given in \eqref{tikz:coloredvertices}. We can find all possible configurations in $T_2(q,t^{-1};z_0,w_1,w_2)$ and write:
\begin{align}\label{eq:loops}
&a_{(2,1),(1)}^{-1}P_{(2,1)/(1)}(w_1,w_2;z_{0})=
\begin{tikzpicture}[scale=0.6, baseline=(current  bounding  box.center)]
\draw[invarrow=0.95] (1,0.5) -- node[pos=1,above] {$\scriptstyle 0$} (1,2.5);
\draw[invarrow=0.95] (2,0.5) -- node[pos=1,above] {$\scriptstyle 0$} (2,2.5);
\foreach\i/\lab in {2/,1/\delta_N}
\draw[invarrow=0.95] (0.5,\i) -- (2.5,\i);
\draw [rounded corners=5pt]  (2,0.5) -- (2,0.5-0.25) -- (2+0.5+0.25,0.5-0.25) -- (2+0.5+0.25,1) -- (2.5,1);
\draw [rounded corners=5pt]  (1,0.5) -- (1,0.5-0.5) -- (2+0.5+0.5,0.5-0.5) -- (2+0.5+0.5,2) -- (2.5,2);
\foreach\i/\lab in {2/q ,1/t^{-1}}
\node at (-0.4,\i) {$\scriptstyle \lab$};
\foreach\i/\lab in {1/q^{2} ,2/q t^{-1}}
\node at (\i,3.5) {$\scriptstyle \lab$};
\foreach\i/\lab in {2/0 ,1/0}
\node at (0.3,\i) {$\scriptstyle \lab$};
\end{tikzpicture}\\[1 em]
=
&
\left(
\begin{tikzpicture}[scale=0.6, baseline=(current  bounding  box.center)]
\draw (1,0.5) -- node[pos=1,above] {$\scriptstyle 0$} (1,2.5);
\draw (2,0.5) -- node[pos=1,above] {$\scriptstyle 0$} (2,2.5);
\foreach\i/\lab in {2/,1/\delta_N}
\draw (0.5,\i) -- (2.5,\i);
\draw [rounded corners=5pt]  (2,0.5) -- (2,0.5-0.25) -- (2+0.5+0.25,0.5-0.25) -- (2+0.5+0.25,1) -- (2.5,1);
\draw [rounded corners=5pt]  (1,0.5) -- (1,0.5-0.5) -- (2+0.5+0.5,0.5-0.5) -- (2+0.5+0.5,2) -- (2.5,2);
\foreach\i/\lab in {2/0 ,1/0}
\node at (0.3,\i) {$\scriptstyle \lab$};
\end{tikzpicture}\right) z^{2}_{0}
+
\left(
\begin{tikzpicture}[scale=0.6, baseline=(current  bounding  box.center)]
\draw (1,0.5) -- node[pos=1,above] {$\scriptstyle 0$} (1,2.5);
\draw (2,0.5) -- node[pos=1,above] {$\scriptstyle 0$} (2,2.5);
\foreach\i/\lab in {2/,1/\delta_N}
\draw (0.5,\i) -- (2.5,\i);
\draw [rounded corners=5pt,red, ultra thick]  (2,0.5) -- (2,0.25) -- (2.75,0.25) -- (2.75,1) -- (2.5,1);
\draw[red,ultra thick] (2,0.5)--(2,1)--(2.5,1);
\draw [rounded corners=5pt]  (1,0.5) -- (1,0.5-0.5) -- (2+0.5+0.5,0.5-0.5) -- (2+0.5+0.5,2) -- (2.5,2);
\foreach\i/\lab in {2/0 ,1/0}
\node at (0.3,\i) {$\scriptstyle \lab$};
\end{tikzpicture}
+
\begin{tikzpicture}[scale=0.6, baseline=(current  bounding  box.center)]
\draw (1,0.5) -- node[pos=1,above] {$\scriptstyle 0$} (1,2.5);
\draw (2,0.5) -- node[pos=1,above] {$\scriptstyle 0$} (2,2.5);
\foreach\i/\lab in {2/,1/\delta_N}
\draw (0.5,\i) -- (2.5,\i);
\draw [rounded corners=5pt]  (2,0.5) -- (2,0.25) -- (2.75,0.25) -- (2.75,1) -- (2.5,1);
\draw[red,ultra thick] (1,0.5)--(1,1)--(2,1)--(2,2)--(2.5,2);
\draw [rounded corners=5pt,red, ultra thick]  (1,0.5) -- (1,0.5-0.5) -- (2+0.5+0.5,0.5-0.5) -- (2+0.5+0.5,2) -- (2.5,2);
\foreach\i/\lab in {2/0 ,1/0}
\node at (0.3,\i) {$\scriptstyle \lab$};
\end{tikzpicture}
+
\begin{tikzpicture}[scale=0.6, baseline=(current  bounding  box.center)]
\draw (1,0.5) -- node[pos=1,above] {$\scriptstyle 0$} (1,2.5);
\draw (2,0.5) -- node[pos=1,above] {$\scriptstyle 0$} (2,2.5);
\foreach\i/\lab in {2/,1/\delta_N}
\draw (0.5,\i) -- (2.5,\i);
\draw [rounded corners=5pt]  (2,0.5) -- (2,0.25) -- (2.75,0.25) -- (2.75,1) -- (2.5,1);
\draw[red,ultra thick] (1,0.5)--(1,2)--(2.5,2);
\draw [rounded corners=5pt,red, ultra thick]  (1,0.5) -- (1,0.5-0.5) -- (2+0.5+0.5,0.5-0.5) -- (2+0.5+0.5,2) -- (2.5,2);
\foreach\i/\lab in {2/0 ,1/0}
\node at (0.3,\i) {$\scriptstyle \lab$};
\end{tikzpicture}\right)z_{0}w_{1}\nonumber\\[1 em]
+&
\left(
\begin{tikzpicture}[scale=0.6, baseline=(current  bounding  box.center)]
\draw (1,0.5) -- node[pos=1,above] {$\scriptstyle 0$} (1,2.5);
\draw (2,0.5) -- node[pos=1,above] {$\scriptstyle 0$} (2,2.5);
\foreach\i/\lab in {2/,1/\delta_N}
\draw (0.5,\i) -- (2.5,\i);
\draw [rounded corners=5pt,gcol, ultra thick]  (2,0.5) -- (2,0.25) -- (2.75,0.25) -- (2.75,1) -- (2.5,1);
\draw[gcol,ultra thick] (2,0.5)--(2,1)--(2.5,1);
\draw [rounded corners=5pt]  (1,0.5) -- (1,0.5-0.5) -- (2+0.5+0.5,0.5-0.5) -- (2+0.5+0.5,2) -- (2.5,2);
\foreach\i/\lab in {2/0 ,1/0}
\node at (0.3,\i) {$\scriptstyle \lab$};
\end{tikzpicture}
+
\begin{tikzpicture}[scale=0.6, baseline=(current  bounding  box.center)]
\draw (1,0.5) -- node[pos=1,above] {$\scriptstyle 0$} (1,2.5);
\draw (2,0.5) -- node[pos=1,above] {$\scriptstyle 0$} (2,2.5);
\foreach\i/\lab in {2/,1/\delta_N}
\draw (0.5,\i) -- (2.5,\i);
\draw [rounded corners=5pt]  (2,0.5) -- (2,0.25) -- (2.75,0.25) -- (2.75,1) -- (2.5,1);
\draw[gcol,ultra thick] (1,0.5)--(1,1)--(2,1)--(2,2)--(2.5,2);
\draw [rounded corners=5pt,gcol, ultra thick]  (1,0.5) -- (1,0.5-0.5) -- (2+0.5+0.5,0.5-0.5) -- (2+0.5+0.5,2) -- (2.5,2);
\foreach\i/\lab in {2/0 ,1/0}
\node at (0.3,\i) {$\scriptstyle \lab$};
\end{tikzpicture}
+
\begin{tikzpicture}[scale=0.6, baseline=(current  bounding  box.center)]
\draw (1,0.5) -- node[pos=1,above] {$\scriptstyle 0$} (1,2.5);
\draw (2,0.5) -- node[pos=1,above] {$\scriptstyle 0$} (2,2.5);
\foreach\i/\lab in {2/,1/\delta_N}
\draw (0.5,\i) -- (2.5,\i);
\draw [rounded corners=5pt]  (2,0.5) -- (2,0.25) -- (2.75,0.25) -- (2.75,1) -- (2.5,1);
\draw[gcol,ultra thick] (1,0.5)--(1,2)--(2.5,2);
\draw [rounded corners=5pt,gcol, ultra thick]  (1,0.5) -- (1,0.5-0.5) -- (2+0.5+0.5,0.5-0.5) -- (2+0.5+0.5,2) -- (2.5,2);
\foreach\i/\lab in {2/0 ,1/0}
\node at (0.3,\i) {$\scriptstyle \lab$};
\end{tikzpicture}
\right)z_{0}w_{2}\nonumber\\[1 em]
+
&\left(
\begin{tikzpicture}[scale=0.6, baseline=(current  bounding  box.center)]
\draw (1,0.5) -- node[pos=1,above] {$\scriptstyle 0$} (1,2.5);
\draw (2,0.5) -- node[pos=1,above] {$\scriptstyle 0$} (2,2.5);
\foreach\i/\lab in {2/,1/\delta_N}
\draw (0.5,\i) -- (2.5,\i);
\draw [rounded corners=5pt,red, ultra thick]  (2,0.5) -- (2,0.25) -- (2.75,0.25) -- (2.75,1) -- (2.5,1);
\draw[red,ultra thick] (2,0.5)--(2,1)--(2.5,1);
\draw [rounded corners=5pt,red,ultra thick]  (1,0.5) -- (1,0.5-0.5) -- (2+0.5+0.5,0.5-0.5) -- (2+0.5+0.5,2) -- (2.5,2);
\draw[red,ultra thick] (2,0.5)--(2,1)--(2.5,1);
\draw[red,ultra thick] (1,0.5)--(1,1)--(2,1)--(2,2)--(2.5,2);
\foreach\i/\lab in {2/0 ,1/0}
\node at (0.3,\i) {$\scriptstyle \lab$};
\end{tikzpicture}
+
\begin{tikzpicture}[scale=0.6, baseline=(current  bounding  box.center)]
\draw (1,0.5) -- node[pos=1,above] {$\scriptstyle 0$} (1,2.5);
\draw (2,0.5) -- node[pos=1,above] {$\scriptstyle 0$} (2,2.5);
\foreach\i/\lab in {2/,1/\delta_N}
\draw (0.5,\i) -- (2.5,\i);
\draw [rounded corners=5pt,red, ultra thick]  (2,0.5) -- (2,0.25) -- (2.75,0.25) -- (2.75,1) -- (2.5,1);
\draw[red,ultra thick] (2,0.5)--(2,1)--(2.5,1);
\draw [rounded corners=5pt,red,ultra thick]  (1,0.5) -- (1,0.5-0.5) -- (2+0.5+0.5,0.5-0.5) -- (2+0.5+0.5,2) -- (2.5,2);
\draw[red,ultra thick] (2,0.5)--(2,1)--(2.5,1);
\draw[red,ultra thick] (1,0.5)--(1,2)--(2.5,2);
\foreach\i/\lab in {2/0 ,1/0}
\node at (0.3,\i) {$\scriptstyle \lab$};
\end{tikzpicture}
\right) w^{2}_{1}\
+
\left(
\begin{tikzpicture}[scale=0.6, baseline=(current  bounding  box.center)]
\draw (1,0.5) -- node[pos=1,above] {$\scriptstyle 0$} (1,2.5);
\draw (2,0.5) -- node[pos=1,above] {$\scriptstyle 0$} (2,2.5);
\foreach\i/\lab in {2/,1/\delta_N}
\draw (0.5,\i) -- (2.5,\i);
\draw [rounded corners=5pt,gcol, ultra thick]  (2,0.5) -- (2,0.25) -- (2.75,0.25) -- (2.75,1) -- (2.5,1);
\draw[gcol,ultra thick] (2,0.5)--(2,1)--(2.5,1);
\draw [rounded corners=5pt,gcol,ultra thick]  (1,0.5) -- (1,0.5-0.5) -- (2+0.5+0.5,0.5-0.5) -- (2+0.5+0.5,2) -- (2.5,2);
\draw[gcol,ultra thick] (2,0.5)--(2,1)--(2.5,1);
\draw[gcol,ultra thick] (1,0.5)--(1,1)--(2,1)--(2,2)--(2.5,2);
\foreach\i/\lab in {2/0 ,1/0}
\node at (0.3,\i) {$\scriptstyle \lab$};
\end{tikzpicture}
+
\begin{tikzpicture}[scale=0.6, baseline=(current  bounding  box.center)]
\draw (1,0.5) -- node[pos=1,above] {$\scriptstyle 0$} (1,2.5);
\draw (2,0.5) -- node[pos=1,above] {$\scriptstyle 0$} (2,2.5);
\foreach\i/\lab in {2/,1/\delta_N}
\draw (0.5,\i) -- (2.5,\i);
\draw [rounded corners=5pt,gcol, ultra thick]  (2,0.5) -- (2,0.25) -- (2.75,0.25) -- (2.75,1) -- (2.5,1);
\draw[gcol,ultra thick] (2,0.5)--(2,1)--(2.5,1);
\draw [rounded corners=5pt,gcol,ultra thick]  (1,0.5) -- (1,0.5-0.5) -- (2+0.5+0.5,0.5-0.5) -- (2+0.5+0.5,2) -- (2.5,2);
\draw[gcol,ultra thick] (2,0.5)--(2,1)--(2.5,1);
\draw[gcol,ultra thick] (1,0.5)--(1,2)--(2.5,2);
\foreach\i/\lab in {2/0 ,1/0}
\node at (0.3,\i) {$\scriptstyle \lab$};
\end{tikzpicture}
\right)w^{2}_{2}\nonumber\\[1 em]
+
&
\left(
\begin{tikzpicture}[scale=0.6, baseline=(current  bounding  box.center)]
\draw (1,0.5) -- node[pos=1,above] {$\scriptstyle 0$} (1,2.5);
\draw (2,0.5) -- node[pos=1,above] {$\scriptstyle 0$} (2,2.5);
\foreach\i/\lab in {2/,1/\delta_N}
\draw (0.5,\i) -- (2.5,\i);
\draw [rounded corners=5pt,gcol, ultra thick]  (2,0.5) -- (2,0.25) -- (2.75,0.25) -- (2.75,1) -- (2.5,1);
\draw[gcol,ultra thick] (2,0.5)--(2,1)--(2.5,1);
\draw [rounded corners=5pt,red,ultra thick]  (1,0.5) -- (1,0.5-0.5) -- (2+0.5+0.5,0.5-0.5) -- (2+0.5+0.5,2) -- (2.5,2);
\draw[red,ultra thick] (1,0.5)--(1,1)--(2,1)--(2,2)--(2.5,2);
\foreach\i/\lab in {2/0 ,1/0}
\node at (0.3,\i) {$\scriptstyle \lab$};
\end{tikzpicture}+
\begin{tikzpicture}[scale=0.6, baseline=(current  bounding  box.center)]
\draw (1,0.5) -- node[pos=1,above] {$\scriptstyle 0$} (1,2.5);
\draw (2,0.5) -- node[pos=1,above] {$\scriptstyle 0$} (2,2.5);
\foreach\i/\lab in {2/,1/\delta_N}
\draw (0.5,\i) -- (2.5,\i);
\draw [rounded corners=5pt,gcol, ultra thick]  (2,0.5) -- (2,0.25) -- (2.75,0.25) -- (2.75,1) -- (2.5,1);
\draw[gcol,ultra thick] (2,0.5)--(2,1)--(2.5,1);
\draw [rounded corners=5pt,red,ultra thick]  (1,0.5) -- (1,0.5-0.5) -- (2+0.5+0.5,0.5-0.5) -- (2+0.5+0.5,2) -- (2.5,2);
\draw[red,ultra thick] (1,0.5)--(1,2)--(2.5,2);
\foreach\i/\lab in {2/0 ,1/0}
\node at (0.3,\i) {$\scriptstyle \lab$};
\end{tikzpicture}
+
\begin{tikzpicture}[scale=0.6, baseline=(current  bounding  box.center)]
\draw (1,0.5) -- node[pos=1,above] {$\scriptstyle 0$} (1,2.5);
\draw (2,0.5) -- node[pos=1,above] {$\scriptstyle 0$} (2,2.5);
\foreach\i/\lab in {2/,1/\delta_N}
\draw (0.5,\i) -- (2.5,\i);
\draw [rounded corners=5pt,red, ultra thick]  (2,0.5) -- (2,0.25) -- (2.75,0.25) -- (2.75,1) -- (2.5,1);
\draw[red,ultra thick] (2,0.5)--(2,1)--(2.5,1);
\draw [rounded corners=5pt,gcol,ultra thick]  (1,0.5) -- (1,0.5-0.5) -- (2+0.5+0.5,0.5-0.5) -- (2+0.5+0.5,2) -- (2.5,2);
\draw[gcol,ultra thick] (1,0.5)--(1,1)--(2,1)--(2,2)--(2.5,2);
\foreach\i/\lab in {2/0 ,1/0}
\node at (0.3,\i) {$\scriptstyle \lab$};
\end{tikzpicture}
+
\begin{tikzpicture}[scale=0.6, baseline=(current  bounding  box.center)]
\draw (1,0.5) -- node[pos=1,above] {$\scriptstyle 0$} (1,2.5);
\draw (2,0.5) -- node[pos=1,above] {$\scriptstyle 0$} (2,2.5);
\foreach\i/\lab in {2/,1/\delta_N}
\draw (0.5,\i) -- (2.5,\i);
\draw [rounded corners=5pt,red, ultra thick]  (2,0.5) -- (2,0.25) -- (2.75,0.25) -- (2.75,1) -- (2.5,1);
\draw[red,ultra thick] (2,0.5)--(2,1)--(2.5,1);
\draw [rounded corners=5pt,gcol,ultra thick]  (1,0.5) -- (1,0.5-0.5) -- (2+0.5+0.5,0.5-0.5) -- (2+0.5+0.5,2) -- (2.5,2);
\draw[gcol,ultra thick] (1,0.5)--(1,2)--(2.5,2);
\foreach\i/\lab in {2/0 ,1/0}
\node at (0.3,\i) {$\scriptstyle \lab$};
\end{tikzpicture}\right)w_{1}w_{2}
\nonumber
\end{align}
By substituting the Boltzmann weights \eqref{tikz:coloredvertices} we compute the coefficients in front of each monomial:
\begin{align}\label{eq:P-example}
 P_{(2,1)/(1)}(w_1,w_2;z_{0})=
 &a_{(2,1),(1)}
 \Big{(}z^{2}_{0}+
 \frac{(1-t) (2+q+t + 2 q t)}{(1+q) (1+t) (t-q)}
 (z_{0}w_{1}+z_{0}w_{2})\\
 +&
 \frac{(1-t) \left(1-q t^2\right)}{(1+q) (1+t) (q-t)^2}
 (w^{2}_{1}+w^{2}_{2})+
 \frac{(1-t)^2 (2+q+t+2 q t)}{(1+q) (1+t) (q-t)^2}w_{1}w_{2}   \Big{)}
 \nonumber
\end{align}
where the coefficient $a_{(2,1),(1)}$ reads:
\begin{align}
    \label{eq:acoef_example}
a_{(2,1),(1)}= 
\frac{(1+q)(1+t)(q-t)^2}{(1-t) (1-q t^2)}
\end{align}
Therefore we computed $P_{(2,1)/(1)}(w_1,w_2;z_{0})$. By setting $z_0=0$ in \eqref{eq:P-example} we recover the standard skew Macdonald polynomial $P_{(2,1)/(1)}(w_1,w_2)$:
$$
P_{(2,1)/(1)}(w_1,w_2) = 
w_1^2 + 
\frac{(1-t)  (2+q+t+2 q t)}{1-q t^2}w_1 w_2 
+w_2^2
$$

\appendix
\section{A shuffle product identity}\label{app:EH}
The determinant formula \eqref{eq:H_def} for $H_k(t^{-1})$ is known as the Izergin determinant. This determinant satisfies various summation identities. For our purposes a useful identity is given by a subset formula from \cite{BWZJ}:
\begin{align}
    & \frac{\prod_{i,j=1}^k(x_i-y_j)}{\prod_{1\leq i<j\leq k}(x_i-x_j)(y_j-y_i)}
    \det_{1\leq i,j \leq k} \frac{(1-t)x_i}{(x_i-y_j)(y_j-t x_i)}\nonumber
    \\
    =&\sum_{r=0}^k 
    (-1)^{r} t^{r(r-1)/2}
    \sum_{\substack{S\subseteq [k]\\ |S|=r}}
    \prod_{i\in S}
    \prod_{j\in S^c}
    \frac{x_j-t x_i}{x_j-x_i} 
    \prod_{i\in S}
    \prod_{j=1}^k
    \frac{y_j-x_i}{y_j-t x_i} 
    \label{eq:det_subset-app}
\end{align}
When we set $y_i=q x_i$ the left-hand side above matches with the formula \eqref{eq:H_def} for $H_k(t^{-1})$ up to a factor. This leads to a summation formula for $H_k(t^{-1})$:
\begin{align}\label{eq:H_subset-app}
    H_k(t^{-1}) =   \frac{(1-q t^{-1})^k}{(1-t^{-1})^k}
    \sum_{r=0}^k 
    (-1)^{r} t^{-r(r+1)/2}
    \sum_{\substack{S\subseteq [k]\\ |S|=r}}
    \prod_{i\in S}
    \prod_{j\in S^c}
    \frac{x_i-t^{-1} x_j}{x_i-x_j} 
    \prod_{i\in S}
    \prod_{j=1}^k
    \frac{x_i-q x_j}{x_i-q t^{-1} x_j} 
\end{align}
The sum over subsets $S$ has the structure of the shuffle product \eqref{eq:sh_sub}. To see this we split the product over $j$ in the last factor in \eqref{eq:H_subset-app}
into two products: $j\in S$ and $j\in S^c$. The second product can be combined with the first factor in \eqref{eq:H_subset-app} giving:
\begin{align*}
 &   \sum_{\substack{S\subseteq [k]\\ |S|=r}}
    \prod_{i\in S}
    \prod_{j\in S^c}
    \frac{x_i-t^{-1} x_j}{x_i-x_j} 
    \prod_{i\in S}
    \prod_{j=1}^k
    \frac{x_i-q x_j}{x_i-q t^{-1} x_j} =
    \sum_{\substack{S\subseteq [k]\\ |S|=r}}
    \prod_{i\in S}
    \prod_{j\in S^c}
    \frac{x_i-t^{-1} x_j}{x_i-x_j} 
    \frac{x_i-q x_j}{x_i-q t^{-1} x_j} 
    \prod_{i,j\in S}
    \frac{x_i-q x_j}{x_i-q t^{-1} x_j} \\
    =
&\frac{(1-q)^{r}}{(1-q t^{-1})^{r}} t^{r(r-1)/2}    
    \sum_{\substack{S\subseteq [k]\\ |S|=r}}
    E_r(q;x_S)
    \prod_{i\in S}
    \prod_{j\in S^c}
    \zeta(x_j/x_i)
    =
\frac{(1-q)^{r}}{(1-q t^{-1})^{r}} t^{r(r-1)/2}    E_{k-r}(t q^{-1})*E_{r}(q)
\end{align*}
where in the second line we used \eqref{eq:zeta}, Definition \ref{def:E-el} of $E_r(q)$ and $E_{k-r}(t q^{-1})$ (which is equal to $1$) and the formula for the shuffle product \eqref{eq:sh_sub}.  As a consequence the element $H_k(t^{-1})$ admits the formula:
\begin{align}\label{eq:H_E-app}
    H_k(t^{-1}) =   \frac{(1-q t^{-1})^k}{(1-t^{-1})^k}
    \sum_{r=0}^k 
    (-t)^{-r}
    \frac{(1-q)^{r}}{(1-q t^{-1})^{r}}
    E_{k-r}(t q^{-1})*E_{r}(q)
\end{align}
We can derive analogously the identities for $H_k(q)$ and $H_k(t q^{-1})$ and summarize all three formulas by:
\begin{align}\label{eq:Hq_E-app}
    H_k(q_a) =   
    \sum_{r=0}^k 
    q_c^{k-r}
    \left(\frac{1-q_b}{1-q_b q_c}\right)^{k-r}
    \left(\frac{1-q_c}{1-q_b q_c}\right)^{r}
    E_{k-r}(q_b)*E_{r}(q_c)
\end{align}
where $(a,b,c)$ is a permutation of $(1,2,3)$.

\section{\texorpdfstring{The supersymmetric $F$-matrix}{}}\label{app:F}
In this section we discuss the connection between our $F$-matrix \eqref{eq:F-matrix-super} and the graded $F$-matrix of \cite{YZZ} and prove Lemma \ref{lem:F-prop-super}.

\subsection{\texorpdfstring{The supersymmetric $F$-matrix and gradation}{}}
Let $R^\star$ and $F^\star$ be the $R$ and $F$ matrices of \cite{YZZ} of the algebra $U_t(\widehat{sl}_{n+1|m})$. We rewrite $R^\star$ and $F^\star$ in the {\it multiplicative} convention (set $x=e^{-2u}$ and $t=e^{-2\eta}$ to recover the {\it additive} convention of \cite{YZZ}) and label vectors from $0$ to $n+m$ (as opposed to $1$ to $n+m+1$ as in \cite{YZZ}):
\begin{align}\label{eq:R-matrix-super_graded}
  R^\star(x) &= \sum_{i=0}^{n+m}
\left(\frac{t-x}{1-t x}\right)^{\delta_{i<m}}E^{(ii)}\otimes E^{(ii)}
+
t^{1/2}\frac{1-x}{1-t x}
\sum_{0\leq i\neq j\leq n+m}
E^{(ii)}\otimes E^{(jj)}+\\
&+\frac{1-t}{1-t x}
\sum_{0\leq j\neq i\leq n+m}
(-1)^{\delta_{i<m}\delta_{j<m}}
x^{\delta_{i>j}}
E^{(ij)}\otimes E^{(ji)}\nonumber
\end{align}
and:
\begin{align}\label{eq:F-super_graded}
    F^*_{N}(x_1\dots x_N) = 
    \sum_{\sigma \in \mathcal{S}_N} \sum_{(k_1\dots k_N)\in \mathcal{J}_\sigma} 
    \prod_{\substack{1\leq i< j\leq N\\ k_i=k_j<m}} (1+t)\frac{x_{\sigma(i)}-x_{\sigma(j)}}{x_{\sigma(i)}-t x_{\sigma(j)}}\cdot
    \prod_{i=1}^N E^{(k_i k_i)}_{\sigma(i)} R^*_\sigma
\end{align}
The transformation between $R(x)$ from \eqref{eq:R_sup} and $R^\star(x)$ involves: change of $\mathbb Z_2$ gradation of the vector space, spectral parameter dependent gauge transform, relabeling of the basis vectors\footnote{The fermionic labels in \eqref{eq:R-matrix-super_graded} are associated with the vectors $\ket{0},\ldots , \ket{m-1}$ and in our conventions the fermionic labels are associated with the vectors $\ket{n+1},\ldots , \ket{n+m}$.} and a transformation known as the stochastic twist (see e.g. \cite{KMMO}). In this section we focus on the change of $\mathbb Z_2$ gradation because this transformation affects the rational function appearing in \eqref{eq:F-super_graded} and explains the form of the $F$-matrix in \eqref{eq:F-matrix-super}.

The change of $\mathbb Z_2$ gradation can be expressed using two diagonal matrices:
\begin{align}\label{eq:Dlr}
D_l := \sum_{i,j=0}^{n+m} (-1)^{\delta_{j<i<m}}E^{(i i)}\otimes E^{(j j)},
\qquad
D_r := \sum_{i,j=0}^{n+m} (-1)^{\delta_{j\leq i<m }}E^{(i i)}\otimes E^{(j j)}
\end{align}
We define the $R$-matrix $\tilde R$:
\begin{align}\label{eq:R-tilde}
    \tilde R(x) := D_l R^\star(x) D_r =D_r R^\star(x) D_l
\end{align}
With the present definitions we have:
\begin{align}
    \label{eq:R1}
    &R^\star(1) = P^\star,
    \qquad
   \tilde R(1) = P,\\
   \label{eq:P-Pstar}
   &P = D_l P^\star D_r =D_r P^\star D_l
\end{align}
where $P^\star$ is the graded permutation matrix. The $R^\star$ matrix satisfies the graded Yang--Baxter equation (see \cite{YZZ}) while $\tilde R$  satisfies the standard Yang--Baxter equation. The $R^\star$ and $F^\star$ matrices satisfy:
\begin{align}\label{eq:F-prop-super_graded}
F^*_N(x) = P^*_\sigma F^*_N(x_\sigma) P_{\sigma^{-1}}^*R^*_\sigma
\end{align}
and $\tilde R$ satisfies:
\begin{align}\label{eq:FR_ungraded}
\tilde F_N(x) = P_\sigma \tilde F_N(x_\sigma)P_{\sigma^{-1}} \tilde R_\sigma
\end{align}
where the matrix $\tilde F$ can be chosen to be of the form \eqref{eq:F-matrix-super} (up to a relabeling of the basis vectors):
\begin{align}\label{eq:F-matrix-super-ungraded}
    \tilde F_{N}(x_1\dots x_N) = 
    \sum_{\sigma \in \mathcal{S}_N} \sum_{(k_1\dots k_N)\in \mathcal{J}_\sigma} 
    \prod_{\substack{1\leq i< j\leq N\\ k_i=k_j\\k_i<m}}
    \frac{x_{\sigma(i)}+x_{\sigma(j)}}{x_{\sigma(i)}-t x_{\sigma(j)}}\cdot
    \prod_{i=1}^N E^{(k_i k_i)}_{\sigma(i)} \tilde  R_\sigma
\end{align}
The essential difference between $F^\star$ and $\tilde F$ is that in $\tilde F$ the factors $x_{\sigma(i)}-x_{\sigma(j)}$ are absent. 
The matrices $\tilde F$ and $F^\star$ are related by:
\begin{align}\label{eq:Ftilde-Fstar}
    \tilde F_N(x) =  X_N(x) D_N F^\star_N(x) D_N
\end{align}
where:
\begin{align}
    X_N&=\sum_{\alpha\in\{0\ldots n+m\}^N} 
    \prod_{1\leq i<j\leq N}\left(\frac{1}{1+t}\frac{x_i+x_j}{x_i-x_j} \right)^{\delta_{\alpha_j= \alpha_i<m}}
    \bigotimes_{i=1}^N E^{(\alpha_i \alpha_i)}\\
    D_N&=\sum_{\alpha\in\{0\ldots n+m\}^N} 
    \prod_{1\leq i<j\leq N}
    (-1)^{\delta_{\alpha_j\leq \alpha_i<m}} \bigotimes_{i=1}^N E^{(\alpha_i \alpha_i)}
\end{align}
Checking \eqref{eq:Ftilde-Fstar} amounts to performing elementary algebraic manipulations. By using \eqref{eq:Ftilde-Fstar} and \eqref{eq:F-prop-super_graded} one can prove \eqref{eq:FR_ungraded}:
\begin{align*}
    \tilde F_N(x) =  X_N(x) D_N F^\star_N(x) D_N
    &=  X_N(x) D_N P^\star_\sigma F^\star_N(x_\sigma) P^\star_{\sigma^{-1}} R^\star_\sigma D_N \\
    &=   P_\sigma X_N(x_\sigma) D_N F^\star_N(x_\sigma) P^\star_{\sigma^{-1}} R^\star_\sigma D_N\\
    &=   P_\sigma X_N(x_\sigma) D_N F^\star_N(x_\sigma) D_N P_{\sigma^{-1}} R_\sigma =   P_\sigma   \tilde F_N(x_\sigma) P_{\sigma^{-1}} R_\sigma 
\end{align*}
where in the first line we used \eqref{eq:F-prop-super_graded}, in the second and third lines we used the following equations:
\begin{align}\label{eq:PR-star_esq}
    X_N(x) D_N P^\star_\sigma =P_\sigma X_N(x_\sigma) D_N ,
    \qquad
    P^\star_{\sigma^{-1}} R^\star_\sigma D_N = 
    D_N P_{\sigma^{-1}} R_\sigma
\end{align}
and the final equality is due to \eqref{eq:Ftilde-Fstar}. In the first equation in \eqref{eq:PR-star_esq} the antisymmetry of the factors $x_i-x_j$ in $X_N$ plays an important role.

\subsection{Proof of Lemma \ref{lem:F-prop-super}}\label{app:proof_F-lemma}
The $F$-matrix $F_N$, given in \eqref{eq:F-matrix-super}, contains sectors labelled by compositions of non-negative integers $(l_0,l_1\ldots l_{n+m})$,  such that $N=l_0+\cdots +l_{n+m}$. Moreover, it is also lower-triangular. Due to our labeling of the basis vectors these two aspects imply that the vectors $\ket{\lambda^+}$ and covectors $\bra{\lambda^-}$, with:
\begin{align}\label{eq:lambda_pm_appendix}
\lambda^- :=     
(0^{l_0}1^{l_1}\ldots (n+m)^{l_{n+m}}), 
\qquad
\lambda^+ := 
((n+m)^{l_{n+m}}\ldots 1^{l_1} 0^{l_0})
\end{align}
must be right end left eigenvectors of $F_N$ respectively. Therefore in order to prove \eqref{eq:F-diag-super} we need to compute the matrix elements:
\begin{align}
\bra{\lambda^\pm}F_N(x)\ket{\lambda^\pm}
\end{align}
From \eqref{eq:F-matrix-super} it follows that we can write all matrix elements of the $F$-matrix in terms of matrix elements of $R_\sigma$:
\begin{align}\label{eq:F-elements}
\bra{\lambda}F_N(x)\ket{\mu} =  \prod_{\substack{1\leq i<j\leq N\\ \lambda_{\sigma(i)}=\lambda_{\sigma(j)}>n}} \frac{x_{\sigma(i)}+x_{\sigma(j)}}{x_{\sigma(i)}-t x_{\sigma(j)}}
\cdot 
\bra{\lambda}R_\sigma(x) \ket{\mu}
\end{align}
where $\sigma$ is any permutation such that: 
\begin{align}\label{eq:lambda_ord}
\lambda_{\sigma(1)}\leq \cdots \leq \lambda_{\sigma(N)}
\end{align}
By choosing $\lambda=\mu=\lambda^-$ in \eqref{eq:F-elements} and $\sigma$ to be the identity permutation we get:
\begin{align}
    \label{eq:ev-lambda_minus}
    \bra{\lambda^-}F_N(x)\ket{\lambda^-}=
    \prod_{\substack{1\leq i<j\leq N\\ \lambda_i^{-}=\lambda_j^{-}>n}} \frac{x_i+x_j}{x_i-t x_j}
\end{align}
Next we set $\lambda =\mu=\lambda^+$ in \eqref{eq:F-elements}. The permutation $\sigma$ must be chosen such that:
$$
\sigma(\lambda^+)= \lambda^-
$$
We select $\sigma$ in a way that does not involve permutations of labels of $\lambda^+$ which are the same. This means that $\sigma$ needs to consist of factors of the form $\sigma_{(j,k)}$ which perform local cyclic permutations:
$$
\sigma_{(j,k)}(\ldots j^{l_j} k^{l_k}\ldots) = 
(\ldots k^{l_k} j^{l_j}\ldots),
\qquad j>k
$$
leaving the labels in $\ldots$ untouched. We use $R_\sigma =P_\sigma \check R_\sigma$ (see \eqref{eq:check_R-R_sigma}) and compute:
\begin{align}
    \label{F-lambdaplusX}
\bra{\lambda^+}F_N(x)\ket{\lambda^+}=
\prod_{\substack{1\leq i<j\leq N\\ \lambda_{\sigma(i)}^{+}=\lambda_{\sigma(j)}^{+}>n}} \frac{x_{\sigma(i)}+x_{\sigma(j)}}{x_{\sigma(i)}-t x_{\sigma(j)}}
\cdot 
\bra{\lambda^+}P_\sigma \check R_\sigma \ket{\lambda^+} = 
\prod_{\substack{1\leq i<j\leq N\\ \lambda_{i}^{+}=\lambda_{j}^{+}>n}} \frac{x_{i}+x_{j}}{x_{i}-t x_{j}}
\cdot 
\bra{\lambda^-}\check R_\sigma \ket{\lambda^+}
\end{align}
For a demonstration of how $\bra{\lambda^-}\check R_\sigma \ket{\lambda^+}$ is computed we consider the case $n+m=2$. Using the graphical notation for $\check R_i$ from \eqref{eq:R_i-graph} we have:
\begin{align}\label{eq:min_R_pl}
\bra{\lambda^-}\check R_\sigma \ket{\lambda^+}    =
\begin{tikzpicture}[scale=0.8, baseline=0.5cm]
\node[above] at (0,6-2.5) {$\scriptstyle 2$};
\node at (0.5,6-2.5) {$\ldots$};
\node[above] at (1,6-2.5) {$\scriptstyle 2$};
\node[above] at (2,6-2.5) {$\scriptstyle 1$};
\node at (2.5,6-2.5) {$\ldots$};
\node[above] at (3,6-2.5) {$\scriptstyle 1$};
\node[above] at (4,6-2.5) {$\scriptstyle 0$};
\node at (4.5,6-2.5) {$\ldots$};
\node[above] at (5,6-2.5) {$\scriptstyle 0$};
\draw[arrow=0.05] (0,6-2.5) -- (2,4-2.5) -- (4,2-2.5) -- (4,0-2.5);
\draw[arrow=0.05] (1,6-2.5) -- (3,4-2.5) -- (5,2-2.5) -- (5,0-2.5);
\draw[arrow=0.05] (2,6-2.5) -- (0,4-2.5) -- (0,2-2.5) -- (2,0-2.5);
\draw[arrow=0.05] (3,6-2.5) -- (1,4-2.5) -- (1,2-2.5) -- (3,0-2.5);
\draw[arrow=0.05] (4,6-2.5) -- (4,4-2.5) -- (2,2-2.5) -- (0,0-2.5);
\draw[arrow=0.05] (5,6-2.5) -- (5,4-2.5) -- (3,2-2.5) -- (1,0-2.5);
\node[below] at (0,0-2.5) {$\scriptstyle 0$};
\node at (0.5,0-2.5) {$\ldots$};
\node[below] at (1,0-2.5) {$\scriptstyle 0$};
\node[below] at (2,0-2.5) {$\scriptstyle 1$};
\node at (2.5,0-2.5) {$\ldots$};
\node[below] at (3,0-2.5) {$\scriptstyle 1$};
\node[below] at (4,0-2.5) {$\scriptstyle 2$};
\node at (4.5,0-2.5) {$\ldots$};
\node[below] at (5,0-2.5) {$\scriptstyle 2$};
\end{tikzpicture}
\end{align}
The only possible local configurations in the above diagram correspond to the fourth vertex in the first row and third vertex in the second row in \eqref{tikz:coloredvertices} such that the labels at the top follow the lines. By computing the Boltzmann weights we arrive at:
$$
\bra{\lambda^-}\check R_\sigma \ket{\lambda^+}=
\prod_{\substack{1\leq i<j \leq N\\ \lambda^+_i\neq \lambda^+_j}}
t^{\delta_{\lambda^+_j>0}}\frac{x_i-x_j}{x_i-t x_j}
$$
Combining this with \eqref{F-lambdaplusX} gives us the eigenvalue:
\begin{align}
    \label{F-lambdaplus}
\bra{\lambda^+}F_N(x)\ket{\lambda^+}=
\prod_{\substack{1\leq i<j\leq N\\ \lambda_i^{+}=\lambda_j^{+}>n}} \frac{x_i+x_j}{x_i-t x_j}
\cdot 
\prod_{\substack{1\leq i<j \leq N\\ \lambda^+_i\neq \lambda^+_j}}
t^{\delta_{\lambda^+_j>0}}\frac{x_i-x_j}{x_i-t x_j}
\end{align}
The expressions \eqref{eq:ev-lambda_minus} and \eqref{F-lambdaplus} are precisely the eigenvalues appearing in \eqref{eq:F-diag-super} and the Lemma follows.

\section*{Acknowledgments}
We would like to thank Jan de Gier, Andrei Negu\c{t}, Michael Wheeler and Paul Zinn-Justin for many interesting discussions. A. G. gratefully acknowledges financial support from the Australian Research Council.

\bibliographystyle{plain}
\bibliography{literature}{}
\end{document}